\theoremstyle{plain}
\newtheorem{theorem}{Theorem}[section]
\newtheorem{proposition}[theorem]{Proposition}
\theoremstyle{definition}
\newtheorem{definition}[theorem]{Definition}
\theoremstyle{remark}
\newtheoremstyle{boldremark}  
  {\topsep}                   
  {\topsep}                   
  {\normalfont}               
  {}                          
  {\bfseries}                 
  {.}                         
  { }                         
  {}                          
\theoremstyle{boldremark}
\newtheorem{remark}{Remark}
\definecolor{lightblue}{rgb}{0.678, 0.847, 0.902} 
\definecolor{lightyellow}{rgb}{1.0, 1.0, 0.8}
\DeclareMathOperator*{\argmin}{argmin}
\newtcolorbox[auto counter, number within=section]{mybox}[2][]{colback=#2!5!white,colframe=#2!75!black,
fonttitle=\bfseries, title=Box~\thetcbcounter: #1,#2,#2}
\newcommand{\smallcircled}[2][red]{%
  \tikz[baseline=(char.base)]{
    \node[shape=circle,fill=#1,draw,inner sep=1.5pt] (char) {\scriptsize \textcolor{black}{#2}};
  }%
}
\icmltitlerunning{SITCOM: Step-wise Triple-Consistent Diffusion Sampling for Inverse Problems}
\begin{document}

\twocolumn[
\icmltitle{SITCOM: Step-wise Triple-Consistent Diffusion Sampling\\ For Inverse Problems}



\icmlsetsymbol{equal}{*}

\begin{icmlauthorlist}
\icmlauthor{Ismail R. Alkhouri}{equal,1,2}
\icmlauthor{Shijun Liang}{equal,3}\\
\icmlauthor{Cheng-Han Huang}{2}
\icmlauthor{Jimmy Dai}{1}
\icmlauthor{Qing Qu}{1}
\icmlauthor{Saiprasad Ravishankar}{2,3}
\icmlauthor{Rongrong Wang}{2,4}
\end{icmlauthorlist}


\icmlaffiliation{1}{Electrical Engineering \& Computer Science at University of Michigan - Ann Arbor}
\icmlaffiliation{2}{Computational Mathematics, Science, \& Engineering at Michigan State University (MSU)}
\icmlaffiliation{3}{Biomedical Engineering at MSU}
\icmlaffiliation{4}{Mathematical Sciences at MSU}


\icmlcorrespondingauthor{Ismail R. Alkhouri}{ismailal@umich.edu; alkhour3@msu.edu}

\icmlkeywords{Machine Learning, ICML}

\vskip 0.3in
]



\printAffiliationsAndNotice{\icmlEqualContribution} 

\begin{abstract}
Diffusion models (DMs) are a class of generative models that allow sampling from a distribution learned over a training set. When applied to solving inverse problems, the reverse sampling steps are modified to approximately sample from a measurement-conditioned distribution. However, these modifications may be unsuitable for certain settings (e.g., presence of measurement noise) and non-linear tasks, as they often struggle to correct errors from earlier steps and generally require a large number of optimization and/or sampling steps. To address these challenges, we state three conditions for achieving measurement-consistent diffusion trajectories. Building on these conditions, we propose a new optimization-based sampling method that not only enforces standard data manifold measurement consistency and forward diffusion consistency, as seen in previous studies, but also incorporates our proposed step-wise and network-regularized backward diffusion consistency that maintains a diffusion trajectory by optimizing over the input of the pre-trained model at every sampling step. By enforcing these conditions (implicitly or explicitly), our sampler requires significantly fewer reverse steps. Therefore, we refer to our method as \textbf{S}tep-w\textbf{i}se \textbf{T}riple-\textbf{Co}nsistent Sa\textbf{m}pling (\textbf{SITCOM}). Compared to SOTA baselines, our experiments across several linear and non-linear tasks (with natural and medical images) demonstrate that SITCOM achieves competitive or superior results in terms of standard similarity metrics and run-time.
\end{abstract}

\section{Introduction}
\label{sec: Intro}
Inverse problems (IPs) arise in a wide range of science and engineering applications, including computer vision \cite{li2024decoupled}, signal processing \citep{byrne2003unified}, medical imaging \citep{alkhouri2024diffusion}, remote sensing \citep{levis2022gravitationally}, and geophysics \citep{bnilam2020parameter}. In these applications, the primary goal is to recover an unknown image or signal $\mathbf{x}\in \mathbb{R}^n$ from its measurements or a degraded image $\mathbf{y}\in \mathbb{R}^m$, which are often corrupted by noise. Mathematically, the unknown signal and the measurements are related as 
\begin{equation}\label{eqn: forward model}\nonumber
    \mathbf{y} = \mathcal{A}(\mathbf{x}) + \mathbf{n}\:,
\end{equation}
where $\mathcal{A}(\cdot) : \mathbb{R}^n\rightarrow \mathbb{R}^m$ (with $m\leq n$) represents the linear or non-linear forward operator that models the measurement process, and $\mathbf{n}\in \mathbb{R}^m$ denotes the noise in the measurement domain, e.g., assumed sampled from a Gaussian distribution $\mathcal{N}(\mathbf{0},\sigma^2_{\mathbf{y}}\mathbf{I})$, where $\sigma_{\mathbf{y}}>0$ denotes the noise level. Exactly solving these inverse problems is challenging due to their ill-posedness in many settings, requiring advanced techniques to achieve accurate solutions.

Deep learning techniques have recently been utilized to form priors to aid in solving these problems \citep{ravishankar2019image, lempitsky2018deep}. One framework that has shown significant potential is the use of generative models, particularly diffusion models (DMs)~\citep{ho2020denoising}. Given a training dataset, DMs are trained to learn the underlying distribution $p(\mathbf{x})$. 

During inference, DMs enable sampling from this learned distribution through an iterative procedure~\citep{song2021maximum}. When employed to solving inverse problems, DM-based IP solvers often modify the reverse sampling steps to allow sampling from the measurements-conditioned distribution $p(\mathbf{x}| \mathbf{y})$ \citep{chungdiffusion, chung2022come,cardosomonte}. These modifications typically rely on approximations that may not be suitable for all tasks and settings, and in addition to generally requiring many sampling iterations, often suffer from errors accumulated during early diffusion sampling steps \citep{zhang2024improving}. 

In most DM-based IP solvers, these approximations are designed to enforce standard measurement consistency on the estimated image (or posterior mean) at every reverse sampling iteration, as in \citep{chungdiffusion}, and may also include resampling using the forward diffusion process (which we refer to as forward diffusion consistency), such as in \citep{lugmayr2022repaint,songsolving}. 

A key bottleneck in DMs is their computational cost, as they are slower than other generative models due to the large number of sampling steps. Although various methods have been proposed to reduce sampling frequency (e.g., \citep{song2023consistency}), these improvements have yet to be fully realized for DMs applied to IPs. Most existing methods still require dense sampling, which continues to pose speed challenges. Based on the aforementioned discussion, in this paper, we make the following contributions. 

\begin{itemize}[leftmargin=*]
\setlength\itemsep{-0.5em}
    \item \textbf{Formulating three conditions for DM-based IP solvers}: By identifying key issues in accelerating DMs for IPs, we present a systematic formulation of three conditions where the goal is to (\textit{i}) improve the accuracy of each intermediate reconstruction and (\textit{ii}) maintain a diffusion trajectory.
    \item \textbf{Introducing the step-wise Backward consistency}: Based on the pre-trained network regularization, we propose the step-wise backward diffusion consistency by leveraging the implicit bias of the pre-trained network at each iteration. 
    \item \textbf{Presenting a new optimization-based sampler}: We present an optimization-based sampler that allows for arbitrary sampling step sizes. We refer to our method as \textbf{S}tep-w\textbf{i}se \textbf{T}riple-\textbf{Co}nsistent Sa\textbf{m}pling (SITCOM). The formulation, initialization, and regularization of the optimization problem along with the preceding steps ensure that all consistencies/conditions are enforced with minimal deviation at every sampling iteration. 
    \item \textbf{Extensive Evaluation}: We evaluate SITCOM on one medical image reconstruction task (MRI) and 8 image restoration tasks (5 linear and 3 non-linear). Compared to leading baselines, our approach consistently achieves either state-of-the-art or highly competitive quantitative results, while also reducing the number of sampling steps and, consequently, the computational time. 
\end{itemize}
%
%
%
\section{Background}
\label{sec: background}

In this section, we discuss preliminaries on Diffusion Models (DMs) and their usage in solving IPs.

Pre-trained DMs generate images by applying a pre-defined iterative denoising process \citep{ho2020denoising}. In the Variance-Preserving Stochastic Differentiable Equations (SDEs) setting \citep{song2021maximum, songdenoising}, DMs are formulated using the forward process and the reverse process 
\begin{subequations}\label{eqn: forward and reverse SDE}
\begin{gather}
d\mathbf{x}_t = -\frac{\beta_t}{2} \mathbf{x}_t dt + \sqrt{\beta_t} d\mathbf{w}\:,\\
d\mathbf{x}_t = -\beta_t \Big[\frac{1}{2} \mathbf{x}_t + \nabla_{\mathbf{x}_t} \log p_t(\mathbf{x}_t) \Big] dt + \sqrt{\beta_t} d\Bar{\mathbf{w}}\:,
\end{gather}
\end{subequations}
%
%
%
where $\beta:\{0,\dots,T\} \rightarrow (0,1)$ is a pre-defined function that controls the amount of additive perturbations at time $t$, $\mathbf{w}$ (resp. $\Bar{\mathbf{w}}$) is the forward (resp. reverse) Weiner process \citep{anderson1982reverse}, $p_t(\mathbf{x}_t)$ is the distribution of $\mathbf{x}_t$ at time $t$, and $\nabla_{\mathbf{x}_t} \log p_t(\mathbf{x}_t)$ is the score function that is replaced by a neural network (typically a time-encoded U-Net \citep{Unet}) $\bm s : \mathbb{R}^n \times \{0,\dots,T\} \rightarrow \mathbb{R}^n$, parameterized by $\theta$. In practice, given the score function $\bm{s}_\theta$, the SDEs can be discretized as in \eqref{eqn: forward and reverse SDE discretized} where $\bm{\eta}_t,\bm{\eta}_{t-1} \sim \mathcal{N}(\mathbf{0},\mathbf{I})$.
\begin{subequations}\label{eqn: forward and reverse SDE discretized}
\begin{gather}
\mathbf{x}_t = \sqrt{1-\beta_t} \mathbf{x}_{t-1} + \sqrt{\beta_t} \bm{\eta}_{t-1} \:,\\
\mathbf{x}_{t-1} = \frac{1}{\sqrt{1-\beta_t}}\Big[ \mathbf{x}_t + \beta_t \bm{s}_\theta(\mathbf{x}_t, t) \Big]+ \sqrt{\beta_t} \bm{\eta}_{t}\:.
\end{gather}
\end{subequations}
When employed to solve inverse problems, the score function in \eqref{eqn: forward and reverse SDE} is replaced by a conditional score function which, by Bayes' rule, is 
\begin{equation}
\notag
    \nabla_{\mathbf{x}_t}\log p_t(\mathbf{x}_t | \mathbf{y}) = \nabla_{\mathbf{x}_t} \log p_t(\mathbf{x}_t) + \nabla_{\mathbf{x}_t}\log p_t(\mathbf{y}| \mathbf{x}_t )\:.
\end{equation}
Solving the SDEs in \eqref{eqn: forward and reverse SDE} with the conditional score is referred to as \textit{posterior sampling} \citep{chungdiffusion}. As there doesn't exist a closed-form expression for the term $\nabla_{\mathbf{x}_t}\log p_t(\mathbf{y}| \mathbf{x}_t )$ (which is referred to as the measurements matching term in \citep{diffusion_survey}), previous works have explored different approaches, which we will briefly discuss below. We refer the reader to the recent survey in \citep{diffusion_survey} for an overview on DM-based methods for solving IPs.  

A well-known method is Diffusion Posterior Sampling (DPS) \citep{chungdiffusion}, which uses the approximation $p(\mathbf{y}|\mathbf{x}_t) \approx p(\mathbf{y}|\hat{\mathbf{x}}_0)$ where $\hat{\mathbf{x}}_0(\mathbf{x}_t)$ (or simply $\hat{\mathbf{x}}_0$) is the estimated image at time $t$ as a function of the pre-trained model and $\mathbf{x}_t$ (Tweedie's formula \citep{vincent2011connection}), given as 
\begin{align}\label{eqn: Tweedies}
     \!\!\!\!\!\!\!\! \hat{\mathbf{x}}_0(\mathbf{x}_t) = \frac{1}{\sqrt{\Bar{\alpha}_t} }\Big[ \mathbf{x}_t - \sqrt{1-\Bar{\alpha}_t} \bm{\epsilon}_\theta (\mathbf{x}_t, t) \Big] \\\nonumber =:f(\mathbf{x}_t;t,\bm{\epsilon}_{\theta})  \:,~~~~~~~~~~~~~~~~~~~~~~~~~~~~ 
\end{align}
where 
\begin{align}\label{eqn: eps and score}\nonumber
    \bm{\epsilon}_\theta (\mathbf{x}_t, t) = -\sqrt{1-\Bar{\alpha}_t} \bm{s}_\theta (\mathbf{x}_t, t)\:,
\end{align}
output the noise in $\mathbf{x}_t$ \citep{luo2022understanding}, and $\Bar{\alpha}_t = \prod^t_{j=1} \alpha_j$ with $\alpha_t = 1-\beta_t$. We call the function $f$, defined in \eqref{eqn: Tweedies}, as \textbf{``Tweedie-network denoiser''} (also termed as `posterior mean predictor' in \citep{chen2024exploring}). 


Tweedie's formula, like in our method, is also adopted in other DM-based IP solvers such as \citep{rout2023solving, NEURIPS2023_165b0e60, wang2022zero}. The drawback of most of these methods is that they typically require a large number of sampling steps. 

The work in ReSample \citep{songsolving}, solves an optimization problem on the estimated posterior mean in the latent space to enforce a step-wise measurement consistency, requiring many sampling and optimization steps.

The work in \citep{mardani2023variational} introduced RED-Diff, a variational Bayesian method that fits a Gaussian distribution to the posterior distribution of the clean image conditional on the measurements. RED-Diff solves also an optimization problem using stochastic gradient descent (SGD) to minimize a data-fitting term while maximizing the likelihood of the reconstructed image under the denoising diffusion prior (as a regularizer). However, the SGD process requires multiple iterations, each involving evaluations of the pre-trained DM on a different noisy image at some randomly selected time. While RED-diff reduces the run-time, their qualitative results are not highly competitive on some tasks in addition to requiring an external pre-trained denoiser.

Recently, Decoupling Consistency with Diffusion Purification (DCDP) \citep{li2024decoupled} proposed separating diffusion sampling steps from measurement consistency by using DMs as diffusion purifiers \citep{nie2022diffusion, alkhouri2024diffusion}, with the goal of reducing the run-time. However, for every task, DCDP requires tuning the number of forward diffusion steps for purification for each sampling step. Shortly after, Decoupled Annealing Posterior Sampling (DAPS) \citep{zhang2024improving} introduced another decoupled approach, incorporating gradient descent noise annealing via Langevin dynamics. DAPS, similar to DPS, also requires a large number of sampling and optimization steps. 


\section{Step-wise Triple-Consistent Sampling}\label{sec: SITCOM}

\subsection{Motivation: Addressing the Challenges in Applying DMs to IPs}

Most inverse problems are ill-conditioned and undersampled. DMs, when trained on a dataset that closely resembles the target image, can provide critical information to alleviate ill-conditioning and improve recovery. Despite various previous efforts, a key challenge remains: How to \textit{efficiently} integrate DMs into the framework of inverse problems? We will now elaborate on this challenge in detail.

The standard reverse sampling procedure in DMs consists of applying the backward discrete steps in \eqref{eqn: forward and reverse SDE discretized} for $t\in\{T, T-1,\dots, 1\}$, forming the standard diffusion trajectory for which $\mathbf{x}_0$ is the generated image\footnote{\textcolor{black}{Diffusion trajectory refers to the path that leads to an in-distribution image, where the distribution is the one learned by the DM from the training set.}}. To incorporate the measurements $\mathbf{y}$ into these steps, a common approach adopted in previous works that demonstrate superior performance (e.g., \citep{songsolving, zhang2024improving, li2024decoupled}) is to encode $\hat{\mathbf{x}}_0$, computed via \eqref{eqn: Tweedies}, in the following optimization problem:  
\begin{equation}\label{eqn: x hat prime 0}
\hat{\mathbf{x}}'_0 (\mathbf{x}_t) = \arg \min_{\mathbf{x}} \|\mathcal{A}(\mathbf{x})-\mathbf{y}\|_2^2 + \lambda \|\mathbf{x}- \hat{\mathbf{x}}_0(\mathbf{x}_t)\|_2^2 \:,    
\end{equation}
where $\lambda \in \mathbb{R}_+$ is a regularization parameter. The $\hat{\mathbf{x}}'_0(\mathbf{x}_t)$ obtained from \eqref{eqn: x hat prime 0} is close to $\hat{\mathbf{x}}_0(\mathbf{x}_t)$ while also remaining consistent with the measurements. When using $\hat{\mathbf{x}}'_0(\mathbf{x}_t)$ to sample $\mathbf{x}_{t-1}$, the second formula in \eqref{eqn: forward and reverse SDE discretized} can be rewritten as in \eqref{eq:orig_sampling}, where the derivation is provided in Appendix~\ref{sec: appen der}. 
\begin{align}\label{eq:orig_sampling}
\mathbf{x}_{t-1} &= \frac{\sqrt{\alpha_t}(1-\Bar{\alpha}_{t-1})}{1-\Bar{\alpha}_t} \mathbf{x}_t + 
\frac{\sqrt{\Bar{\alpha}_{t-1}}\beta_t}{1-\bar\alpha_t} \hat{\mathbf{x}}_0(\mathbf{x}_t) 
\\ \nonumber
&+ \sqrt{\beta_t}\bm{\eta}_t\:.
\end{align}
By substituting $\hat{\mathbf{x}}_0(\mathbf{x}_t)$ in the second term of \eqref{eq:orig_sampling} with the measurement-consistent $\hat{\mathbf{x}}'_0(\mathbf{x}_t)$, the modified sampling formula becomes: 
\begin{align}\label{eq:modified_sampling}
\mathbf{x}_{t-1} &= \frac{\sqrt{\alpha_t}(1-\Bar{\alpha}_{t-1})}{1-\Bar{\alpha}_t} \mathbf{x}_t +  
\frac{\sqrt{\Bar{\alpha}_{t-1}}\beta_t}{1-\bar\alpha_t} \hat{\mathbf{x}}'_0(\mathbf{x}_t)
\\ \nonumber &+ \sqrt{\beta_t}\bm{\eta}_t\:.
\end{align}
While this approach effectively ensures data consistency at each step, it inevitably causes $\hat{\mathbf{x}}'_0$ to deviate from the diffusion trajectory, leading to two \textbf{major issues}: 

\textbf{\textbf{(I1)}}: The image $\hat{\mathbf{x}}_0(\mathbf{x}_t)$, initially constructed through Tweedie's formula, usually appears quite natural (e.g., columns 3 to 5 of Figure \ref{fig: progress t prime out100 vs inpt20}); however, the modified version, $\hat{\mathbf{x}}'_0(\mathbf{x}_t)$, is likely to exhibit severe artifacts (e.g., columns 6 to 8 of Figure \ref{fig: progress t prime out100 vs inpt20}). 

\textbf{(I2)}: Since the DM network, $\bm{\epsilon}_{\theta}$, is trained via minimizing the objective function (denoising score matching \citep{vincent2011connection})
\begin{align}\nonumber
  \mathbb{E}_{\mathbf{x}_0\sim p(\mathbf{x}_0),\bm{\epsilon}\sim \mathcal{N}(0,\mathbf{I}) } \big\|\bm{\epsilon}- \bm{\epsilon}_{\theta}(\sqrt{\Bar{\alpha}_t} \mathbf{x}_0+\sqrt{1-\Bar{\alpha}_t}\bm{\epsilon}, t)\big\|_2^2\:, 
\end{align}
on a finite dataset, it performs best on noisy images lying in the high-density regions of the \emph{training distribution} $\mathcal{N}(\mathbf{x}_t;\sqrt{\Bar{\alpha}_t} \mathbf{x}_0,  (1-\Bar{\alpha}_t) \mathbf{I})\:.$
%
We define an algorithm as \textbf{forward-consistent} if it likely applies $\bm{\epsilon}_{\theta}$ only to in-distribution inputs (i.e., those from the same distribution used for training). For example, if the forward diffusion used to train $\bm{\epsilon}_{\theta}$ adds Gaussian noise, the in-distribution input to $\bm{\epsilon}_{\theta}$ should ideally be sampled from a Gaussian with specific parameters.  If Poisson noise is used in the forward process, inputs drawn from suitable Poisson distributions are more likely to fall within the well-trained region of the network. In summary, forward consistency requires that inputs to  $\bm{\epsilon}_{\theta}$ during sampling align  with the forward process. While the $\mathbf{x}_{t-1}$ generated from \eqref{eq:orig_sampling} is forward-consistent by design, the one generated from the modified formula \eqref{eq:modified_sampling} is not. Therefore, in the latter case, the DM network, $\bm{\epsilon}_{\theta}$, may be applied to many out-of-distribution inputs, leading to degraded performance. 

We pause to verify our claimed Issue \textbf{(I1)} through a box-inpainting experiment. Columns 6 to 8 of Figure~\ref{fig: progress t prime out100 vs inpt20} show $\hat{\mathbf{x}}'_0(\mathbf{x}_t)$ at various $t'$. The results clearly demonstrate successful enforcement of data consistency, as the region outside the box aligns with the original image. However, this enforcement compromises the natural appearance of the image, introducing significant artifacts in the reconstructed area inside the box. Details about the setting of the results in Figure~\ref{fig: progress t prime out100 vs inpt20} are given in Appendix~\ref{sec: append impact of backward consis interm}. 

Issue \textbf{(I2)} was previously observed in \citep{lugmayr2022repaint}, which proposed a remedy known as `\textit{resampling}'. In this approach, the sampling formula in \eqref{eq:modified_sampling} is replaced by
\begin{equation}\label{eq:resample}
\mathbf{x}_{t-1} = \sqrt{\Bar{\alpha}_{t-1}} \hat{\mathbf{x}}_0 + \sqrt{1-\bar\alpha_{t-1}}\bm{\eta}_t\:.
\end{equation}
Provided $\hat{\mathbf{x}}_0$ is close to the ground truth $\mathbf{x}_0$, $\mathbf{x}_{t-1}$ generated this way will stay  in-distribution  with high probability. For a more detailed explanation of the rationale behind this remedy, we refer the reader to \citep{lugmayr2022repaint}. This method has since been adopted by subsequent works, such as \citep{songsolving, zhang2024improving}, and we will also employ it to address \textbf{(I2)}.
\begin{figure*}[t]
\centering
\includegraphics[width=17cm]{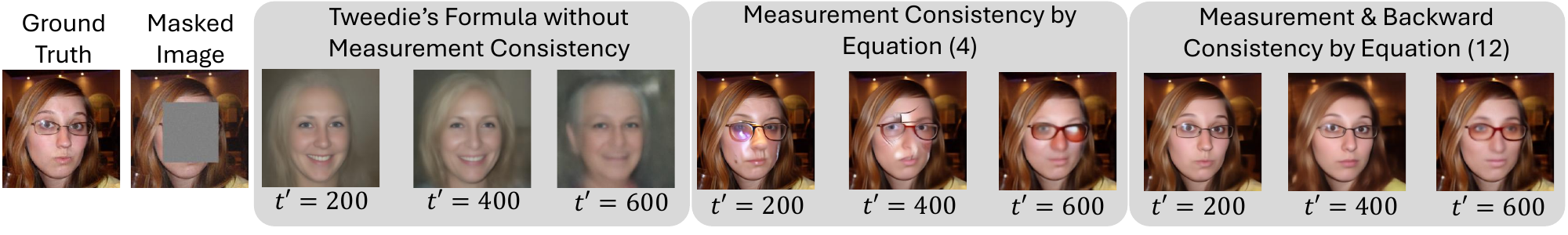}
\vspace{-0.44cm}
\caption{\small{Effects of enforcing backward-consistency in box-inpainting: Results of using Tweedie's formula without measurement consistency (columns 3 to 5), enforcing measurement-consistency via \eqref{eqn: x hat prime 0}  (columns 6 to 8), and enforcing both measurement-consistency and backward-consistency via \eqref{eq:main_opt} (columns 9 to 11) at different time steps $t'$. Experimental details are given in Appendix~\ref{sec: append impact of backward consis interm}.}}
\label{fig: progress t prime out100 vs inpt20}
\vspace{-0.3cm}
\end{figure*}
%

\subsection{Network-Regularized Backward Diffusion Consistency}
Previous studies, such as \citep{songsolving, zhang2024improving}, mitigate issue \textbf{(I1)} by using a large number of sampling steps, which inevitably increases the computational burden. In contrast, this paper proposes employing a \emph{network regularization} to resolve issue \textbf{(I1)}. This approach not only accelerates convergence but also enhances reconstruction quality. Let’s first clarify the underlying intuition.

It is widely observed that the U-Net architecture or trained transformers exhibit an effective image bias \citep{ulyanov2018deep,liang2024analysis,hatamizadeh2023diffit}. From columns 3 to 5 of Figure~\ref{fig: progress t prime out100 vs inpt20}, we observe that without enforcing data consistency, the reconstructed $\hat{\mathbf{x}}_0$, derived directly from Tweedie-network denoiser $f(\mathbf{x}_t;t,\bm{\epsilon}_{\theta})$ for each time $t$,  exhibits natural textures. This indicates that the reconstruction using the combination of Tweedie's formula and the DM network has a natural regularizing effect on the image. 

By definition, the output of $f(\mathbf{x}_t;t,\bm{\epsilon}_{\theta})$ in \eqref{eqn: Tweedies} represents the \emph{denoised} version of $\mathbf{x}_t$ at time $t$ using the Tweedie's formula and the DM denoiser $\bm{\epsilon}_\theta$. Due to the implicit bias of $\bm{\epsilon}_\theta$, this denoised image tends to align with the clean image manifold, even if $\mathbf{x}_t$ does not correspond to a training image, as shown in columns 3 to 5 of Figure~\ref{fig: progress t prime out100 vs inpt20}. 
We refer to this regularization effect of $f(\mathbf{x}_t;t,\bm{\epsilon}_{\theta})$, which arises from network bias, as ``\textbf{network regularization}''. 

 By employing network regularization, we can address \textbf{(I1)} by ensuring that the data-consistent $\hat{\mathbf{x}}'_0$ is also network-consistent. We refer the latter condition as \textbf{Backward Consistency} and define it formally as follows.
{
\begin{definition}[Backward Consistency]\label{def: backward consis}
    We say a reconstruction $\hat{\mathbf{x}}'_0$ is backward-consistent with posterior mean predictor $f(\:\cdot\:;t,\bm{\epsilon}_{\theta})$  at time $t$ if it can be expressed as  $\hat{\mathbf{x}}'_0 = f(\mathbf{v}_t;t,\bm{\epsilon}_{\theta})$ with some $\mathbf{v}_t$. In other words, backward consistency requires $\hat{\mathbf{x}}'_0$ to be an output of $f$ at time $t$. 
\end{definition}
%

The subset of images that are in the range of function $f$ (i.e., backward-consistent) is denoted by $\mathcal{C}_t$ and 
is defined as 
\begin{equation}\label{eqn: set C}
 \mathcal{C}_t:=\{f(\mathbf{v}_t;t,\bm{\epsilon}_{\theta}): \mathbf{v}_t \in \mathbb{R}^{n} \}\:.   
\end{equation}
%
Enforcing $\hat{\mathbf{x}}_0'$ to be both measurement- and backward-consistent involves solving the following optimization problem:
\begin{align}\label{eqn: constrained wout reg}
     \hat{\mathbf{x}}'_0 , \hat{\mathbf{v}}_t :  = \argmin_{\mathbf{v}'_t,\mathbf{x}'_0}\big\{\|\mathcal{A}\big( {\mathbf{x}}'_0\big) - \mathbf{y}\|^2_2 ~~~~~~~~~\\~~~\text{subject to}~~~
        {\mathbf{x}}'_0 =f(\mathbf{v}'_t;t,\bm{\epsilon}_\theta)\big\}\notag \:.    
    \end{align}
However, \eqref{eqn: constrained wout reg} may violate forward consistency, as $\hat{\mathbf{v}}_t$ could possibly be far from $\mathbf{x}_t$. Therefore, we propose adding a regularization term, for which \eqref{eqn: constrained wout reg} becomes
\begin{align}\label{eqn: constrained}
     \hat{\mathbf{x}}'_0 , \hat{\mathbf{v}}_t :  = \argmin_{\mathbf{v}'_t,\mathbf{x}'_0}\big\{\|\mathcal{A}\big( {\mathbf{x}}'_0\big) - \mathbf{y}\|^2_2 + \lambda\|\mathbf{x}_t - \mathbf{v}'_t \|_2^2\\~~~\text{subject to}~~~
        {\mathbf{x}}'_0 =f(\mathbf{v}'_t;t,\bm{\epsilon}_\theta)\big\}\notag\:.        
    \end{align}
During the reverse sampling process, at each time $t$, with the given  $\mathbf{x}_t$, we seek a $\mathbf{v}_t'$ in the nearby region (i.e., $\|\mathbf{x}_t - \mathbf{v}'_t \|_2^2$ is small), such that $\mathbf{v}_t'$ can be denoised by $f$ to produce a clean image $\mathbf{x}'_0$ (i.e.,  ${\mathbf{x}}'_0 =f(\mathbf{v}'_t;t,\bm{\epsilon}_\theta)$), which is also consistent with the measurements $\mathbf{y}$ (i.e., $\|\mathcal{A}\big( {\mathbf{x}}'_0\big) - \mathbf{y}\|^2_2$ is small). We need to identify such a $\mathbf{v}_t'$ because $\mathbf{x}_t$ itself cannot be directly denoised by $f$ to yield an image  consistent with the measurements. By substituting the constraint into the objective function, the optimization problem in \eqref{eqn: constrained} is reduced to
\begin{align}\label{eq:main_opt}
      \hat{\mathbf{v}}_t :  = \argmin_{\mathbf{v}'_t}\big\{\|\mathcal{A}\big( f(\mathbf{v}'_t;t,\bm{\epsilon}_\theta)\big) - \mathbf{y}\|^2_2 ~~+ \\ \nonumber \lambda\|\mathbf{x}_t - \mathbf{v}'_t \|_2^2~
        \big\}\:,   
    \end{align}
with $\hat{\mathbf{x}}'_0 = f(\hat{\mathbf{v}}_t;t,\bm{\epsilon}_\theta)$. The benefit of the considered backward consistency constraint is shown in columns 9 to 11 of Figure \ref{fig: progress t prime out100 vs inpt20}. After obtaining $\hat{\mathbf{x}}'_0$, the resampling formula in \eqref{eq:resample} is used to obtain $\mathbf{x}_{t-1}$. 


\subsubsection{\textcolor{black}{Relation to Generative Priors}}

The use of network regularization to define step-wise backward consistency is inspired by the Compressed Sensing with generative models (CSGM) \citep{pmlr-v70-bora17a}. In CSGM, given a pre-trained generative model $g_{\phi}$ with $\phi$ as its weights, it can regularize the reconstruction of inverse problems by solving the following optimization problem: 
\begin{equation}
    \hat{\mathbf{z}} = \argmin_{\mathbf{z}} \| \mathcal{A}g_\phi(\mathbf{z}) - \mathbf{y} \|_2^2\:,
\end{equation}
%
and $\quad\mathbf{x} = g_{\phi}(\hat{\mathbf{z}})\:$ is the reconstructed image. Here, $\mathbf{z}$ is the input of the pre-trained network and the optimization variable. In this setup, the reconstruction ${\mathbf{x}}$ is constrained to be the output of the network $g_{\phi}$. Similarly, in Definition~\ref{def: backward consis}, $\hat{\mathbf{x}}_0'$ is required to be the output of the posterior mean estimator $f$, which is defined by the network $\bm{\epsilon}_\theta$.

\subsection{Triple Consistency Conditions}

We now summarize the three key conditions that apply at each sampling step.

\smallcircled[lightyellow]{\textbf{C1}} \textbf{Measurement Consistency}: The reconstruction $\hat{\mathbf{x}}'_0$ is consistent with the measurements. This means that $\mathcal{A}(\hat{\mathbf{x}}'_0) \approx \mathbf{y}$.

\smallcircled[green!15!white]{\textbf{C2}} \textbf{Backward  Consistency}: The reconstruction $\hat{\mathbf{x}}'_0$ is a denoised image produced by the Tweedie-network denoiser $f$. More generally, we define the backward consistency to include any form of DM network regularization (e.g., using the DM probability-flow (PF) ODE \citep{karras2022elucidating} as in Appendix~\ref{sec: appen SITCOM ODE}) applied to $\hat{\mathbf{x}}'_0$. 

\smallcircled[cyan!30]{\textbf{C3}} \textbf{Forward Consistency}: The pre-trained DM network $\bm{\epsilon}_{\theta}$ is provided with in-distribution inputs with high probability. To ensure this, we apply the resampling formula in \eqref{eq:resample} and enforce that $\hat{\mathbf{v}}_t$ remains close to $\mathbf{x}_t$.

We note that the three considered consistencies are step-wise, meaning they are enforced at every sampling step. This approach contrasts with enforcing these consistencies solely on the final reconstruction at $t=0$, which represents a significantly weaker requirement.


 \textbf{C1}-\textbf{C3} aim to ensure that all intermediate reconstructions $\hat{\mathbf{x}}'_0(\cdot) = f(~\cdot~;t,\bm{\epsilon}_\theta)$ (with $t>0$) are as accurate as possible, allowing us to effectively reduce the number of sampling steps. Previous works, such as \citep{songsolving, zhang2024improving}, enforce measurement consistency by applying $\mathcal{A}(\hat{\mathbf{x}}_0) = \mathbf{y}$ exactly, whereas DPS \citep{chungdiffusion} does not ensure forward and a step-wise backward consistencies along the diffusion trajectory. 

\subsection{The Proposed Sampler} 

Given $\mathbf{x}_t$, $\bm{\epsilon}_\theta$, and towards satisfying the above conditions, our method, at sampling time $t$, consists of the following three steps: 
%
\begin{align}
\hspace{-10pt}
\begin{aligned}\label{eqn: SITCOM step 1}
\tikz[baseline]{\node[fill=white,rounded corners=2pt]{\( \hat{\mathbf{v}}_t := \underset{\mathbf{v}'_t}{\argmin} \)};}
    & \; 
    \tikz[baseline]{\node[fill=lightyellow,rounded corners=2pt] 
    {\(\|\mathcal{A}\big(\tikz[baseline]{\node[fill=green!15!white,rounded corners=2pt] 
    {\(\frac{1}{\sqrt{\Bar{\alpha}_t}} \left[ \mathbf{v}'_t - \sqrt{1 - \Bar{\alpha}_t} \, \bm{\epsilon}_\theta (\mathbf{v}'_t, t) \right] \)};} \big) - \mathbf{y}\|^2_2\)};} 
    \notag
    \\
    &
    \tikz[baseline]{\node[fill=white,rounded corners=2pt] 
    {\(+\)};}
    \tikz[baseline]{\node[fill=cyan!30,rounded corners=2pt] 
    {\(\lambda\|\mathbf{x}_t - \mathbf{v}'_t \|_2^2\)}}
\end{aligned}\raisetag{15pt}\tag{$\texttt{S}_1$}
\end{align}
%
\begin{align}
\begin{aligned}\label{eqn: SITCOM step 2}
\hspace{-10pt}
\tikz[baseline]{\node[fill=green!15!white,rounded corners=2pt]{\( \hat{\mathbf{x}}'_0  = f(\hat{\mathbf{v}}_t;t,\bm{\epsilon}_{\theta}) \equiv \frac{1}{\sqrt{\Bar{\alpha}_t}} \left[ \hat{\mathbf{v}}_t - \sqrt{1 - \Bar{\alpha}_t} \, \bm{\epsilon}_\theta (\hat{\mathbf{v}}_t, t) \right]\)};}
\end{aligned}\raisetag{15pt}\tag{$\texttt{S}_2$}
\end{align}
%
%
%
\begin{align}
\begin{aligned}\label{eqn: SITCOM step 3}
\hspace{-10pt}
\tikz[baseline]{\node[fill=cyan!30,rounded corners=2pt]{\( \mathbf{x}_{t-1} = \sqrt{{\bar \alpha}_{t-1}}  \hat{\mathbf{x}}'_0 +  \sqrt{1-{\bar\alpha}_{t-1}} \bm{\eta}_t\:.  \)};}
\end{aligned}\tag{$\texttt{S}_3$}
\end{align}
In \eqref{eqn: SITCOM step 1}, the argument of the forward operator is $f({\mathbf{v}'}_t;t,\bm{\epsilon}_{\theta}) = \frac{1}{\sqrt{\Bar{\alpha}_t}} \left[ \mathbf{v}'_t - \sqrt{1 - \Bar{\alpha}_t} \, \bm{\epsilon}_\theta (\mathbf{v}'_t, t) \right]$ which is different from \eqref{eqn: SITCOM step 2}. 

The minimization in the first step optimizes over the input $\mathbf{v}_t'$ of the pre-trained diffusion model at time $t$, where the first term of the objective enforces measurement consistency for the posterior mean estimated image, satisfying condition \textbf{C1}. The second term serves as a regularization term, implicitly enforcing closeness between $\hat{\mathbf{v}}_t$ and $\mathbf{x}_t$ (i.e., condition \textbf{C3}), with $\lambda>0$ acting as the regularization parameter. 

The argument of the forward operator in \eqref{eqn: SITCOM step 1} and the second step in \eqref{eqn: SITCOM step 2} enforce that $\hat{\mathbf{v}}_t$ and $\hat{\mathbf{x}}'_0$, respectively, maintain the diffusion trajectory through obeying Tweedie's formula, thereby satisfying the backward consistency condition, \textbf{C2}. 

After obtaining the measurement-consistent estimate, $\hat{\mathbf{x}}'_0$, as given in \eqref{eqn: SITCOM step 2}, it must be mapped back to time $t-1$ to generate $\mathbf{x}_{t-1}$. This is achieved through the forward diffusion step in \eqref{eqn: SITCOM step 3} as outlined in the forward consistency condition, \textbf{C3}. A diagram of the SITCOM procedure is provided in Figure~\ref{fig: illustrative diagram all} (\textit{left}). 
\begin{algorithm*}[t]
\small
\caption{\textbf{S}tep-w\textbf{i}se \textbf{T}riple-\textbf{Co}nsistent Sa\textbf{m}pling (\textbf{SITCOM}).}
\textbf{Input}: Measurements $\mathbf{y}$, forward operator $\mathcal{A}(\cdot)$, pre-trained DM $\bm{\epsilon}_\theta(\cdot\:,\cdot)$, number of diffusion steps $N$, DM noise schedule $\Bar{\alpha}_i$ for $i\in \{1,\dots,N\}$, number of gradient updates $K$, stopping criterion  $\delta$, learning rate $\gamma$, and regularization parameter $\lambda$. \\
\vspace{1mm}
\textbf{Output}: Restored image $\hat{\mathbf{x}}$. \\
\vspace{1mm}
\textbf{Initialization}: $\mathbf{x}_N \sim \mathcal{N}(\mathbf{0}, \mathbf{I})$, $\Delta t = \lfloor\frac{T}{N}\rfloor$\\
\vspace{1mm}
\small{ 1:}  \textbf{For each} $i\in \{N, N-1, \dots, 1\}$. (Reducing diffusion sampling steps)  \\
\vspace{1mm}
\small{ 2:} \hspace{2mm}  \textbf{Initialize} ${\mathbf{v}}^{(0)}_i \leftarrow \mathbf{x}_i$. (Initialization to ensure Closeness: \textbf{C3} )\\
\vspace{1mm}
\small{ 3:} \hspace{2mm} \textbf{For each} $k\in \{1,\dots,K \}$. (Gradient used in Adam to achieve measurement \& backward consistency: \textbf{C1}, \textbf{C2})\\
\vspace{1mm}
\small{ 4:} \hspace{4mm}  ${\mathbf{v}}^{(k)}_i = {\mathbf{v}}^{(k-1)}_i - \gamma \nabla_{{\mathbf{v}}_i} \Big[ \big\|\mathcal{A}\Big( \frac{1}{\sqrt{\Bar{\alpha}_i}} \big[ {\mathbf{v}}_i - \sqrt{1 - \Bar{\alpha}_i} \, \bm{\epsilon}_\theta ({\mathbf{v}}_i, i\Delta t) \big] \Big) - \mathbf{y}\big\|^2_2 + \lambda\|\mathbf{x}_i - {\mathbf{v}}_i \|_2^2 \Big] \Big|_{\mathbf{v}_i = \mathbf{v}^{(k-1)}_i} $.\\
\vspace{1mm}
\small{ 5:} \hspace{8mm} \textbf{If} $ \big\|\mathcal{A}\big( \frac{1}{\sqrt{\Bar{\alpha}_i}} \big[ {\mathbf{v}}^{(k)}_i - \sqrt{1 - \Bar{\alpha}_i} \, \bm{\epsilon}_\theta ({\mathbf{v}}^{(k)}_i, i\Delta t) \big] \big) - \mathbf{y}\big\|^2_2 < \delta^2$\:. (Stopping criterion)\\ \vspace{1mm}
\small{ 6:} \hspace{12mm} \textbf{Break} the \textbf{For} loop in step 3. (Preventing noise overfitting)\\
\vspace{1mm}
\small{ 7:} \hspace{2mm} \textbf{Assign} $\hat{\mathbf{v}}_i \leftarrow {\mathbf{v}}^{(k)}_i$. (Backward diffusion consistency of $\hat{\mathbf{v}}_i$: \textbf{C2})  \\
\vspace{1mm}
\small{ 8:} \hspace{2mm} \textbf{Obtain} $\hat{\mathbf{x}}'_0  = f(\hat{\mathbf{v}}_i;t,\theta)=  \frac{1}{\sqrt{\Bar{\alpha}_i}} \left[ \hat{\mathbf{v}}_i - \sqrt{1 - \Bar{\alpha}_i} \, \bm{\epsilon}_\theta (\hat{\mathbf{v}}_i, i\Delta t) \right]$. (Backward consistency of $\hat{\mathbf{x}}'_0$: \textbf{C2})  \\
\vspace{1mm}
\small{ 9:} \hspace{2mm} \textbf{Obtain} $\mathbf{x}_{i-1} = \sqrt{\Bar{\alpha}_{i-1}}  \hat{\mathbf{x}}'_0 +  \sqrt{1-\Bar{\alpha}_{i-1}} \bm{\eta}_i,~\bm{\eta}_i \sim \mathcal{N}(\mathbf{0},\mathbf{I})$ . (Forward diffusion consistency: \textbf{C3})  \\
\vspace{1mm}
\small{10:} \textbf{Restored image:}  $\hat{\mathbf{x}} = \mathbf{x}_0$. \\
\vspace{1mm}
\vspace{-3.5mm}
\label{alg: SITCOM}
\end{algorithm*}
\begin{remark}\label{rem: ode and trade off}
    Obtaining the estimated image at time $0$ given some $\mathbf{x}_t$ using the standard DM PF-ODE \citep{karras2022elucidating} is more accurate compared to the one-step Tweedie’s formula. However, since PF-ODE is an iterative procedure, it requires more computational time. In SITCOM, PF-ODE could replace Tweedie’s formula in \eqref{eqn: SITCOM step 2}. In the main body of the paper, we chose not to use it, as this would increase the run time, and our empirical results are already highly competitive using Tweedie’s formula. \textcolor{black}{To show this trade off, we refer the reader to the study in Appendix~\ref{sec: appen SITCOM ODE}}.
\end{remark}
A conceptual illustration of SITCOM is shown in Figure~\ref{fig: illustrative diagram all} (\textit{right}). The DM generative manifold, $\mathcal{M}_t$, is defined as the set of all $\mathbf{x}_t$ sampled from
\begin{equation}\notag
  q(\mathbf{x}_t | \mathbf{x}_0) = \mathcal{N}(\mathbf{x}_t; \sqrt{\bar{\alpha}_t} \mathbf{x}_0, (1 - \bar{\alpha}_t) \mathbf{I})\:,  
\end{equation}
and $\mathbf{x}_0 \sim p_0(\mathbf{x})$. This set coincides with the entire space $\mathbb{R}^n$ equipped with the probability measure induced by the distribution of $\mathbf{x}_t$, which we denote as $\mathcal{P}_t$. In Figure \ref{fig: illustrative diagram all} (\textit{right}), the variation of color around each $\mathcal{M}_t$ indicates the concentration of the measure $\mathcal{P}_t$, with darker colors representing higher concentration. 

SITCOM's Step (1) and Step (2) enforce measurement consistency and backward consistency, thus map $\mathbf{x}_{t}$ to $\hat{\mathbf{x}}'_0 =f(\hat{\mathbf{v}}_t;t,\bm{\epsilon}_{\theta})$ which lies within the intersection of (\textit{i}) measurement-consistent set $\{\hat{\mathbf{x}}'_0 : \mathcal{A}(\hat{\mathbf{x}}'_0)\approx \mathbf{y}\}$ (the shaded black line) and (\textit{ii}) the backward-consistent set $\mathcal{C}_t$ (the yellow ellipsoid). Subsequently, $\mathbf{x}_{t-1}$ is generated by inserting $\hat{\mathbf{x}}'_0 $ into the resampling formula, which enforces the forward consistency. 

\begin{figure*}[t]
\centering
\includegraphics[width=16.5cm]{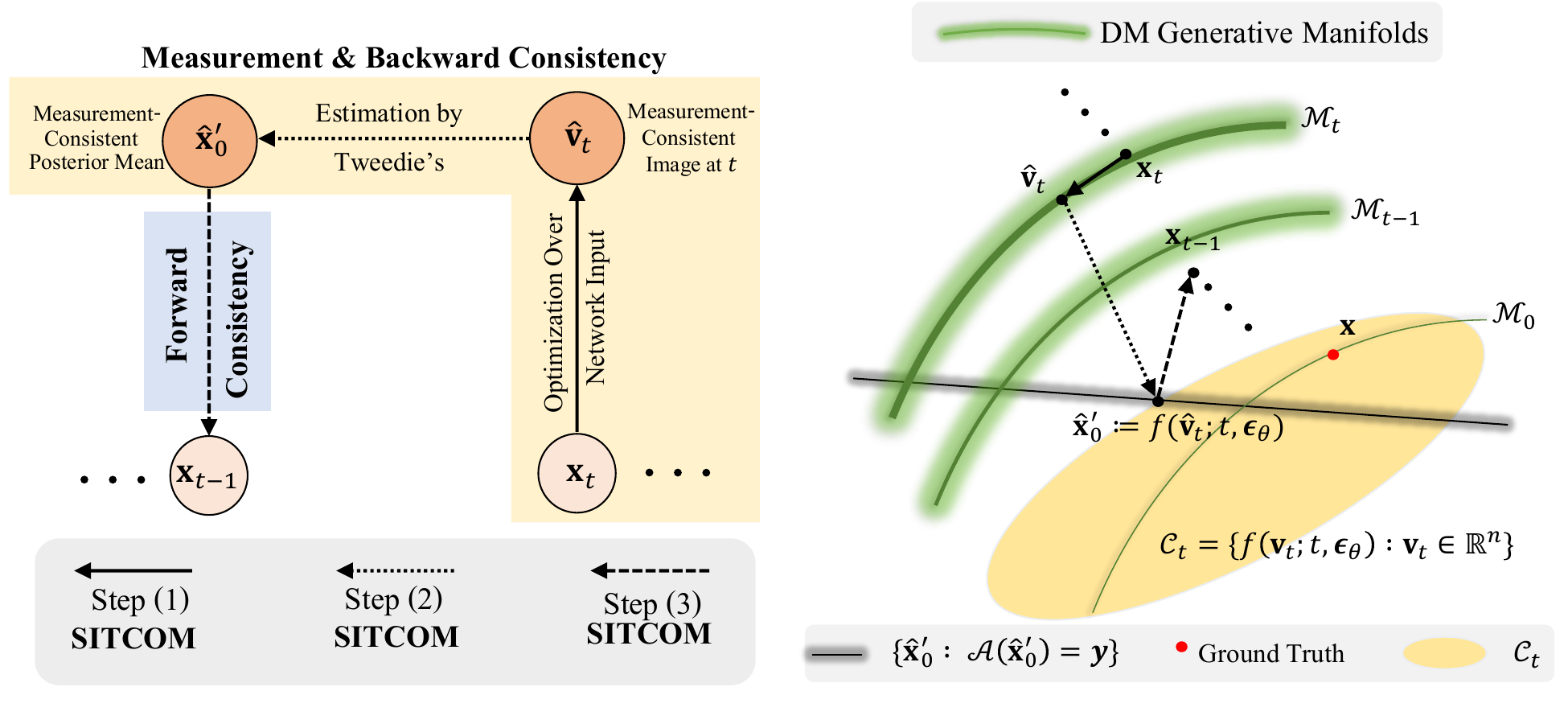}
\vspace{-0.6cm}
\caption{\small{Illustrative diagram of the proposed procedure in SITCOM (\textit{left}). Conceptual illustration of SITCOM, where $\mathcal{M}_t$ is the DM generative manifold at time $t$ and $\mathcal{C}_t$ is the subset of images that are backward-consistent (\textit{right}).}}
\label{fig: illustrative diagram all}
\vspace{-0.2cm}
\end{figure*}
%
\paragraph{Handling Measurement Noise:} 
To avoid the case where the first term of the objective in \eqref{eqn: SITCOM step 1} reaches small values yielding noise overfitting (i.e., when additive Gaussian measurement noise is considered, $\sigma_{\mathbf{y}}>0$), we propose refraining from enforcing strict measurement fitting $\mathcal{A}(\mathbf{x})=\mathbf{y}$. Instead, we use the stopping criterion
\begin{equation}\nonumber
 \big\|\mathcal{A}\big( \frac{1}{\sqrt{\Bar{\alpha}_t}} \big[ \mathbf{v}'_t - \sqrt{1 - \Bar{\alpha}_t} \, \bm{\epsilon}_\theta (\mathbf{v}'_t, t) \big] \big) - \mathbf{y}\big\|^2_2 < \delta^2 \:,   
\end{equation}
%
where $\delta \in \mathbb{R}_+$ is a hyper-parameter that indicates tolerance for noise and helps prevent overfitting. This is equivalent to enforcing an $\ell_2$ constraint. In our experiments, similar to DAPS \citep{zhang2024improving} and PGDM \citep{song2023pseudoinverse}, we use $\delta$ slightly larger than the actual level of noise in the measurements, i.e., $\delta>\sigma_{\mathbf{y}}\sqrt{m}$. 
\begin{remark}\label{rem: sensetivity to the choice of delta}
    Setting the stopping criterion based on the noise level in the measurements may not be practical, as noise estimators may be inaccurate. However, in Appendix~\ref{sec: appen ablation on stopping sensetivity}, we empirically show that SITCOM is not sensitive to the choice of this threshold by demonstrating that even when the stopping criterion is set higher or lower than the actual measurement noise level, the performance remains largely unaffected.
\end{remark}
%
 
\subsection{SITCOM with Arbitrary Stepsizes}

Here, we explain how to apply SITCOM with a large stepsize and present the final algorithm. The DM is trained with $T$ diffusion steps. Given that our method is designed to satisfy the three consistencies, SITCOM requires $N\ll T$ sampling iterations, using a step size of $\Delta t := \lfloor\frac{T}{N}\rfloor$. Thus, we introduce the index $i$ instead of $t$ with a relation $t_i= i\Delta t$. 

The procedure of SITCOM is outlined in Algorithm~\ref{alg: SITCOM}. As inputs, SITCOM takes $\mathbf{y}$, $\mathcal{A}(\cdot)$, $\bm{\epsilon}_\theta$, the number of sampling steps $N$, $\bar\alpha_i$ for all $i\in \{1,\dots,N\}$, the number of optimization steps $K$ per sampling step, stopping criteria $\delta$, and the learning rate $\gamma$.

Starting with initializing ${\mathbf{v}}^{(0)}_i $ as $ \mathbf{x}_i$ (satisfying condition \textbf{C3}), lines 3 through 6 correspond to the gradient updates for solving the optimization problem in the first step of SITCOM, i.e., \eqref{eqn: SITCOM step 1}\footnote{Although lines 3 to 6 of Algorithm~\ref{alg: SITCOM} describe using fixed-step-size gradient descent for \eqref{eqn: SITCOM step 1}, we note that, in our experiments, we use the ADAM optimizer~\citep{kingma2015adam}.}. In lines 5 and 6, the stopping criterion is applied to prevent strict data fidelity (avoiding noise overfitting). Following the gradient updates in the inner loop, $\hat{\mathbf{v}}_i$ is obtained in line 7, which is then used in line 8 to obtain $\hat{\mathbf{x}}'_0$ as specified in \eqref{eqn: SITCOM step 2}, satisfying condition \textbf{C2}. \textcolor{black}{Note that line 8 requires no additional computation, as the $\hat{\mathbf{x}}'_0$ calculated here was already obtained while checking the stopping condition in line 6}. After obtaining the double-consistent $\hat{\mathbf{x}}'_0$, the resampling is applied to map the image back to time $t-1$ while ensuring $\mathbf{x}_{t-1}$ to be in-distribution, as indicated in line 9 of the algorithm. In the next iteration, the requirement that $\hat{\mathbf{v}}_{t-1}$ is close to $\mathbf{x}_{t-1}$ ensures that the input $\hat{\mathbf{v}}_{t-1}$ to the DM network, $\bm{\epsilon}_\theta$, is also in-distribution, thus satisfying the forward-consistency (condition \textbf{C3}).


The computational requirements of SITCOM are determined by (\textit{i}) the number of sampling steps $N$ and (\textit{ii}) the number of gradient steps $K$ required for each sampling iteration. Given the proposed stopping criterion, this results in at most $NK$ Number of Function Evaluations (NFEs) of the pre-trained model (forward pass), $NK$ backward passes through the pre-trained model, and $NK$ applications each for the forward operator and its adjoint to solve the optimization problem in \eqref{eqn: SITCOM step 1}. With early stopping, the computational cost is lower. For example, for a linear operator $\mathcal{A}$ with dimensions $m \times n$, the cost of applying it (or its adjoint) to a vector is $\mathcal{O}(mn)$. For a network with width $M$ and depth $L$, the cost for making a forward pass is $\mathcal{O}(LM^2)$. 

The gradients are computed w.r.t. the input of the DM network, requiring an additional backward pass. This backward pass has the same computational cost as the forward pass. Consequently, this procedure is significantly more efficient than network training, where the network weights are updated instead of the input. 

In Appendix~\ref{sec: appen further insights and relation to existing studies}, we offer high-level insights to enhance the understanding of SITCOM, particularly about where its acceleration comes from. In essence, we interpret SITCOM with $K$ inner iterations as an accelerated variant of DPS, under certain approximations, via ADAM or other preconditioned GD (Proposition~\ref{prop: prop}). 

\subsection{\textcolor{black}{Relation to Existing Studies}}

Both SITCOM and the works in DAPS \citep{zhang2024improving}, DCDP~\citep{li2024decoupled}, and ReSample \citep{songsolving} are optimization-based methods that modify (or decouple) the sampling steps to enforce measurement consistency. DAPS and ReSample involve mapping back to time $t-1$ (as in step 3 of SITCOM) and DCDP uses this step prior to purification. However, there is a major difference between them: The optimization variable in these works is the estimated image at time $t$ (the output of the DM network), whereas in SITCOM, it is the noisy image at time $t$ (the input of the network). This means that these studies enforce \textbf{C1} and \textbf{C3} (only DAPS and ReSample), but not \textbf{C2}. 

Previous works, RED-diff \citep{mardani2023variational} and DMPlug \citep{wang2024dmplug}, also utilize the implicit bias of the pre-trained network. However, they adopt the full diffusion process as a regularizer, applied only once. In contrast, SITCOM uses the neural network as the regularizer \textit{at each iteration} and focuses specifically on reducing the number of sampling steps for a given level of accuracy.

\begin{table*}[t]
\small
    \centering
    \resizebox{0.96\textwidth}{!}{
    \begin{tabular}{c|c|cccc|cccc}
    \toprule
    \multirow{2}{*}{\textbf{Task}} 
    & \multirow{2}{*}{\textbf{Method}} 
    & \multicolumn{4}{c|}{\textbf{FFHQ}} 
    & \multicolumn{4}{c}{\textbf{ImageNet}} \\
     &  & PSNR ($\uparrow$) & SSIM ($\uparrow$) & LPIPS ($\downarrow$) 
     & \textbf{Run-time ($\downarrow$)} 
     & PSNR ($\uparrow$) & SSIM ($\uparrow$) & LPIPS ($\downarrow$) 
     & \textbf{Run-time ($\downarrow$)} \\
    \midrule
    
    \multirow{8}{*}{SR} 
    & DPS 
    & 24.44\tiny{$\pm$0.56} 
    & 0.801\tiny{$\pm$0.032} 
    & 0.26\tiny{$\pm$0.022}  
    & 75.60\tiny{$\pm$15.20} 
    & 23.86\tiny{$\pm$0.34} 
    & 0.76\tiny{$\pm$0.041} 
    & 0.357\tiny{$\pm$0.069} 
    & 142.80\tiny{$\pm$21.20}\\
    
    & DAPS 
    & 29.24\tiny{$\pm$0.42} 
    & 0.851\tiny{$\pm$0.024} 
    & \textbf{0.135}\tiny{$\pm$0.039} 
    & 74.40\tiny{$\pm$13.20} 
    & \underline{25.67}\tiny{$\pm$0.73} 
    & \underline{0.802}\tiny{$\pm$0.045} 
    & \underline{0.256}\tiny{$\pm$0.067} 
    & 129.60\tiny{$\pm$17.00}\\
    
    & DDNM 
    & 28.02\tiny{$\pm$0.78} 
    & 0.842\tiny{$\pm$0.034} 
    & 0.197\tiny{$\pm$0.034}  
    & 64.20\tiny{$\pm$25.20} 
    & 23.96\tiny{$\pm$0.89} 
    & 0.767\tiny{$\pm$0.045} 
    & 0.475\tiny{$\pm$0.044} 
    & 76.20\tiny{$\pm$23.00}\\

    & DMPlug 
    & \underline{30.18}\tiny{$\pm$0.67} 
    & \underline{0.862}\tiny{$\pm$0.056} 
    & 0.149\tiny{$\pm$0.045}  
    & 68.2\tiny{$\pm$21.12} 
    & -- 
    & -- 
    & --
    & -- \\

    & RED-diff 
    & -- 
    & -- 
    & --
    & --
    & 24.89\tiny{$\pm$0.55} 
    & 0.789\tiny{$\pm$0.056} 
    & 0.324\tiny{$\pm$0.042}  
    & \textbf{18.02}\tiny{$\pm$4.00} \\

    & PGDM 
    & -- 
    & -- 
    & --
    & --
    &25.22\tiny{$\pm$0.89} 
    & 0.798\tiny{$\pm$0.037} 
    & 0.289\tiny{$\pm$0.04}  
    & \underline{26.2}\tiny{$\pm$7.20} \\
    
    & DCDP 
    & 27.88\tiny{$\pm$1.34} 
    & 0.825\tiny{$\pm$0.07} 
    & 0.211\tiny{$\pm$0.05}  
    & \textbf{19.10}\tiny{$\pm$4.40} 
    & 24.12\tiny{$\pm$1.24} 
    & 0.772\tiny{$\pm$0.032} 
    & 0.351\tiny{$\pm$0.045} 
    & 55.00\tiny{$\pm$11.22}\\
    
    & SITCOM (ours) 
    & \textbf{30.68}\tiny{$\pm$1.02} 
    & \textbf{0.867}\tiny{$\pm$0.045} 
    & \underline{0.142}\tiny{$\pm$0.056}  
    & \underline{27.00}\tiny{$\pm$4.80} 
    & \textbf{26.35}\tiny{$\pm$1.21} 
    & \textbf{0.812}\tiny{$\pm$0.021} 
    & \textbf{0.232}\tiny{$\pm$0.038} 
    & 67.20\tiny{$\pm$11.20}\\
    
    \midrule
    
    \multirow{5}{*}{BIP} 
    & DPS 
    & 23.20\tiny{$\pm$0.89} 
    & 0.754\tiny{$\pm$0.023} 
    & 0.196\tiny{$\pm$0.033} 
    & 94.20\tiny{$\pm$23.00} 
    & 19.78\tiny{$\pm$0.78} 
    & 0.691\tiny{$\pm$0.052} 
    & 0.312\tiny{$\pm$0.025} 
    & 136.80\tiny{$\pm$22.20}\\
    
    & DAPS 
    & 24.17\tiny{$\pm$1.02} 
    & 0.787\tiny{$\pm$0.032} 
    & \underline{0.135}\tiny{$\pm$0.032}  
    & 81.00\tiny{$\pm$17.00} 
    & 21.43\tiny{$\pm$0.40} 
    & \underline{0.736}\tiny{$\pm$0.020} 
    & \underline{0.218}\tiny{$\pm$0.021} 
    & 116.40\tiny{$\pm$35.20}\\
    
    & DDNM 
    & \underline{24.37}\tiny{$\pm$0.45} 
    & \underline{0.792}\tiny{$\pm$0.024} 
    & 0.232\tiny{$\pm$0.026}  
    & 61.20\tiny{$\pm$5.92} 
    & \underline{21.64}\tiny{$\pm$0.66} 
    & 0.732\tiny{$\pm$0.028} 
    & 0.319\tiny{$\pm$0.015} 
    & 87.00\tiny{$\pm$16.20}\\
    
    & DCDP 
    & 23.66\tiny{$\pm$1.67} 
    & 0.762\tiny{$\pm$0.07} 
    & 0.144\tiny{$\pm$0.05}  
    & \textbf{16.60}\tiny{$\pm$7.20} 
    & 20.45\tiny{$\pm$1.22} 
    & 0.712\tiny{$\pm$0.07} 
    & 0.298\tiny{$\pm$0.04} 
    & \textbf{51.62}\tiny{$\pm$15.00}\\
    
    & SITCOM (ours) 
    & \textbf{24.68}\tiny{$\pm$0.78} 
    & \textbf{0.801}\tiny{$\pm$0.042} 
    & \textbf{0.121}\tiny{$\pm$0.08}  
    & \underline{21.00}\tiny{$\pm$6.00} 
    & \textbf{21.88}\tiny{$\pm$0.92} 
    & \textbf{0.742}\tiny{$\pm$0.032} 
    & \textbf{0.214}\tiny{$\pm$0.021} 
    & \underline{67.20}\tiny{$\pm$12.00}\\
    
    \midrule
    
    \multirow{6}{*}{RIP} 
    & DPS 
    & 28.39\tiny{$\pm$0.82} 
    & 0.844\tiny{$\pm$0.042} 
    & 0.194\tiny{$\pm$0.021} 
    & 91.20\tiny{$\pm$18.00} 
    & 24.26\tiny{$\pm$0.42} 
    & 0.772\tiny{$\pm$0.02}
    & 0.326\tiny{$\pm$0.034} 
    & 136.20\tiny{$\pm$15.00}\\
    
    & DAPS 
    & \underline{31.02}\tiny{$\pm$0.45} 
    & \underline{0.902}\tiny{$\pm$0.015} 
    & \underline{0.098}\tiny{$\pm$0.017}  
    & 93.60\tiny{$\pm$24.00}
    & 28.44\tiny{$\pm$0.45} 
    & 0.872\tiny{$\pm$0.024}
    & \underline{0.135}\tiny{$\pm$0.052} 
    & 128.40\tiny{$\pm$27.00}\\
    
    & DDNM 
    & 29.93\tiny{$\pm$0.67} 
    & 0.889\tiny{$\pm$0.032} 
    & 0.122\tiny{$\pm$0.056}  
    & 87.00\tiny{$\pm$14.70}
    & \underline{29.22}\tiny{$\pm$0.55} 
    & \underline{0.912}\tiny{$\pm$0.034}
    & 0.191\tiny{$\pm$0.048} 
    & 92.40\tiny{$\pm$18.20}\\

        & DMPlug 
    & 31.01\tiny{$\pm$0.46} 
    & 0.899\tiny{$\pm$0.076} 
    & 0.099\tiny{$\pm$0.037}  
    & 72.02\tiny{$\pm$15.23} 
    & -- 
    & -- 
    & --
    & -- \\
    
    & DCDP 
    & 28.59\tiny{$\pm$0.95} 
    & 0.852\tiny{$\pm$0.06} 
    & 0.202\tiny{$\pm$0.04}  
    & \textbf{20.14}\tiny{$\pm$8.00} 
    & 26.22\tiny{$\pm$1.13} 
    & 0.791\tiny{$\pm$0.06} 
    & 0.289\tiny{$\pm$0.03} 
    & \textbf{49.50}\tiny{$\pm$10.40}\\
    
    & SITCOM (ours) 
    & \textbf{32.05}\tiny{$\pm$1.02} 
    & \textbf{0.909}\tiny{$\pm$0.09} 
    & \textbf{0.095}\tiny{$\pm$0.025}  
    & \underline{27.00}\tiny{$\pm$12.00}
    & \textbf{29.60}\tiny{$\pm$0.78} 
    & \textbf{0.915}\tiny{$\pm$0.028}
    & \textbf{0.127}\tiny{$\pm$0.039} 
    & \underline{68.40}\tiny{$\pm$22.00}\\
    
    \midrule
    
    \multirow{6}{*}{GDB} 
    & DPS 
    & 25.52\tiny{$\pm$0.78} 
    & 0.826\tiny{$\pm$0.052} 
    & 0.211\tiny{$\pm$0.017}  
    & 90.00\tiny{$\pm$14.45}
    & 21.86\tiny{$\pm$0.45} 
    & 0.772\tiny{$\pm$0.08}
    & 0.362\tiny{$\pm$0.034} 
    & 153.00\tiny{$\pm$27.00}\\
    
    & DAPS 
    & 29.22\tiny{$\pm$0.50} 
    & \underline{0.884}\tiny{$\pm$0.056} 
    & 0.164\tiny{$\pm$0.032}  
    & 84.00\tiny{$\pm$31.20}
    & 26.12\tiny{$\pm$0.78} 
    & 0.832\tiny{$\pm$0.092}
    & \underline{0.245}\tiny{$\pm$0.022} 
    & 133.80\tiny{$\pm$31.20}\\
    
    & DDNM 
    & 28.22\tiny{$\pm$0.52} 
    & 0.867\tiny{$\pm$0.056} 
    & 0.216\tiny{$\pm$0.042}  
    & 93.60\tiny{$\pm$17.00}
    & \textbf{28.06}\tiny{$\pm$0.52} 
    & \textbf{0.879}\tiny{$\pm$0.072}
    & 0.278\tiny{$\pm$0.089} 
    & 105.00\tiny{$\pm$21.80}\\
    
    & DCDP 
    & 26.67\tiny{$\pm$0.78} 
    & 0.835\tiny{$\pm$0.08} 
    & 0.196\tiny{$\pm$0.04}  
    & \textbf{21.07}\tiny{$\pm$8.80} 
    & 23.24\tiny{$\pm$1.18} 
    & 0.781\tiny{$\pm$0.06} 
    & 0.343\tiny{$\pm$0.04} 
    & \textbf{48.25}\tiny{$\pm$15.80}\\

    & DMPlug 
    & \underline{29.79}\tiny{$\pm$1.02} 
    & 0.883\tiny{$\pm$0.045} 
    & \underline{0.157}\tiny{$\pm$0.052}  
    & 65.44\tiny{$\pm$22.12} 
    & -- 
    & -- 
    & --
    & -- \\
    
    & SITCOM (ours) 
    & \textbf{30.25}\tiny{$\pm$0.89} 
    & \textbf{0.892}\tiny{$\pm$0.032} 
    & \textbf{0.135}\tiny{$\pm$0.078}  
    & \underline{27.60}\tiny{$\pm$8.40}
    & \underline{27.40}\tiny{$\pm$0.45} 
    & \underline{0.854}\tiny{$\pm$0.045}
    & \textbf{0.236}\tiny{$\pm$0.039} 
    & \underline{66.00}\tiny{$\pm$18.20}\\
    
    \midrule
    
    \multirow{5}{*}{MDB} 
    & DPS
    & 23.40\tiny{$\pm$1.42} 
    & 0.737\tiny{$\pm$0.024}
    & 0.270\tiny{$\pm$0.025} 
    & 144.00\tiny{$\pm$23.00} 
    & 21.86\tiny{$\pm$2.05} 
    & 0.724\tiny{$\pm$0.022} 
    & 0.357\tiny{$\pm$0.032}  
    & 153.60\tiny{$\pm$24.20}\\

    & RED-diff 
    & -- 
    & -- 
    & --
    & --
    & 27.35\tiny{$\pm$0.89} 
    & 0.842\tiny{$\pm$0.062} 
    & 0.231\tiny{$\pm$0.045}  
    & \textbf{19.02}\tiny{$\pm$8.24} \\

    & PGDM 
    & -- 
    & -- 
    & --
    & --
    & 27.48\tiny{$\pm$0.78} 
    & 0.848\tiny{$\pm$0.056} 
    & 0.225\tiny{$\pm$0.052}  
    & \underline{24.1}\tiny{$\pm$7.45} \\
    
    & DAPS 
    & \underline{29.66}\tiny{$\pm$0.50} 
    & \underline{0.872}\tiny{$\pm$0.027} 
    & \underline{0.157}\tiny{$\pm$0.012}  
    & \underline{91.60}\tiny{$\pm$7.20}
    & \underline{27.86}\tiny{$\pm$1.20} 
    & \underline{0.862}\tiny{$\pm$0.032}
    & \underline{0.196}\tiny{$\pm$0.021} 
    & 118.00\tiny{$\pm$27.00}\\
    
    & SITCOM (ours) 
    & \textbf{30.34}\tiny{$\pm$0.67} 
    & \textbf{0.902}\tiny{$\pm$0.037} 
    & \textbf{0.148}\tiny{$\pm$0.041}  
    & \textbf{30.00}\tiny{$\pm$7.10}
    & \textbf{28.65}\tiny{$\pm$0.34} 
    & \textbf{0.876}\tiny{$\pm$0.021}
    & \textbf{0.189}\tiny{$\pm$0.036} 
    & 88.80\tiny{$\pm$21.00}\\
    
    \midrule
    
    \multirow{4}{*}{\underline{PR}} 
    & DPS 
    & 17.34\tiny{$\pm$2.67} 
    & 0.67\tiny{$\pm$0.045} 
    & 0.41\tiny{$\pm$0.08}  
    & 90.00\tiny{$\pm$20.40}
    & 16.82\tiny{$\pm$1.22} 
    & 0.64\tiny{$\pm$0.08}
    & 0.447\tiny{$\pm$0.032} 
    & 130.20\tiny{$\pm$14.40}\\
    
    & DAPS 
    & \textbf{30.97}\tiny{$\pm$3.12} 
    & \underline{0.908}\tiny{$\pm$0.041} 
    & \textbf{0.112}\tiny{$\pm$0.084}  
    & \underline{80.40}\tiny{$\pm$26.80}
    & \textbf{25.76}\tiny{$\pm$2.33} 
    & \underline{0.797}\tiny{$\pm$0.045}
    & \underline{0.255}\tiny{$\pm$0.095} 
    & 134.40\tiny{$\pm$15.00}\\
    
    & DCDP 
    & 28.52\tiny{$\pm$2.50} 
    & 0.892\tiny{$\pm$0.19} 
    & 0.167\tiny{$\pm$0.92}  
    & 108.00\tiny{$\pm$27.00} 
    & 24.25\tiny{$\pm$2.25} 
    & 0.778\tiny{$\pm$0.14} 
    & 0.287\tiny{$\pm$0.089} 
    & \underline{102.40}\tiny{$\pm$31.20}\\
    
    & SITCOM (ours) 
    & \underline{30.67}\tiny{$\pm$3.10} 
    & \textbf{0.915}\tiny{$\pm$0.064} 
    & \underline{0.122}\tiny{$\pm$0.102}  
    & \textbf{28.50}\tiny{$\pm$5.40}
    & \underline{25.45}\tiny{$\pm$2.78} 
    & \textbf{0.808}\tiny{$\pm$0.065}
    & \textbf{0.246}\tiny{$\pm$0.088} 
    & \textbf{84.00}\tiny{$\pm$24.00}\\
    
    \midrule
    
    \multirow{5}{*}{\underline{NDB}} 
    & DPS 
    & 23.42\tiny{$\pm$2.15} 
    & 0.757\tiny{$\pm$0.042} 
    & 0.279\tiny{$\pm$0.067}  
    & 93.00\tiny{$\pm$26.40}
    & 22.57\tiny{$\pm$0.67} 
    & 0.778\tiny{$\pm$0.067}
    & 0.310\tiny{$\pm$0.102} 
    & 141.00\tiny{$\pm$27.00}\\
    
    & DAPS 
    & 28.23\tiny{$\pm$1.55} 
    & 0.833\tiny{$\pm$0.052} 
    & 0.155\tiny{$\pm$0.041}  
    & \underline{85.20}\tiny{$\pm$24.60}
    & 27.65\tiny{$\pm$1.2} 
    & 0.822\tiny{$\pm$0.056}
    & 0.169\tiny{$\pm$0.044} 
    & 128.40\tiny{$\pm$25.20}\\
    
    & DCDP 
    & 28.78\tiny{$\pm$1.44} 
    & 0.827\tiny{$\pm$0.08} 
    & 0.162\tiny{$\pm$0.04}  
    & 92.00\tiny{$\pm$27.00} 
    & 26.56\tiny{$\pm$1.09} 
    & 0.803\tiny{$\pm$0.06} 
    & 0.182\tiny{$\pm$0.05} 
    & \underline{89.00}\tiny{$\pm$21.60}\\

        & DMPlug 
    & \textbf{30.31}\tiny{$\pm$1.24} 
    & \underline{0.901}\tiny{$\pm$0.051} 
    & \textbf{0.142}\tiny{$\pm$0.062}  
    & 182.4\tiny{$\pm$32.00} 
    & -- 
    & -- 
    & --
    & -- \\

            & RED-diff 
    & --
    & --
    & --
    & --
    & \textbf{29.51}\textcolor{black}{\tiny{$\pm$0.76}}
    & \underline{0.828}\textcolor{black}{\tiny{$\pm$0.08}}
    & \underline{0.211}\textcolor{black}{\tiny{$\pm$0.05}} 
    & \textbf{31.15}\tiny{$\pm$12.40} \\
    
    & SITCOM (ours) 
    & \underline{30.12}\tiny{$\pm$0.68} 
    & \textbf{0.903}\tiny{$\pm$0.042} 
    & \underline{0.145}\tiny{$\pm$0.037}  
    & \textbf{33.45}\tiny{$\pm$9.40}
    & \underline{28.78}\tiny{$\pm$0.79} 
    & \textbf{0.832}\tiny{$\pm$0.056}
    & \textbf{0.16}\tiny{$\pm$0.048} 
    & \textbf{75.00}\tiny{$\pm$27.00}\\
    
    \midrule
    
    \multirow{4}{*}{\underline{HDR}}
    & DPS 
    & 22.88\tiny{$\pm$1.25} 
    & 0.722\tiny{$\pm$0.056} 
    & 0.264\tiny{$\pm$0.089}  
    & 87.00\tiny{$\pm$20.40}
    & 19.33\tiny{$\pm$1.45} 
    & 0.688\tiny{$\pm$0.067}
    & 0.503\tiny{$\pm$0.132} 
    & 145.20\tiny{$\pm$27.60}\\

    & RED-diff 
    & -- 
    & -- 
    & --
    & --
    & 23.45\tiny{$\pm$0.54} 
    & 0.746\tiny{$\pm$0.052} 
    & 0.257\tiny{$\pm$0.045}  
    & \textbf{24.4}\tiny{$\pm$5.00} \\
    
    & DAPS 
    & \underline{27.12}\tiny{$\pm$0.89} 
    & \underline{0.825}\tiny{$\pm$0.056} 
    & \underline{0.166}\tiny{$\pm$0.078}  
    & \underline{75.00}\tiny{$\pm$21.00}
    & \underline{26.30}\tiny{$\pm$1.02} 
    & \underline{0.792}\tiny{$\pm$0.046}
    & \underline{0.177}\tiny{$\pm$0.089} 
    & 130.80\tiny{$\pm$33.00}\\
    
    & SITCOM (ours) 
    & \textbf{27.98}\tiny{$\pm$1.06} 
    & \textbf{0.832}\tiny{$\pm$0.052} 
    & \textbf{0.158}\tiny{$\pm$0.032}  
    & \textbf{31.20}\tiny{$\pm$8.20}
    & \textbf{26.97}\tiny{$\pm$0.87} 
    & \textbf{0.821}\tiny{$\pm$0.045}
    & \textbf{0.167}\tiny{$\pm$0.052} 
    & \underline{92.40}\tiny{$\pm$21.00}\\
    
    \bottomrule
    \end{tabular}
    }
    \vspace{-0.3cm}
    \caption{\small{Average PSNR, SSIM, LPIPS, with $\sigma_{\mathbf{y}} = 0.05$. The best (resp. second-best) results are bolded (resp. underlined). Values after $\pm$ represent the standard deviation. Dashes indicate cases where a method did not consider a dataset.}}
    \label{tab: main FFHQ IMAGENET noise 0.05}
    \vspace{-0.5cm}
\end{table*}
%

\section{Experimental Results}
\label{sec: exp}

\subsection{Settings}


Our setup for the image restoration problems and noise levels largely follows DPS \citep{chungdiffusion}. For linear IPs, we evaluate super resolution (SR), Gaussian deblurring (GDB), motion deblurring (MDB), box inpainting (BIP), and random inpainting (RIP). For non-linear IPs, we evaluate phase retrieval (PR), non-uniform deblurring (NDB), and high dynamic range (HDR). For phase retrieval, the run-time is reported for the best result out of four independent runs (applied to SITCOM and baselines), consistent with \citep{chungdiffusion, zhang2024improving}. See Appendix~\ref{sec: appen phase ret failure cases} for a discussion on phase retrieval, and how SITCOM-ODE (Appendix~\ref{sec: appen SITCOM ODE}) is empirically more stable than SITCOM and all other baselines \textcolor{blue}{on PR}. For baselines, we use DPS \citep{chungdiffusion}, DDNM \citep{wang2022zero}, RED-diff \citep{mardani2023variational}, PGDM \citep{song2023pseudoinverse}, DCDP \citep{li2024decoupled}, DMPlug \citep{wang2024dmplug}, and DAPS \citep{zhang2024improving}. For latent space DMs, in Appendix~\ref{sec: appen latent SITCOM}, we introduce Latent-SITCOM and compare the performance to ReSample \citep{songsolving} and Latent-DAPS \citep{zhang2024improving}. We use 100 test images from the validation set of FFHQ \citep{karras2019style} and 100 test images from the validation set of ImageNet \citep{deng2009imagenet} for which the FFHQ-trained and ImageNet-trained DMs are given in \citep{chungdiffusion} and \citep{dhariwal2021diffusion}, respectively, following the previous convention. 

For evaluation metrics, we use PSNR, SSIM \citep{wang2004image}, and LPIPS \citep{zhang2018unreasonable}. All experiments were conducted using a \textbf{single RTX5000 GPU} machine. In Appendix~\ref{sec: append impact of each consistency}, we show the impact of each individual consistency on the performance of SITCOM. Ablation studies are given in Appendix~\ref{sec: appen ablations general}. For SITCOM, Table~\ref{tab: hyper-parameters} in Appendix~\ref{sec: appen list of hyper-parameters} lists all the hyper-parameters used for every task. Our code is available online.\footnote{\tiny{\url{https://github.com/sjames40/SITCOM}}}

\subsection{Main Results}


In Table~\ref{tab: main FFHQ IMAGENET noise 0.05}, we present quantitative results and run-time (seconds). Columns 3 to 6 (resp. 7 to 10) correspond to the FFHQ (resp. ImageNet) dataset. The table covers 8 tasks, 3 restoration quality metrics, and 2 datasets, totaling 51 results. Among these, SITCOM reports the best performance in 42 out of 51 cases. In the remaining 8, SITCOM achieved the second-best performance. 

Generally, SITCOM demonstrates strong reconstruction capabilities across most tasks. SITCOM reports a PSNR improvement of over 1 dB when compared to the second best for the tasks of RIP with FFHQ (with nearly 1/3 of the run-time of DAPS), and in the task of NDB with ImageNet (with nearly one minute less run-time than DAPS and excluding RED-diff). 

There are tasks where in terms of PSNR, we present a slight improvement (less than 1 dB) when compared to the second best scheme. However, we achieve a great speed-up. Examples include (\textit{i}) SR where SITCOM requires nearly half of the run-time when compared to DMPlug for FFHQ and PGDM for ImageNet, and (\textit{ii}) HDR where, when compared to DAPS, SITCOM requires less than half the time for FFHQ, and approximately 40 seconds less time for ImageNet. This observation is noted in nearly all other tasks. 

There are 9 cases where we report the second best results. In these cases, we slightly under-perform in one or two of the three restoration quality metrics (i.e., PNSR, SSIM, and LPIPS). An example of this case is GDB on ImageNet where we report the best LPIPS but the second best PSNR (under-performing by 0.66 dB) and SSIM (under-performing by 0.025). 

For PR, DAPS's PSNR is higher than SITCOM's by a small margin. However, in Appendix~\ref{sec: appen SITCOM ODE}, we show that SITCOM with ODE solver achieves better PSNRs than DAPS for the task of PR at a cost of increased run time (see Table~\ref{tab:main_all SITCOM ODE results}). 

In terms of run-time, SITCOM outperforms DPS, DAPS, DDNM, and DMPlug in almost all cases, as shown in Table~\ref{tab: main FFHQ IMAGENET noise 0.05}. However, DCDP, REDdiff, and PGDM often exhibit faster run times compared to SITCOM. This discrepancy arises because SITCOM's default hyperparameter settings prioritize achieving the highest possible PSNR, which comes at the cost of increased computational time. For instance, SITCOM demonstrates significant improvements of over 3dB in SR, RIP, and GDB for FFHQ, as well as in RIP, GDB, and HDR for ImageNet, when compared to DCDP, REDdiff, and PGDM. To ensure a fairer comparison, we provide additional results in Appendix~\ref{sec: appen more results baselines run time}, where all methods are constrained to achieve the same PSNR. Under these conditions, SITCOM once again proves to be faster than DCDP, REDdiff, and PGDM. 

SITCOM-MRI results are given in Appendix~\ref{sec: appen SITCOM MRI}, and visual examples are provided in Appendix~\ref{sec: appen visuals}. 



\section{Conclusion}\label{sec: conc}

We introduced step-wise backward consistency with network regularization and formulated three conditions to achieve measurement- and diffusion-consistent trajectories for solving inverse imaging problems using pre-trained diffusion models. These conditions formed the base of our optimization-based sampling method, optimizing the diffusion model input at every sampling step for improved efficiency and measurement consistency. Experiments across eight image restoration tasks and one medical imaging reconstruction task show that our sampler consistently matches or outperforms state-of-the-art baselines, under different measurement noise levels.





\section*{Impact Statement}

The Step-wise Triple-Consistent Sampling (SITCOM) method improves diffusion models for inverse problems by introducing three consistency conditions. SITCOM accelerates sampling by reducing reverse steps while ensuring measurement, forward, and network-regularized backward diffusion consistency. This enables efficient, robust image restoration, particularly in noisy settings, and advances optimization-based approaches in generative modeling.

\section*{Acknowledgments}

This work was supported by the National Science Foundation (NSF) under grants CCF-2212065, CCF-2212066, and BCS-2215155. The authors would like to thank Xiang Li (University of Michigan) for discussions on compressed sensing using generative models. The authors would also like to thank Michael T. McCann (Los Alamos National Laboratory) for valuable feedback about the setting of our optimization problem.


\bibliography{refs}
\bibliographystyle{icml2025}

\newpage
\appendix
\onecolumn
\par\noindent\rule{\textwidth}{1pt}
\begin{center}
{\Large \bf Appendix}
\end{center}
\vspace{-0.1in}
\par\noindent\rule{\textwidth}{1pt}
\appendix
In the Appendix, we start by showing the equivalence between the second formula in \eqref{eqn: forward and reverse SDE discretized} and \eqref{eq:orig_sampling} (Appendix~\ref{sec: appen der}). Then, we discuss the known limitations and future extensions of SITCOM (Appendix~\ref{sec: appen limitations}). Subsequently, we present experiments to highlight the impact of the proposed backward consistency and other consistencies in SITCOM (Appendix~\ref{sec: append impact of each consistency}). Appendix~\ref{sec: appen further insights and relation to existing studies} provides further insights to aid in understanding SITCOM. In Appendix~\ref{sec: appen latent SITCOM} and Appendix~\ref{sec: appen SITCOM ODE}, we present the latent and ODE versions of SITCOM, respectively. This is followed by a discussion on phase retrieval (Appendix~\ref{sec: appen phase ret failure cases}). Appendix~\ref{sec: appen SITCOM MRI} shows the results of applying SITCOM on multicoil Magnetic Resonance Imaging (MRI). In Appendix~\ref{sec: appen more results baselines}, we provide further comparison results, and in Appendix~\ref{sec: appen ablations general}, we perform ablation studies to examine the effects of the stopping criterion and other components/hyper-parameters in SITCOM. Appendix~\ref{sec: appen implementation details} covers the implementation details of tasks, baselines, and SITCOM's hyper-parameters. followed by visual examples of restored/reconstructed images (Appendix~\ref{sec: appen visuals}). 


\section{Derivation of Equation~\eqref{eq:orig_sampling}}\label{sec: appen der}

From \citep{luo2022understanding}, we have
\begin{align}
\mathbf{s}_{\theta}(\mathbf{x}_t, t) & = -\frac{1}{\sqrt{1 - \bar{\alpha}_t}} \bm{\epsilon}_{\theta}(\mathbf{x}_t, t)\:.
\end{align}
Rearranging the Tweedie's formula in \eqref{eqn: Tweedies} to solve for $\bm{\epsilon}_{\theta}(\mathbf{x}_t, t)$ yields 
\begin{equation}
\bm{\epsilon}_{\theta}(\mathbf{x}_t, t) = \frac{\mathbf{x}_t - \sqrt{\bar{\alpha}_t} \hat{\mathbf{x}}_0(\mathbf{x}_t)}{\sqrt{1 - \bar{\alpha}_t}}\:.
\end{equation}
Now, we substitute into the recursive equation for $\mathbf{x}_{t-1}$:
\begin{align}\nonumber
\mathbf{x}_{t-1} & = \frac{1}{\sqrt{1 - \beta_t}} \left[ \mathbf{x}_t + \beta_t \mathbf{s}_{\theta}(\mathbf{x}_t, t) \right] + \sqrt{\beta_t} \bm{\eta}_t \\ \nonumber
& = \frac{1}{\sqrt{1 - \beta_t}} \left[ \mathbf{x}_t + \beta_t \left( -\frac{1}{\sqrt{1 - \bar{\alpha}_t}} \bm{\epsilon}_{\theta}(\mathbf{x}_t, t) \right) \right] + \sqrt{\beta_t} \bm{\eta}_t \\ \nonumber
& = \frac{1}{\sqrt{1 - \beta_t}} \left[ \mathbf{x}_t - \frac{\beta_t}{\sqrt{1 - \bar{\alpha}_t}} \bm{\epsilon}_{\theta}(\mathbf{x}_t, t) \right] + \sqrt{\beta_t} \bm{\eta}_t \\ \nonumber
& = \frac{1}{\sqrt{1 - \beta_t}} \left[ \mathbf{x}_t - \frac{\beta_t}{\sqrt{1 - \bar{\alpha}_t}} \left( \frac{\mathbf{x}_t - \sqrt{\bar{\alpha}_t} \hat{\mathbf{x}}_0(\mathbf{x}_t)}{\sqrt{1 - \bar{\alpha}_t}} \right) \right] + \sqrt{\beta_t} \bm{\eta}_t \\ \nonumber
& = \frac{1}{\sqrt{1 - \beta_t}} \left[ \mathbf{x}_t - \frac{\beta_t}{1 - \bar{\alpha}_t} \left( \mathbf{x}_t - \sqrt{\bar{\alpha}_t} \hat{\mathbf{x}}_0(\mathbf{x}_t) \right) \right] + \sqrt{\beta_t} \bm{\eta}_t \\ \nonumber
& = \frac{1}{\sqrt{1 - \beta_t}} \left[ \left( 1 - \frac{\beta_t}{1 - \bar{\alpha}_t} \right) \mathbf{x}_t + \frac{\sqrt{\bar{\alpha}_t} \beta_t }{1 - \bar{\alpha}_t} \hat{\mathbf{x}}_0(\mathbf{x}_t) \right] + \sqrt{\beta_t} \bm{\eta}_t \\ \nonumber
& = \frac{\left( 1 - \bar{\alpha}_t - \beta_t \right)}{\sqrt{1 - \beta_t} \left(1 - \bar{\alpha}_t \right)} \mathbf{x}_t + \frac{ \sqrt{\bar{\alpha}_t} \beta_t}{\sqrt{1 - \beta_t} \left(1 - \bar{\alpha}_t \right)} \hat{\mathbf{x}}_0(\mathbf{x}_t) + \sqrt{\beta_t} \bm{\eta}_t \\ \nonumber
& = \frac{\left( \alpha_t - \bar{\alpha}_t \right)}{\sqrt{\alpha_t} \left(1 - \bar{\alpha}_t \right)} \mathbf{x}_t + \frac{ \sqrt{\bar{\alpha}_t} \beta_t}{\sqrt{\alpha_t} \left(1 - \bar{\alpha}_t \right)} \hat{\mathbf{x}}_0(\mathbf{x}_t) + \sqrt{\beta_t} \bm{\eta}_t \\ \nonumber
& = \frac{\left( \sqrt{\alpha_t} - \sqrt{\alpha_t}\bar{\alpha}_{t - 1} \right)}{1 - \bar{\alpha}_t } \mathbf{x}_t + \frac{ \sqrt{\bar{\alpha}_{t-1}} \beta_t}{1 - \bar{\alpha}_t} \hat{\mathbf{x}}_0(\mathbf{x}_t) + \sqrt{\beta_t} \bm{\eta}_t \\ \nonumber
& = \frac{\sqrt{\alpha_t} \left( 1 - \bar{\alpha}_{t - 1} \right)}{1 - \bar{\alpha}_t } \mathbf{x}_t + \frac{ \sqrt{\bar{\alpha}_{t-1}} \beta_t}{1 - \bar{\alpha}_t} \hat{\mathbf{x}}_0(\mathbf{x}_t) + \sqrt{\beta_t} \bm{\eta}_t\:,
\end{align}
which is equivalent to the second formula in \eqref{eqn: forward and reverse SDE discretized}.

\section{Limitations \& Future Work}\label{sec: appen limitations}



The stated conditions and proposed sampler are limited to the non-blind setting, as SITCOM assumes full access to the forward model, unlike works such as \citep{chung2023parallel}, which perform both image restoration and forward model estimation. Additionally, SITCOM uses 2D diffusion models for 2D image reconstruction/restoration only. Extending SITCOM to the 3D setting may require specialized regularization and modifications. 

For future work, we aim to address the above two settings (3D reconstruction and/or blind settings) and explore its applicability in medical 3D image reconstruction.

\section{Impact of Each Individual Consistency on SITCOM's Performance}\label{sec: append impact of each consistency}

Here, we show the effect of removing each of the individual consistencies to demonstrate their necessity.

\subsection{Impact of the Backward Consistency}\label{sec: append impact of backward consis}

In SITCOM, a key contribution is the imposition of \emph{step-wise} backward consistency, which allows us to fully exploit the implicit regularization provided by the network. Removing stepwise backward consistency is equivalent to removing the network regularization constraint $\mathbf{\hat{x}}_0 = f(\mathbf{v}_t;\theta,t)$ from SITCOM. More specifically, in our algorithm, the optimization variable becomes the output of the pre-trained network for which, given some $\mathbf{x}_t$, the steps at sampling time $t$ are:  
\begin{subequations}\label{eq:SITCOM without BC}
\begin{align}
\hat{\mathbf{x}}'_0 &= \text{argmin}_{\mathbf{x}'} \big\| \mathcal{A}\big(\mathbf{x}'\big)-\mathbf{y}\big\|_2^2 \label{eq:SITCOM without BC S1 and S2}\:,\\
\mathbf{x}_{t-1} &= \sqrt{\bar{\alpha}_{t-1}}\hat{\mathbf{x}}'_0 + \sqrt{1-\bar{\alpha}_{t-1}}\boldsymbol{\eta}_t, ~~\boldsymbol{\eta}_t \sim \mathcal{N}(\mathbf{0},\mathbf{I})  \:, \label{eq:SITCOM without BC S3}   
\end{align}
\end{subequations}
where the initialization of ${\mathbf{x}}'$ in \eqref{eq:SITCOM without BC S1 and S2} is set to $f(\mathbf{x}_t;t,\boldsymbol{\epsilon}_\theta)$. We call the resulting algorithm as ``SITCOM without backward consistency''. In \eqref{eq:SITCOM without BC S1 and S2}, we already obtain an estimated image. Therefore, there is no need for a second step. Furthermore, as the pre-trained model here is only used for initializing the optimization in \eqref{eq:SITCOM without BC S1 and S2}, SITCOM without backward consistency does not require to back-propagate through the pre-trained DM network which will subsequently result in reduced run-times. However, the steps in \eqref{eq:SITCOM without BC} require more sampling iterations to converge to decent PSNRs. This statement will be supported in the next subsubsections. 

\textbf{We remark that this reduced algorithm is essentially equivalent to the Resample Algorithm \cite{songsolving} in the \emph{pixel} space. In other words, Resample in the pixel space resembles SITCOM without backward consistency.}

In this subsection, we will show the impact of the step-wise DM network regularization of the backward consistency by examining the results of (\textit{i}) all sampling steps, and (\textit{ii}) intermediate steps. 

\subsubsection{Full Sampling Steps Study}

Here, we run SITCOM vs. SITCOM without backward consistency (i.e., Algorithm 1 vs. the steps in \eqref{eq:SITCOM without BC}) for the task of super resolution. We report the average PSNR and run-time (seconds) of 20 FFHQ test images with $K=1000$ and $N \in \{20,100\}$. We select a large $K$ to allow the optimizer to converge. Results are given in Table~\ref{tab: impact of backward full sampling steps}. As observed, SITCOM without backward consistency indeed requires less run-time but the PSNRs are significantly lower than SITCOM (more than 5 dB for both cases) even if run until convergence (i.e., with very large $K$).

\begin{table*}[htp]
\small
    \centering
    \resizebox{0.7\textwidth}{!}{\begin{tabular}{ccccc}
    \toprule
    \multirow{2}{*}{Method} & \multicolumn{2}{c}{$N=20$} & \multicolumn{2}{c}{$N=100$} \\
     &  PSNR & Run-time & PSNR & Run-time \\
    \midrule
     SITCOM & 31.34 & 608.2  & 31.39 & 2418.2    \\
 SITCOM without Backward Consistency & 25.97 & 372.4 & 26.1 & 1405.7    \\
    \bottomrule

    \end{tabular}}
    \vspace{-0.2cm}
    \caption{\small{\textcolor{black}{Impact of the backward consistency in SITCOM: Average PSNR and run-time results (in seconds) of 20 FFHQ test images at using different values of $N$ with $K=1000$, reported by running Algorithm 1 with the optimization in \eqref{eqn: SITCOM step 1} (first row) vs. the optimization in \eqref{eq:SITCOM without BC S1 and S2} (second row).} }}
    \label{tab: impact of backward full sampling steps}
\end{table*}

\subsubsection{Intermediate Sampling Steps Visualizations}\label{sec: append impact of backward consis interm}

\begin{figure*}[htp!]
\centering
\includegraphics[width=12cm]{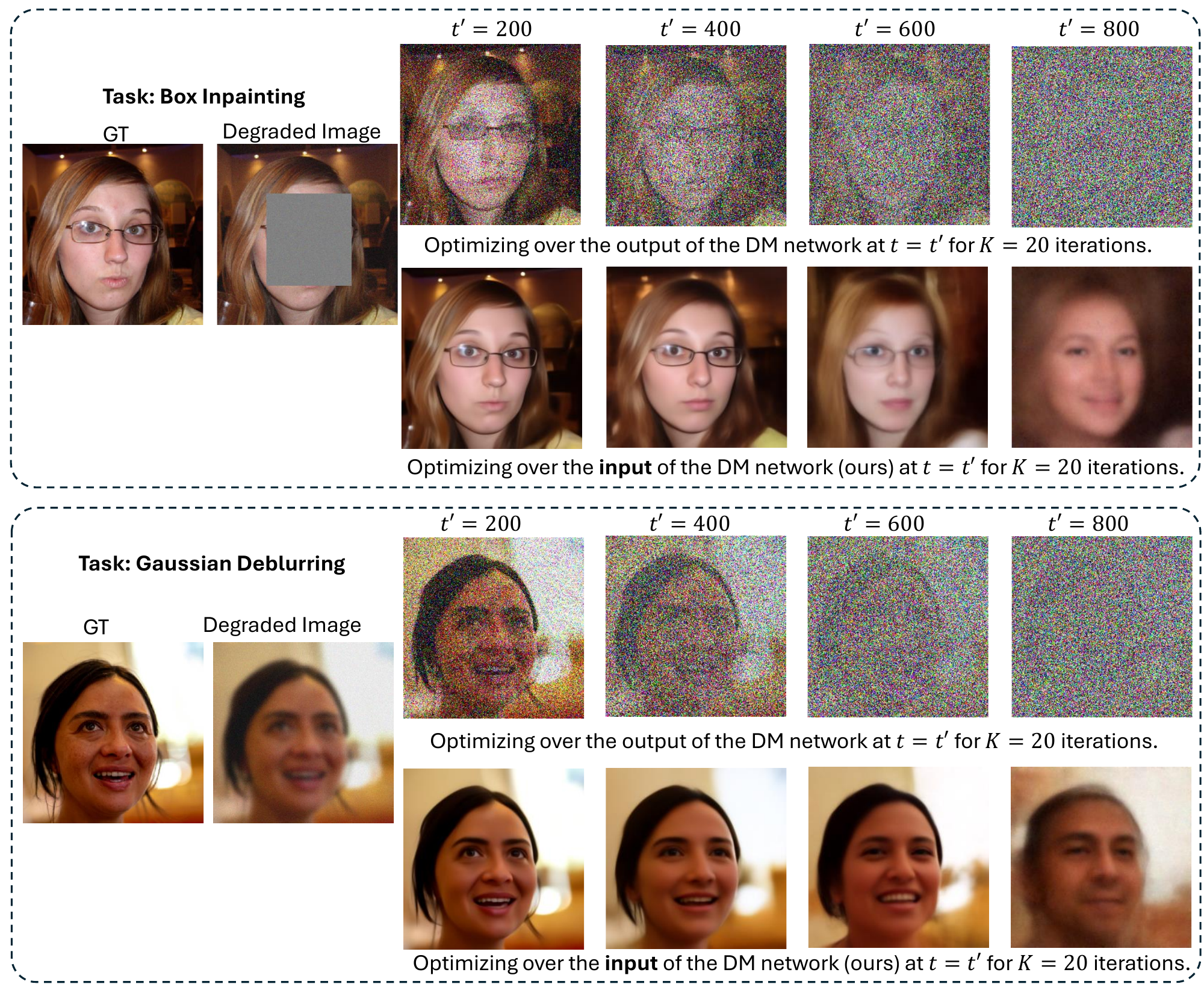}
\vspace{-0.2cm}
\caption{\small{Results of applying optimization-based measurement consistency, for which the optimization variable is the DM output (resp. input), are shown in the first (resp. second) row for each task: Box Inpainting (\textit{top}) and Gaussian Deblurring (\textit{bottom}).}} 
\label{fig: progess at time prime}
\vspace{-0.3cm}
\end{figure*}

First, for the box-painting task, we compare SITCOM with optimizing over the output of the DM network (i.e., the steps in \eqref{eq:SITCOM without BC}) at time steps $t' \in \{200, 400, 600\}$. More specifically, for each case (selection of $t'$), we start from $t = T$ and run SITCOM with a step size of $\lfloor\frac{T}{N}\rfloor$. At $t = t'$, given $\mathbf{x}_{t'}$, we perform two separate optimizations with intializing the optimization variable as $\mathbf{x}_{t'}$: one iteratively over the DM network input (i.e., \eqref{eqn: SITCOM step 1}) and another iteratively over the DM network output (i.e., \eqref{eq:SITCOM without BC S1 and S2}), both running until convergence (i.e., when the loss stops decreasing). 
Figure~\ref{fig: progress t prime out100 vs inpt20} shows the results at different time steps. The consistency between the ground truth and the unmasked regions of the estimated images suggest the convergence of the measurement consistency. As observed, SITCOM produces significantly less artifacts in the masked region when compared to optimizing over the output. This is evident both at earlier time steps ($t'=600$) and later steps ($t'=400$ and $t'=200$).

For the second experiment, the goal is to show that SITCOM requires much smaller number of optimization steps to remove the noise as compared to SITCOM without backward consistency. The results are given in Figure~\ref{fig: progess at time prime}, where we repeat the above experiment with two tasks: Box-inpainting (\textit{top}) and Gaussian Deblurring (\textit{bottom}), this time using a fixed number of optimization steps for both SITCOM, and SITCOM without backward consistency. Specifically, we run SITCOM from $t=T$ to $t=t'+1$. Then, we apply $K=20$ iterations (the setting in SITCOM) in \eqref{eqn: SITCOM step 1}, and $K=20$ for \eqref{eq:SITCOM without BC S1 and S2} where measurement noise is $\sigma_{\mathbf{y}} = 0.05$. As shown, compared to SITCOM without backward consistency, SITCOM significantly reduces noise across all considered $t'$, underscoring the effect of the proposed step-wise network regularization for backward diffusion consistency.

\subsection{Impact of the iterative step-wise measurement/data consistency} \label{sec: append impact of meas consis}

While all DM-IP methods aim to impose data consistency in the final solution, not all enforce \emph{step-wise} data consistency. This step-wise consistency requires that the estimate $\hat{\mathbf{x}}_0$ computed using Tweedie's formula \emph{at each intermediate step} remains consistent with the data. In SITCOM, step-wise measurement consistency is enforced by minimizing the data misfit term $\|\mathcal{A}(\hat{\mathbf{x}}_0(\mathbf{v}_t)) - \mathbf{y}\|_2^2$ at each step. However, if the minimizer is not found and only one gradient update of $\|\mathcal{A}(\hat{\mathbf{x}}_0(\mathbf{v}_t)) - \mathbf{y}\|_2^2$ with respect to $\mathbf{v}_t$ is performed (as in DPS), step-wise data consistency is no longer enforced, although global data consistency at $t=0$ is still maintained. Therefore, we refer to SITCOM with $K=1$ (i.e., performing only one gradient update of the data misfit) as SITCOM without data consistency and compare it to SITCOM itself. Note that the former is also equivalent to DPS plus resampling.

More specifically, we compare SITCOM with $K=1$ vs. $K=20$  (converged)  for various sampling steps ($N$). The study uses 20 FFHQ images with $\sigma_{\mathbf{y}} = 0.05$. Average PSNR and run-time (in seconds) for the tasks SR and NDB are given in Table~\ref{tab: impact of step wise data consis}.


%
\begin{table*}[htp]
\small
    \centering
    \resizebox{0.71\textwidth}{!}{\begin{tabular}{ccccccccc}
    \toprule
    \multirow{3}{*}{$N$} & \multicolumn{4}{c}{Super Resolution  } & \multicolumn{4}{c}{Non-linear Deblurring } \\
     &   \multicolumn{2}{c}{$K=1$} & \multicolumn{2}{c}{$K=20$} & \multicolumn{2}{c}{$K=1$} & \multicolumn{2}{c}{$K=20$} \\
     &  PSNR & Run-time & PSNR & Run-time & PSNR & Run-time & PSNR & Run-time \\
    \midrule
     10 & 11.28 & 0.99 & 27.75 & 19.12  & 12.12 & 0.99 & 26.92 & 19.02    \\
     20 & 11.41 & 2.11 & 30.40 & 35.20 & 12.30 & 2.30 & 30.27  & 36.05    \\
     30 & 11.44 & 3.12 & 30.88 & 55.04 & 12.32 & 3.40 & 30.72 & 57.05    \\
 100 & 17.04 & 9.14 & - & -    &16.46 & 9,12 & - & -   \\
 1000 & 25.44 & 60.43 & - & -    &24.89 & 60.45 & - & -    \\
    \bottomrule

    \end{tabular}}
    \vspace{-0.2cm}
    \caption{\small{\textcolor{black}{Impact of the step-wise iterative measurement consistency in SITCOM: Average PSNR/run-time results of 20 FFHQ test images at using different values of $N$ and $K$ to show the impact of using multiple gradient updates in applying the step-wise measurement consistency in SITCOM.} }}
    \label{tab: impact of step wise data consis}
\end{table*}

We observe that $K=20$ (i.e., imposing the step-wise data consistency) yields better result than $K=1$ (i.e., not imposing the step-wise data consistency). In addition, for $K=1$ (i.e., DPS+resampling), achieving decent results requires $N=1000$, leading to longer run-times.

\subsection{Impact of the step-wise forward consistency}\label{sec: append impact of forward consis}

This experiment examines the impact of forward consistency imposed through the resampling in step 3 of SITCOM. Specifically, we compare SITCOM versus SITCOM with replacing the resampling formula with the traditional sampling formula in \eqref{eq:orig_sampling} at each sampling step. 

We use 20 FFHQ test images with $\sigma_{\mathbf{y}} = 0.05$, setting $N=K=20$. Table~\ref{tab: impact of forward consis} shows average PSNR/runtime (in seconds) for one linear and one non-linear task.
\begin{table*}[htp]
\small
    \centering
    \resizebox{0.92\textwidth}{!}{\begin{tabular}{cccccc}
    \toprule
    \multirow{2}{*}{\textbf{Method}} & \multirow{2}{*}{\textbf{Mapping to} $t-1$} & \multicolumn{2}{c}{Super Resolution} & \multicolumn{2}{c}{Non-linear Deblurring} \\
     & & PSNR & Run-time & PSNR & Run-time \\
    \midrule
     SITCOM & Equation \eqref{eqn: SITCOM step 3} & 30.40&35.20& 30.27&36.05    \\
 SITCOM without Forward Consistency (resampling) & Equation \eqref{eq:modified_sampling} & 20.78&35.12&22.42&36.22    \\
    \bottomrule

    \end{tabular}}
    \vspace{-0.2cm}
    \caption{\small{\textcolor{black}{Impact of the forward consistency in SITCOM: Average PSNR and run-time (seconds) results of 20 FFHQ test images using two re-mappings: First is \eqref{eqn: SITCOM step 3} (first row), and second is \eqref{eq:modified_sampling}.} }}
    \label{tab: impact of forward consis}
\end{table*}
As observed, the use of resampling formula is more effective than the standard sampling, demonstrating the effect of the forward consistency in SITCOM. 





\section{Further Insights \& Relation to Existing Studies}
\label{sec: appen further insights and relation to existing studies}

Here, we discuss similarities and differences between DPS and SITCOM which leads to providing further insights into why our proposed sampler can (\textit{i}) achieve competitive quantitative results, and (\textit{ii}) allow for arbitrary sampling iterations.

DPS (Algorithm 2) is known to approximately converge to the posterior distribution $p(\mathbf{x}|\mathbf{y})$ at $t=0$.

\begin{minipage}{0.5\textwidth} 
\rule{\textwidth}{1pt} 
\raggedright \textbf{Algorithm 2: DPS - Gaussian \citep{chungdiffusion}} 
\rule{\textwidth}{0.6pt}
\begin{algorithmic}[1]
\Require{$N$, $\mathbf{y}$, $\{\zeta_i\}_{i=1}^N$, $\{\tilde{\sigma}_i\}_{i=1}^N$}
\State $\mathbf{x}_N \sim \mathcal{N}(0, I)$
\For{$i = N - 1$ to $0$}
    \State $\hat{\boldsymbol{s}} \gets \boldsymbol{s}_\theta(\mathbf{x}_i, i)$
    \State $\hat{\mathbf{x}}_0 \gets \frac{1}{\sqrt{\bar{\alpha}_i}} \left( \mathbf{x}_i + (1 - \bar{\alpha}_i) \hat{\boldsymbol{s}} \right)$
    \State $\mathbf{z} \sim \mathcal{N}(0, I)$
    \State $\mathbf{x}'_{i-1} \gets \frac{\sqrt{\alpha_i} (1 - \bar{\alpha}_{i-1})}{1 - \bar{\alpha}_i} \mathbf{x}_i + \frac{\sqrt{\bar{\alpha}_{i-1} \beta_i}}{1 - \bar{\alpha}_i} \hat{\mathbf{x}}_0 + \tilde{\sigma}_i \mathbf{z}$
    \State $\mathbf{x}_{i-1} \gets \mathbf{x}'_{i-1} - \zeta_i \nabla_{\mathbf{x}_i} \| \mathbf{y} - \mathcal{A}(\hat{\mathbf{x}}_0) \|_2^2$
\EndFor
\State \Return $\hat{\mathbf{x}}$
\end{algorithmic}
\rule{\textwidth}{0.6pt}
\end{minipage}
\begin{minipage}{0.5\textwidth} 
\rule{\textwidth}{1pt} 
\raggedright \textbf{Algorithm 3: DPS with resampling} 
\rule{\textwidth}{0.6pt}
\begin{algorithmic}[1]
\Require{$N$, $\mathbf{y}$, $\{\zeta_i\}_{i=1}^N$, $\{\tilde{\sigma}_i\}_{i=1}^N$}
\State $\mathsf{x}_N \sim \mathcal{N}(0, I)$
\For{$i = N - 1$ to $0$}
    \State $\hat{\boldsymbol{s}} \gets \boldsymbol{s}_\theta(\mathbf{x}_i, i)$
    \State $\hat{\mathbf{x}}_0 \gets \frac{1}{\sqrt{\bar{\alpha}_i}} \left( \mathbf{x}_i + (1 - \bar{\alpha}_i) \hat{\boldsymbol{s}} \right)$
    \State $\mathbf{z} \sim \mathcal{N}(0, I)$
     \State $\mathbf{v}_{i} \gets  \mathbf{x}_i -\zeta_i \nabla_{\mathbf{x}_i} \| \mathbf{y} - \mathcal{A}(\hat{\mathbf{x}}_0) \|_2^2 $ 
    \State $\mathbf{x}'_{i-1} \gets  \sqrt{{\bar\alpha_{i-1}}} \hat{\mathbf{x}}_0(\mathbf{v}_i)  + \sqrt{1-\bar\alpha_{i-1}} \mathbf{z}$
\EndFor
\State \Return $\hat{\mathbf{x}}$
\end{algorithmic}
\rule{\textwidth}{0.6pt}
\end{minipage}
\\
\begin{proposition}\label{prop: prop}
    SITCOM with $K=1$ is DPS with the resampling formula in \eqref{eq:resample}.
\end{proposition}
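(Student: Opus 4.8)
The plan is to verify the equivalence by matching Algorithm~\ref{alg: SITCOM} with $K=1$ against Algorithm 3 (DPS with resampling) line by line. The one decisive observation is that, because line 2 of Algorithm~\ref{alg: SITCOM} initializes $\mathbf{v}^{(0)}_i \leftarrow \mathbf{x}_i$, the gradient of the regularizer $\lambda\|\mathbf{x}_i - \mathbf{v}_i\|_2^2$ vanishes exactly at the point where the single ($k=1$) gradient step is evaluated; everything else is bookkeeping.

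First I would write out the lone inner update (line 4 with $k=1$). Splitting the gradient of the objective in \eqref{eqn: SITCOM step 1} into its data-fidelity and regularization parts gives
\[
\nabla_{\mathbf{v}_i}\Big[\|\mathcal{A}(f(\mathbf{v}_i;i\Delta t,\bm{\epsilon}_\theta))-\mathbf{y}\|_2^2 + \lambda\|\mathbf{x}_i-\mathbf{v}_i\|_2^2\Big] = \nabla_{\mathbf{v}_i}\|\mathcal{A}(f(\mathbf{v}_i;i\Delta t,\bm{\epsilon}_\theta))-\mathbf{y}\|_2^2 - 2\lambda(\mathbf{x}_i - \mathbf{v}_i).
\]
Evaluating at $\mathbf{v}_i = \mathbf{v}^{(0)}_i = \mathbf{x}_i$ annihilates the second term, so that
\[
\hat{\mathbf{v}}_i = \mathbf{v}^{(1)}_i = \mathbf{x}_i - \gamma\,\nabla_{\mathbf{v}_i}\|\mathcal{A}(f(\mathbf{v}_i;i\Delta t,\bm{\epsilon}_\theta))-\mathbf{y}\|_2^2\Big|_{\mathbf{v}_i=\mathbf{x}_i}.
\]

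Next I would identify this with the DPS gradient step. Since $f(\mathbf{v}_i;i\Delta t,\bm{\epsilon}_\theta)$ is precisely Tweedie's formula \eqref{eqn: Tweedies}, i.e.\ $f(\mathbf{x}_i;i\Delta t,\bm{\epsilon}_\theta)=\hat{\mathbf{x}}_0(\mathbf{x}_i)$, the gradient evaluated at $\mathbf{v}_i=\mathbf{x}_i$ coincides with $\nabla_{\mathbf{x}_i}\|\mathbf{y}-\mathcal{A}(\hat{\mathbf{x}}_0(\mathbf{x}_i))\|_2^2$. Hence, under the identification $\zeta_i = \gamma$, the expression for $\hat{\mathbf{v}}_i$ is exactly line 6 of Algorithm 3. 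The remaining lines then align verbatim: line 8 of Algorithm~\ref{alg: SITCOM} forms $\hat{\mathbf{x}}'_0 = f(\hat{\mathbf{v}}_i;i\Delta t,\bm{\epsilon}_\theta) = \hat{\mathbf{x}}_0(\mathbf{v}_i)$, which is precisely the argument appearing in line 7 of Algorithm 3, and line 9 applies the resampling formula \eqref{eq:resample}, matching line 7 of Algorithm 3. Therefore SITCOM with $K=1$ reduces exactly to DPS with resampling.

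There is no hard analytical obstacle here; the proof is a line-by-line matching of two procedures. The only point requiring care---and the one I would state explicitly---is the vanishing of the regularizer gradient at initialization, which is what guarantees that the single SITCOM step contributes no spurious $-2\lambda(\mathbf{x}_i-\mathbf{v}_i)$ term beyond the pure DPS data-fidelity gradient. I would also remark that with $K=1$ the stopping criterion in lines 5--6 is vacuous (at most one update is ever performed), so it plays no role in the equivalence.
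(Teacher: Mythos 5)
Your proof is correct and takes essentially the same route as the paper's: a line-by-line identification of Algorithm~\ref{alg: SITCOM} with $K=1$ against Algorithm~3, with the gradient step matching line~6 (under $\zeta_i=\gamma$) and the resampling step \eqref{eq:resample} matching line~7. You are in fact more careful than the paper on one point---explicitly noting that the $\lambda\|\mathbf{x}_i-\mathbf{v}_i\|_2^2$ term contributes no gradient at the initialization $\mathbf{v}^{(0)}_i=\mathbf{x}_i$, which the paper's proof silently omits---while you take for granted the paper's preliminary step of obtaining Algorithm~3 from vanilla DPS (Algorithm~2) via the substitution $\mathbf{x}_i\approx\sqrt{\bar\alpha_i}\hat{\mathbf{x}}_0+\sqrt{1-\bar\alpha_i}\mathbf{z}$ and the reordering of the gradient and resampling operations.
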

\begin{proof}
First, we replace the \( \mathbf{x}_i \) in line 6 of DPS (Algorithm 2) with its approximation 
\( \mathbf{x}_i \approx \sqrt{\bar\alpha_{i}} \hat{\mathbf{x}}_0 + \sqrt{1 - \bar\alpha_{i}} \mathbf{z} \), 
where \( \mathbf{z} \sim \mathcal{N}(0, \mathbf{I}) \). Under this substitution, line 6 becomes 
\( \mathbf{x}'_{i-1} = \sqrt{\bar\alpha_{i-1}} \hat{\mathbf{x}}_0 + \sqrt{1 - \bar\alpha_{i-1}} \mathbf{z} \), 
which corresponds to the resampling formula. The resulting algorithm is referred to as 
\textbf{DPS with resampling} (DPS+resampling).

To better illustrate its connection to SITCOM, we rename the variable \( \mathbf{x}'_{i-1} \) to \( \mathbf{v}_i \) and swap the resampling formula  with line 7 of DPS. These changes do not alter the algorithm's outcome. The resulting is an equivalent form of DPS+resampling and is presented in 
\textbf{Algorithm 3}.

We observe that line 6 in the Algorithm 3 corresponds to the gradient descent step for the objective function \( \|\mathbf{y} - \mathcal{A}(\hat{\mathbf{x}}_0(\mathbf{x}_i))\|_2^2 \) with respect to \( \mathbf{x}_i \), and line 7 is the resampling. Compared to Algorithm 1, we see that DPS+resampling is equivalent to SITCOM with \( K = 1 \).
\end{proof}
\begin{remark}
From Proposition 1, we observe that each iteration of DPS performs a single gradient update on the objective function \( \|\mathbf{y} - \mathcal{A}(\hat{\mathbf{x}}_0(\mathbf{x}_i))\|_2^2 \). By grouping \( K \) consecutive sampling steps together, the process effectively performs \( K \) gradient updates. These updates correspond to very similar objective functions due to the continuity of the diffusion model, where \( \boldsymbol{s}_{\theta}(\mathbf{x}, t) \approx \boldsymbol{s}_{\theta}(\mathbf{x}, t') \) for \( t \approx t' \).

Specifically, within each group of \( K \) consecutive sampling steps, the associated time steps 
are closely spaced, say they are all within an interval $(t_1,t_2)$ with $t_2-t_1 \ll 1$. Due to the continuity of the diffusion model, the \( \boldsymbol{s}_{\theta}(\mathbf{x}, t) \) term in \( \hat{\mathbf{x}}_0 \) can be approximated 
by its value at the fixed time \( t_2 \), i.e., \( \boldsymbol{s}_{\theta}(\mathbf{x}, t) \approx \boldsymbol{s}_{\theta}(x, t_2) \) for all 
\( t \in (t_1, t_2) \). With this approximation, the objective function \( \|\mathbf{y} - \mathcal{A}(\hat{\mathbf{x}}_0(\mathbf{x}_i))\|_2^2 \) depends solely on \( \mathbf{x}_i \), rather than both \( \mathbf{x}_i \) and \(t\). 

This simplification implies that the \( K \) gradient updates now act on the same objective function. As a result, optimization can be accelerated using Adam. The use of Adam is a key reason why SITCOM achieves faster convergence compared to DPS, as the latter relies solely on gradient descent. Finally, as shown in ReSample \cite{songsolving}, resampling can be performed rather sparsely. In SITCOM, resampling is performed only once every K iterations, further enhancing its efficiency. 

\textcolor{black}{It is important to note that this explanation provides a high-level intuition to aid in understanding SITCOM, rather than a rigorous proof of its performance}. 
\end{remark}

\section{Latent SITCOM}\label{sec: appen latent SITCOM}

In this section, we extend SITCOM to latent diffusion models. Define $\mathcal{E}:\mathbb{R}^n \rightarrow \mathbb{R}^r$ (parameterized by $\phi$) and $\mathcal{D}:\mathbb{R}^r \rightarrow \mathbb{R}^n$ (parameterized by $\psi$) as the pre-trained encoder and decoder, respectively, with $r\ll n$. Let $\epsilon: \mathbb{R}^r \times \{0,\dots,T\} \rightarrow \mathbb{R}^r$ (parameterized by $\theta'$) be a pre-trained DM in the latent space. We note that $\mathcal{E}$ is only needed during training. At inference, given some $\mathbf{z}_t \in \mathbb{R}^r$, the steps in latent SITCOM are formulated as 
\begin{align}
\begin{aligned}\label{eqn: latent SITCOM step 1}
\hat{\mathbf{w}}_t := \argmin_{\mathbf{w}'_t} \Big\| \mathcal{A}\Big(\mathcal{D}_{\psi}\big(\frac{1}{\sqrt{\bar{\alpha}_t }}\big[\mathbf{w}'_{t}-\sqrt{1-\bar{\alpha}_t}\epsilon_{\theta'}(\mathbf{w}'_{t},t)\big]\big)\Big)-\mathbf{y}\Big\|_2^2+\lambda\Big\|\mathbf{w}'_{t}-\mathbf{z}_{t}\Big\|_2^2\:,
\end{aligned}\tag{$\texttt{SL}_1$}
\end{align}
\begin{align}
\begin{aligned}\label{eqn: latent SITCOM step 2}
\hat{\mathbf{z}}'_0 := \frac{1}{\sqrt{\bar{\alpha}_t }}\big[\hat{\mathbf{w}}_t-\sqrt{1-\bar{\alpha}_t}\epsilon_{\theta'}(\hat{\mathbf{w}}_t,t)\big]\:,
\end{aligned}\tag{$\texttt{SL}_2$}
\end{align}
\begin{align}
\begin{aligned}\label{eqn: latent SITCOM step 3}
\mathbf{z}_{t-1} = \sqrt{\bar{\alpha}_{t-1}}\hat{\mathbf{z}}'_0 + \sqrt{1-\bar{\alpha}_{t-1}}\zeta_t, ~~\zeta_t \sim \mathcal{N}(\mathbf{0},\mathbf{I}_r)\:.
\end{aligned}\tag{$\texttt{SL}_3$}
\end{align}
\begin{table*}[htp]
\small
    \centering
    \resizebox{0.9\textwidth}{!}{\begin{tabular}{cccccccc}
    \toprule
    \multirow{2}{*}{\textbf{Task}} & \multirow{2}{*}{\textbf{Method}} & \multicolumn{3}{c}{\textbf{FFHQ}} &  \multicolumn{3}{c}{\textbf{ImageNet}} \\
     &  & PSNR ($\uparrow$) & LPIPS ($\downarrow$) & run-time ($\downarrow$) & PSNR ($\uparrow$) & LPIPS ($\downarrow$) & run-time ($\downarrow$) \\
    \midrule
    \multirow{3}{*}{Super Resolution 4$\times$}
    & Latent-DAPS \citep{zhang2024improving} & 27.48  & 0.192 & 84.24 & 25.06  & 0.343  & 125.40  \\
      & ReSample  \citep{songsolving} & 23.33  & 0.392 & 194.30   & 22.19   & 0.370 & 265.30  \\
    & Latent-SITCOM (ours) & \textbf{27.98}  & \textbf{0.142} & \textbf{65.00}   & \textbf{26.35}  & \textbf{0.232} & \textbf{123.20}  \\

    \midrule
    \multirow{3}{*}{Box In-Painting}
    & Latent-DAPS \citep{zhang2024improving} & 23.99  & 0.194 & 83.00   & 17.19   & 0.340 & 115.20  \\

    & ReSample \citep{songsolving} & 20.06  & 0.184 & 189.45   & 18.29   & \textbf{0.262} & 226.24  \\
    & Latent-SITCOM (ours) & \textbf{24.22} & \textbf{0.177} & \textbf{58.20}   & \textbf{20.88}   & 0.314 & \textbf{114.00}  \\

    \midrule
    \multirow{2}{*}{Random In-Painting} 
    & Latent-DAPS \citep{zhang2024improving} & 30.71  & 0.146 & 93.65   & 27.89  & 0.202 & 128  \\

    & Latent-SITCOM (ours) & \textbf{31.05}  & \textbf{0.135} & \textbf{63.24}   & \textbf{28.44}   & \textbf{0.197} & \textbf{118.30}  \\

    \midrule
    \multirow{3}{*}{Gaussian Deblurring}
    & Latent-DAPS \citep{zhang2024improving} & 28.03  & 0.232 & 88.420   & 25.34  & 0.341 & 135.24  \\

    & ReSample   \citep{songsolving} & 26.42  & 0.255 & 186.40   & \textbf{25.97} & \textbf{0.254} & 234.10  \\
    & Latent-SITCOM (ours) & \textbf{28.21} & \textbf{0.223} & \textbf{78.30}   & 25.72   & 0.316 & \textbf{132.24}  \\

    \midrule
    \multirow{3}{*}{Motion Deblurring} 
    & Latent-DAPS \citep{zhang2024improving} & 27.32  & 0.278 & 118.2   & 26.85  & 0.349 & 139.2  \\
    & ReSample  \citep{songsolving} & 27.41  & 0.198  & 236.2  & 26.94  & 0.227 & 264.2  \\
    & Latent-SITCOM (ours) & \textbf{27.74}  & \textbf{0.182} & \textbf{64.25}   & \textbf{27.25}  & \textbf{0.202} & \textbf{121.4}  \\

    \midrule
    \multirow{3}{*}{\underline{Phase Retrieval}} 
    & Latent-DAPS \citep{zhang2024improving} & \textbf{29.16}  & \textbf{0.199} & \textbf{80.92} & \textbf{20.54}  & 0.361 & \textbf{105.65} \\
    & ReSample  \citep{songsolving} & 21.60  & 0.406 & 204.5   & 19.24  & 0.403 & 234.25   \\
    & Latent-SITCOM (ours) & 28.14  & 0.241 & 114.52   & 20.45  & \textbf{0.326} & 156.2 \\

    \midrule
    \multirow{3}{*}{\underline{Non-Uniform Deblurring}} 
    & Latent-DAPS \citep{zhang2024improving} & 28.15  & 0.211 &  89.24  & 25.34  & 0.314 & 130.2  \\
    & ReSample \citep{songsolving} & 28.24  & 0.185 & 172.4   & 26.20  & 0.206 & 215.2  \\
    & Latent-SITCOM (ours) & \textbf{28.41} & \textbf{0.171} & \textbf{112.4}  & \textbf{26.48} & \textbf{0.201} & \textbf{134.5}  \\

    \midrule
    \multirow{3}{*}{\underline{High Dynamic Range}} 
    & Latent-DAPS \citep{zhang2024improving} & \textbf{26.14}  & 0.221 & \textbf{78.2}   & 23.78 & 0.261 & \textbf{132.2}  \\
    & ReSample \citep{songsolving} & 25.55  & 0.182  & 178.2 & 25.11   & 0.196 & 215.2  \\
    & Latent-SITCOM (ours) & 25.85  & \textbf{0.174} & 102.45   & \textbf{25.67} & \textbf{0.182} & 145.4  \\

    \bottomrule

    \end{tabular}}
    \vspace{-0.2cm}
    \caption{\small{Average PSNR, LPIPS, and run-time (seconds) results of our method and other latent-space baselines over 100 FFHQ and 100 ImageNet test images. The measurement noise setting is $\sigma_{\mathbf{y}} = 0.05$. }}
    \label{tab: main all latent}
\end{table*}

Table~\ref{tab: main all latent} presents the results of Latent-SITCOM for the 8 image restoration tasks as compared to ReSample \citep{songsolving} and Latent-DAPS \citep{zhang2024improving}. All methods used the pre-trained latent model from ReSample\footnote{\tiny{\url{https://github.com/soominkwon/resample}}}. 

For SITCOM, we used $N=20$ and $\lambda=0$ and set $K=30$ (resp. $K=50$) for linear (resp. non-linear) problems. Default hyper-parameter settings were used for ReSample and Latent-DAPS for each task as recommended in the authors' papers and code base. 

As observed, Latent-SITCOM achieves better results than ReSample on all tasks and on both metrics (PSNR and LPIPS). When compared to Latent-DAPS, SITCOM achieves improved results on all tasks other than phase retrieval where we under-perform by 1dB on FFHQ and nearly 0.1dB on ImageNet.

In terms of run-time, Latent-SITCOM generally requires less time when compared to ReSample. As compared to Latent-DAPS, our run-time is generally less. However, there are cases where Latet-DAPS is faster than our method. An example of this is MDB on FFHQ. 

In summary, in this section, we show that SITCOM can be extended to the latent space, and our results are either better or very competitive when compared to two recent state-of-the-art latent-space baselines.














\section{SITCOM with ODEs for Step (2) in \eqref{eqn: SITCOM step 2}}\label{sec: appen SITCOM ODE}
The Tweedie's network denoiser  $f(x_t;t,\epsilon_\theta)$ in \eqref{eqn: SITCOM step 2} can be replaced by other denoisers without affecting the main idea of imposing the three consistencies in SITCOM. 
In this section, we conduct a study for which we replace it  with an ODE solver at each sampling step $t$. Specifically, given $\hat{\mathbf{v}}_t$ from $\eqref{eqn: SITCOM step 1}$, we obtain an estimated image $\hat{\mathbf{x}}'_0$ by solving following ODE with initial values $\hat{\mathbf{v}}_t$ and $t$. 
\begin{equation}\label{eqn: SITCOM ODE}
    \frac{d\mathbf{x}_t}{dt} = -\dot{\sigma}_t \sigma_t \nabla_{\mathbf{x}_t} \log p(\mathbf{x}_t; \sigma_t)\:.
\end{equation}
In \eqref{eqn: SITCOM ODE}, $\nabla_{\mathbf{x}_t} \log p(\mathbf{x}_t; \sigma_t)\approx \boldsymbol{s}_\theta(\hat{\mathbf{v}}_t, t)$ where we implement the Euler solver in \citep{karras2022elucidating} and select $\sigma_t = t$. This reduces \eqref{eqn: SITCOM ODE} to solving the ODE $d\mathbf{x}_t = -t \boldsymbol{s}_\theta(\hat{\mathbf{v}}_t, t) dt$ using $N_{\textrm{ODE}}$ discrete steps.

\begin{table*}[htp]
\small
\centering
\resizebox{0.84\textwidth}{!}{%
\begin{tabular}{cccccccc}
\toprule
\multirow{2}{*}{\textbf{Task}} & \multirow{2}{*}{\textbf{Method}} & \multicolumn{3}{c}{\textbf{FFHQ}} & \multicolumn{3}{c}{\textbf{ImageNet}} \\
 &  & PSNR ($\uparrow$) & LPIPS ($\downarrow$) & run-time ($\downarrow$) &  PSNR ($\uparrow$) & LPIPS ($\downarrow$) & run-time ($\downarrow$) \\
\midrule
\multirow{2}{*}{Super Resolution 4$\times$} 
    & SITCOM       & 30.68 & 0.142 & \textbf{27.00} & 26.35 & 0.232 & \textbf{67.20} \\
    & SITCOM-ODE   & \textbf{30.86} & \textbf{0.137} & 76.10 & \textbf{26.51} & \textbf{0.228} & 132.00 \\
\midrule
\multirow{2}{*}{Box In-Painting} 
    & SITCOM       & 24.68 & 0.121 & \textbf{21.00} & 21.88 & 0.214 & \textbf{67.20} \\
    & SITCOM-ODE   & \textbf{24.79} & \textbf{0.118} & 81.30 & \textbf{22.44} & \textbf{0.208} & 152.00 \\
\midrule
\multirow{2}{*}{Random In-Painting} 
    & SITCOM       & 32.05 & 0.095 & \textbf{27.00} & 29.60 & 0.127 & \textbf{68.40} \\
    & SITCOM-ODE   & \textbf{32.18} & \textbf{0.091} & 93.50 & \textbf{30.11} & \textbf{0.114} & 128.30 \\
\midrule
\multirow{2}{*}{Gaussian Deblurring} 
    & SITCOM       & 30.25 & 0.135 & \textbf{27.60} & 27.40 & 0.236 & \textbf{66.00} \\
    & SITCOM-ODE   & \textbf{30.42} & \textbf{0.132} & 85.10 & \textbf{27.87} & \textbf{0.232} & 134.10 \\
\midrule
\multirow{2}{*}{Motion Deblurring} 
    & SITCOM       & 30.34 & 0.148 & \textbf{30.00} & 28.65 & 0.189 & \textbf{88.80} \\
    & SITCOM-ODE   & \textbf{30.54} & \textbf{0.145} & 112.34 & \textbf{28.81} & \textbf{0.184} & 139.00 \\
\midrule
\multirow{2}{*}{Phase Retrieval} 
    & SITCOM    
                     & 30.67 & 0.122 & \textbf{28.50} & 25.45 & 0.246 & \textbf{84.00} \\
    & SITCOM-ODE 
                     & \textbf{31.10} & \textbf{0.109} & 80.40 & \textbf{25.96} & \textbf{0.242} & 138.50 \\
\midrule
\multirow{2}{*}{Non-Uniform Deblurring} 
    & SITCOM       & 30.12 & 0.145 & \textbf{33.45} & 28.78 & 0.16 & \textbf{75.00} \\
    & SITCOM-ODE   & \textbf{30.31} & \textbf{0.141} & 85.25 & \textbf{28.86} & \textbf{0.152} & 129.20 \\
\midrule
\multirow{2}{*}{High Dynamic Range} 
    & SITCOM & 27.98 & 0.158 & \textbf{31.20} & 26.97 & 0.177 & \textbf{92.40} \\
    & SITCOM-ODE    & \textbf{28.12} & \textbf{0.155} & 75.20 & \textbf{27.14} & \textbf{0.164} & 132.00 \\
\bottomrule
\end{tabular}}
\vspace{-0.2cm}
\caption{\small{Average PSNR, LPIPS, and run-time results of SITCOM with Tweedie's formula (i.e., Algorithm~\ref{alg: SITCOM}) vs.\ SITCOM with the ODE solver for \eqref{eqn: SITCOM ODE} instead of Tweedie's in \eqref{eqn: SITCOM step 2}. Results are averaged over 100 FFHQ and 100 ImageNet test images with measurement noise level of $\sigma_{\mathbf{y}} = 0.05$. For SITCOM ODE, similar to DAPS \citep{zhang2024improving}, we use $N_\textrm{ODE} = 5$.}}
\label{tab:main_all SITCOM ODE results}
\end{table*}

Table~\ref{tab:main_all SITCOM ODE results} presents a comparison study between SITCOM with Tweedie's formula vs. SITCOM with ODE in terms of PSNR, LPIPS, and run-time. 

In general, we observe that the run-time for SITCOM-ODE is significantly longer than the required run-time for SITCOM. Specifically, the run-time is either doubled (or, in some cases, tripled) for a PSNR gain of less than 1 dB. These results show the trade-off between PSNR and run-time when ODEs are used in SITCOM, as discussed in Remark~\ref{rem: ode and trade off}. 

It is important to note that SITCOM-ODE demonstrates greater stability for the task of PR. In particular, we observe fewer failures when selecting the best result from 4 runs for each image. See the next section for further details. 


\section{Discussion on Phase Retrieval}\label{sec: appen phase ret failure cases}

As discussed in our experimental results section, for the phase retrieval task, we report the best results from 4 independent runs, following the convention in \citep{chungdiffusion,zhang2024improving}. For the phase retrieval results of Table~\ref{tab: main FFHQ IMAGENET noise 0.05}, we use this approach across all baselines where the run-time is reported for one run.
\begin{figure*}[htp!]
\centering
\includegraphics[width=12cm]{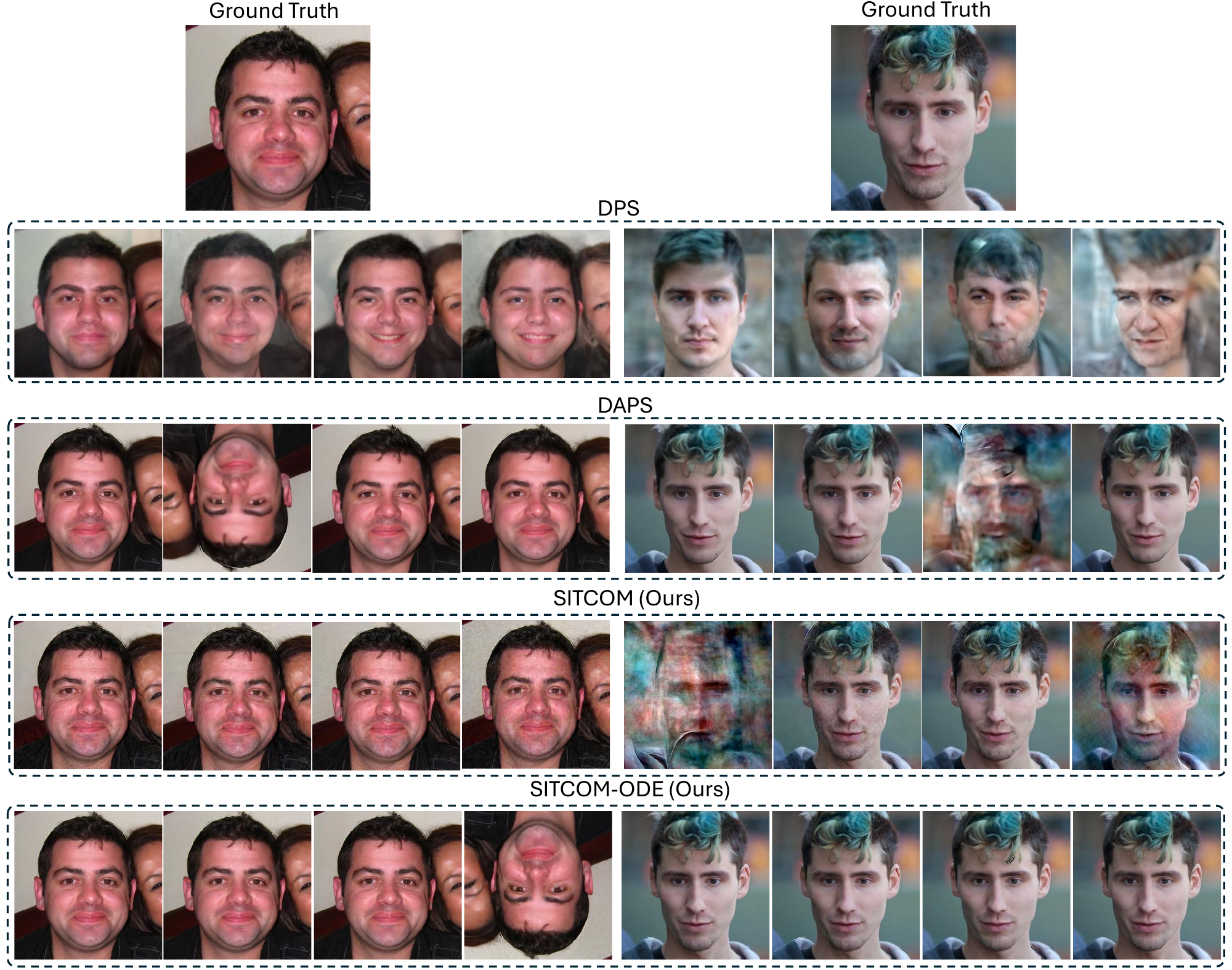}
\vspace{-0.2cm}
\caption{\small{Results of Phase Retrieval on two images (top row) from the FFHQ dataset. Rows 2, 3, 4, and 5 correspond to the results of DPS, DAPS, and SITCOM (ours), and SITCOM-ODE (ours), respectively. }} 
\label{fig: Phase Retreival samples}
\vspace{-0.3cm}
\end{figure*}

The forward model for phase retrieval is adopted from DPS where the inverse problem is generally more challenging compared to other image restoration tasks. This increased difficulty arises from the presence of multiple modes that can yield the same measurements \citep{zhang2024improving}.

In Figure~\ref{fig: Phase Retreival samples}, we present two examples. For each ground truth image, we show four results from which the best one was selected. In the first column, SITCOM avoids significant artifacts, while DAPS and SITCOM-ODE produce one image rotated by 180 degrees. In the second column, both SITCOM and DAPS exhibit one run with severe artifacts whereas SITCOM-ODE shows no image with extreme artifacts. The last image from SITCOM does exhibit more artifacts compared to the second worst-case result from DAPS. Additionally, the DPS results show severe perceptual differences in both cases, with artifacts being particularly noticeable in the second column.


\section{SITCOM MRI Setup and Results}\label{sec: appen SITCOM MRI}

In this section, we extend SITCOM to 2D medical image reconstruction. In particular, we consider the linear task of multi-coil Magnetic Resonance Imaging (MRI) where the forward operator is $\mathcal{A} = \mathbf{M} \mathbf{F} \mathbf{S}$. Here, $\mathbf{M}$ denotes the coil-wise undersampling, $\mathbf{F}$ is the coil-by-coil Fourier transform, and $\mathbf{S}$ represents the sensitivity encoding with multiple coils, which are incorporated into the operator $\mathbf{A}$ for all scenarios, are obtained using the BART toolbox \citep{tamir2016generalized}. We consider two mask patterns and two acceleration factors (Ax). 

For baselines, we compare against DM-based solvers (DDS \citep{chungdecomposed} and Score-MRI \citep{CHUNG2022102479}), supervised learning solvers (Supervised U-Net~\citep{Unet} and E2E Varnet~\citep{sriram2020end}), and dataless approaches (Self-Guided DIP~\citep{liang2024analysis}, Auto-encoded Sequential DIP (aSeqDIP)~\cite{alkhouriimage}, and TV-ADMM \citep{uecker2014espirit}). We use the fastMRI dataset \citep{zbontar2019fastmri}. We note that for training the Supervised U-Net~\citep{Unet} and E2E Varnet~\citep{sriram2020end}, we selected 8,000 training images from the 973 available volumes, omitting the first and last five slices from each volume. For testing, we used 50 images taken from the validation dataset. 

SITCOM and the other two DM-based solvers (DDS and Score-MRI) use the pre-trained model in DDS\footnote{\tiny{\url{https://github.com/HJ-harry/DDS}}}. We used the recommended parameters for baselines as given in their respective papers. For SITCOM, we used $N=50$, $K=10$, and $\lambda = 0$.

PSNR and SSIM results are presented in Table~\ref{tab: MRI PNSR}, while run-time results are provided in Table~\ref{tab: MRI runtime}. As observed, SITCOM achieves the highest PSNR and SSIM values compared to the considered baselines. In terms of run-time, SITCOM requires significantly less time compared to DPS and Score-MRI. When compared to DDS (the best DM-based baseline), SITCOM achieves more than a 1 dB improvement in 3 out of the 4 considered mask patterns/acceleration factors, though it requires nearly twice the run-time. DDS is faster than SITCOM because, in DDS, back-propagation through the pre-trained network is not required, as it is in SITCOM. Visualizations are given in Figure~\ref{fig:SITCOM MRI visualizations}.

\begin{table*}[htp]
\centering
\begin{adjustbox}{max width=\textwidth}
\begin{tabular}{ccccccccccc}
\toprule
\textbf{Mask Pattern} & \textbf{Ax.} & \textbf{TV-ADMM} & \textbf{Supervised} & \textbf{E2E-VarNet} & \textbf{Self-G DIP} & \textbf{aSeqDIP} & \textbf{Score-MRI} & \textbf{DPS} & \textbf{DDS} & \textbf{SITCOM}\\
\midrule
\multirow{4}{*}{Uniform 1D} 
& \multirow{2}{*}{$\times 4$} 
& 27.41{\tiny $\pm0.43$} 
& 31.67{\tiny $\pm0.89$}
& 33.06{\tiny $\pm0.59$}
& 32.59{\tiny $\pm0.12$}
& 32.75{\tiny $\pm0.452$}
& 33.12{\tiny $\pm1.18$}
& 30.56{\tiny $\pm0.66$}
& \underline{34.95}{\tiny $\pm0.74$}
& \textbf{36.33{\tiny $\pm0.37$}}\\
& 
& 0.667{\tiny $\pm0.17$}
& 0.847{\tiny $\pm0.05$}
& 0.854{\tiny $\pm0.12$}
& 0.851{\tiny $\pm0.45$}
& 0.852{\tiny $\pm0.08$}
& 0.849{\tiny $\pm0.08$}
& 0.841{\tiny $\pm0.11$}
& \underline{0.954}{\tiny $\pm0.10$}
& \textbf{0.962{\tiny $\pm0.08$}}\\ 
& \multirow{2}{*}{$\times 8$}
& 25.12{\tiny $\pm2.11$}
& 29.24{\tiny $\pm0.37$}
& 32.03{\tiny $\pm0.35$}
& 32.19{\tiny $\pm0.45$}
& \underline{32.33}{\tiny $\pm0.31$}
& 32.10{\tiny $\pm2.30$}
& 30.33{\tiny $\pm0.23$}
& 31.72{\tiny $\pm1.88$}
& \textbf{32.78{\tiny $\pm0.64$}}\\
& 
& 0.535{\tiny $\pm0.05$}
& 0.784{\tiny $\pm0.05$}
& 0.829{\tiny $\pm0.08$}
& 0.827{\tiny $\pm0.16$}
& 0.831{\tiny $\pm0.14$}
& 0.821{\tiny $\pm0.13$}
& 0.812{\tiny $\pm0.18$}
& \underline{0.876}{\tiny $\pm0.06$}
& \textbf{0.892{\tiny $\pm0.02$}}\\ 
\midrule
\multirow{4}{*}{Gaussian 1D} 
& \multirow{2}{*}{$\times 4$}
& 30.55{\tiny $\pm1.77$}
& 32.78{\tiny $\pm0.66$}
& 34.11{\tiny $\pm0.14$}
& 33.98{\tiny $\pm1.25$}
& 34.28{\tiny $\pm0.95$}
& 34.21{\tiny $\pm1.54$}
& 32.37{\tiny $\pm1.09$}
& \underline{35.24}{\tiny $\pm1.01$}
& \textbf{36.42{\tiny $\pm0.85$}}\\
& 
& 0.785{\tiny $\pm0.06$}
& 0.861{\tiny $\pm0.12$}
& 0.913{\tiny $\pm0.12$}
& 0.904{\tiny $\pm0.15$}
& 0.908{\tiny $\pm0.14$}
& 0.891{\tiny $\pm0.11$}
& 0.832{\tiny $\pm0.15$}
& \underline{0.963}{\tiny $\pm0.15$}
& \textbf{0.969{\tiny $\pm0.13$}}\\ 
& \multirow{2}{*}{$\times 8$}
& 28.08{\tiny $\pm1.28$}
& 31.52{\tiny $\pm1.09$}
& 33.21{\tiny $\pm0.29$}
& 32.81{\tiny $\pm1.42$}
& 32.95{\tiny $\pm0.88$}
& 32.34{\tiny $\pm0.55$}
& 30.52{\tiny $\pm2.32$}
& \underline{33.32}{\tiny $\pm1.66$}
& \textbf{33.99{\tiny $\pm1.29$}}\\
& 
& 0.747{\tiny $\pm0.21$}
& 0.841{\tiny $\pm0.12$}
& 0.868{\tiny $\pm0.18$}
& 0.873{\tiny $\pm0.13$}
& 0.877{\tiny $\pm0.17$}
& 0.853{\tiny $\pm0.11$}
& 0.833{\tiny $\pm0.16$}
& \underline{0.933}{\tiny $\pm0.07$}
& \textbf{0.943{\tiny $\pm0.09$}}\\
\bottomrule
\end{tabular}
\end{adjustbox}
\caption{\small{Average PSNR and SSIM results for SITCOM and various methods for MRI reconstruction using Uniform 1D and Gaussian 1D masks with 4x and 8x acceleration factors. The measurement noise level is $\sigma_{\mathbf{y}} = 0.01$. Values past $\pm$ represent the standard deviation. Best results are bolded whereas the second-best results are underlined.}}
\label{tab: MRI PNSR}
\end{table*}
\begin{table*}[htp]
\centering
\begin{adjustbox}{max width=0.6\textwidth}
\begin{tabular}{ccccc}
\toprule
\textbf{Method} & \textbf{Score-MRI} & \textbf{DPS} & \textbf{DDS} & \textbf{SITCOM}\\
\midrule
Avg. Run-time (seconds)
& 342.20{\tiny $\pm 41$} 
& 145.52{\tiny $\pm 27$} 
& 25.40{\tiny $\pm 6.1$} 
& 52.30{\tiny $\pm 22$} \\
\bottomrule
\end{tabular}
\end{adjustbox}
\caption{\small{Average run-time in seconds for SITCOM and DM-based baselines for the case of Uniform 1D and $\times4$.}}
\label{tab: MRI runtime}
\end{table*}

\begin{figure*}
    \centering
    \begin{tabular}{cccccc}
        & \textbf{Ground Truth}
    &\textbf{Input}
    & \textbf{Score MRI} 
    & \textbf{DDS}
    & \textbf{SITCOM} 
   \\
\textbf{Uniform 1D 4x} & \includegraphics[width=.13\linewidth,valign=t]{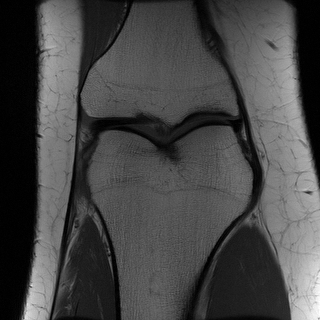} &
        \includegraphics[width=.13\linewidth,valign=t]{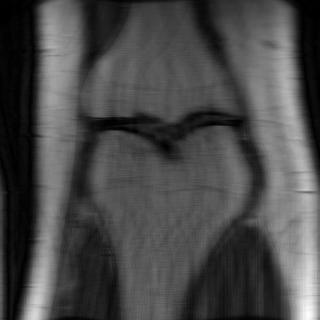} &
        \includegraphics[width=.13\linewidth,valign=t]{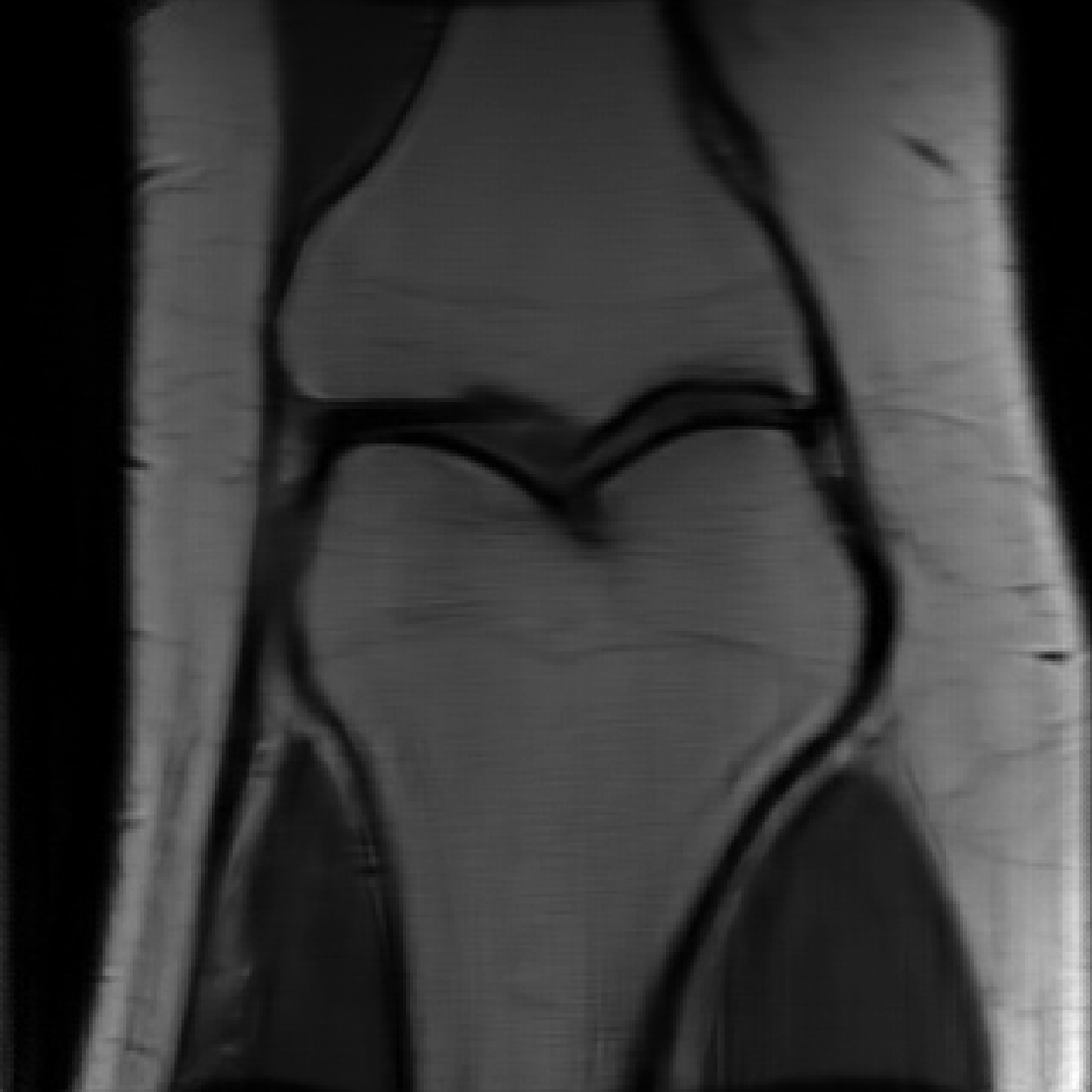} &
        \includegraphics[width=.13\linewidth,valign=t]{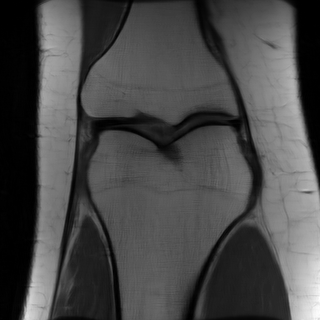}&
        \includegraphics[width=.13\linewidth,valign=t]{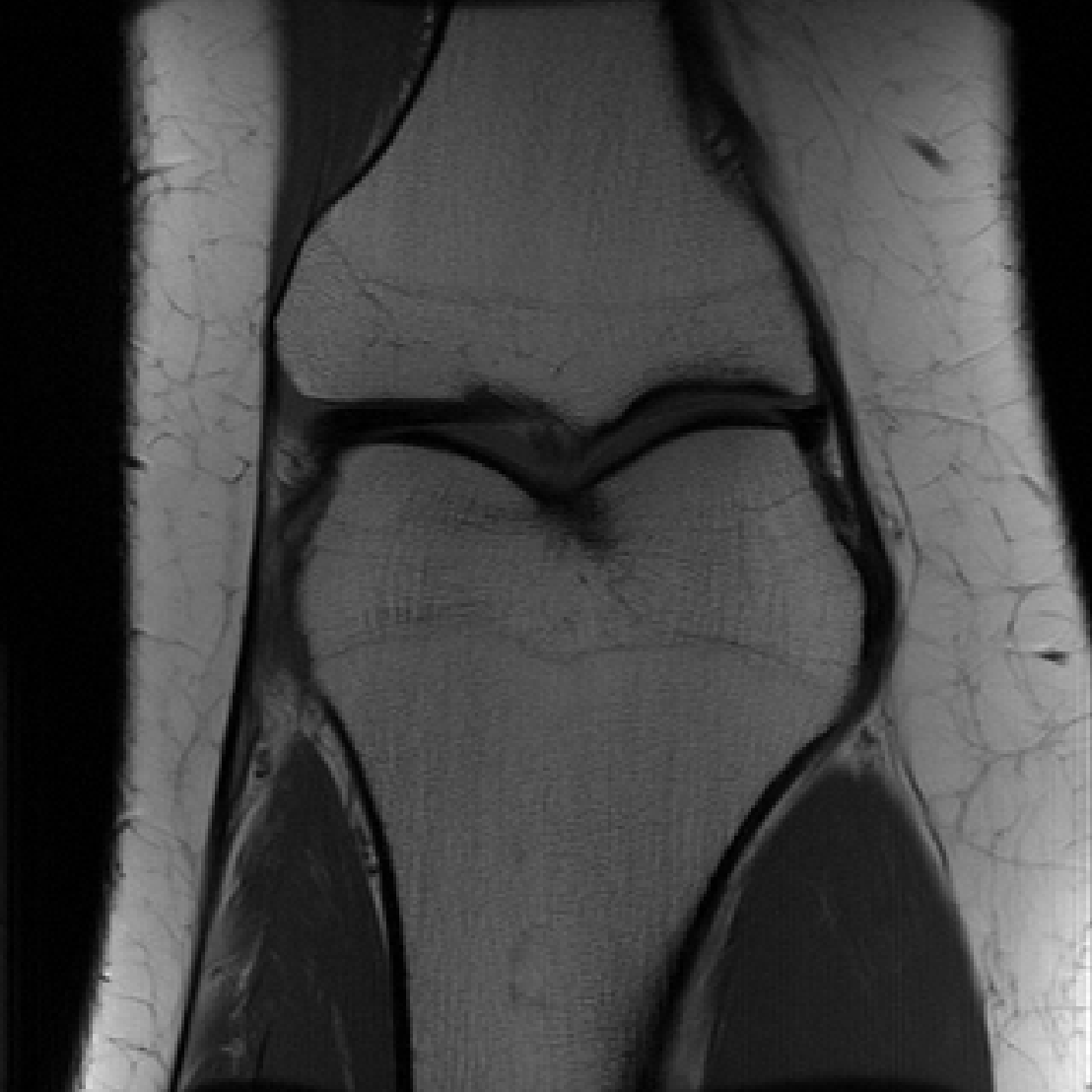}\\      
        & \scriptsize{PSNR = $\infty$ dB}& \scriptsize{ PSNR = 20.45 dB} &\scriptsize{ PSNR = 32.95 dB} & \scriptsize{PSNR = 34.12 dB} 
        & \scriptsize{ PSNR = 34.78 dB}
\\

\textbf{Uniform 1D 8x} & \includegraphics[width=.13\linewidth,valign=t]{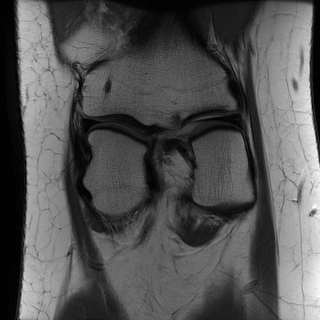} &
        \includegraphics[width=.13\linewidth,valign=t]{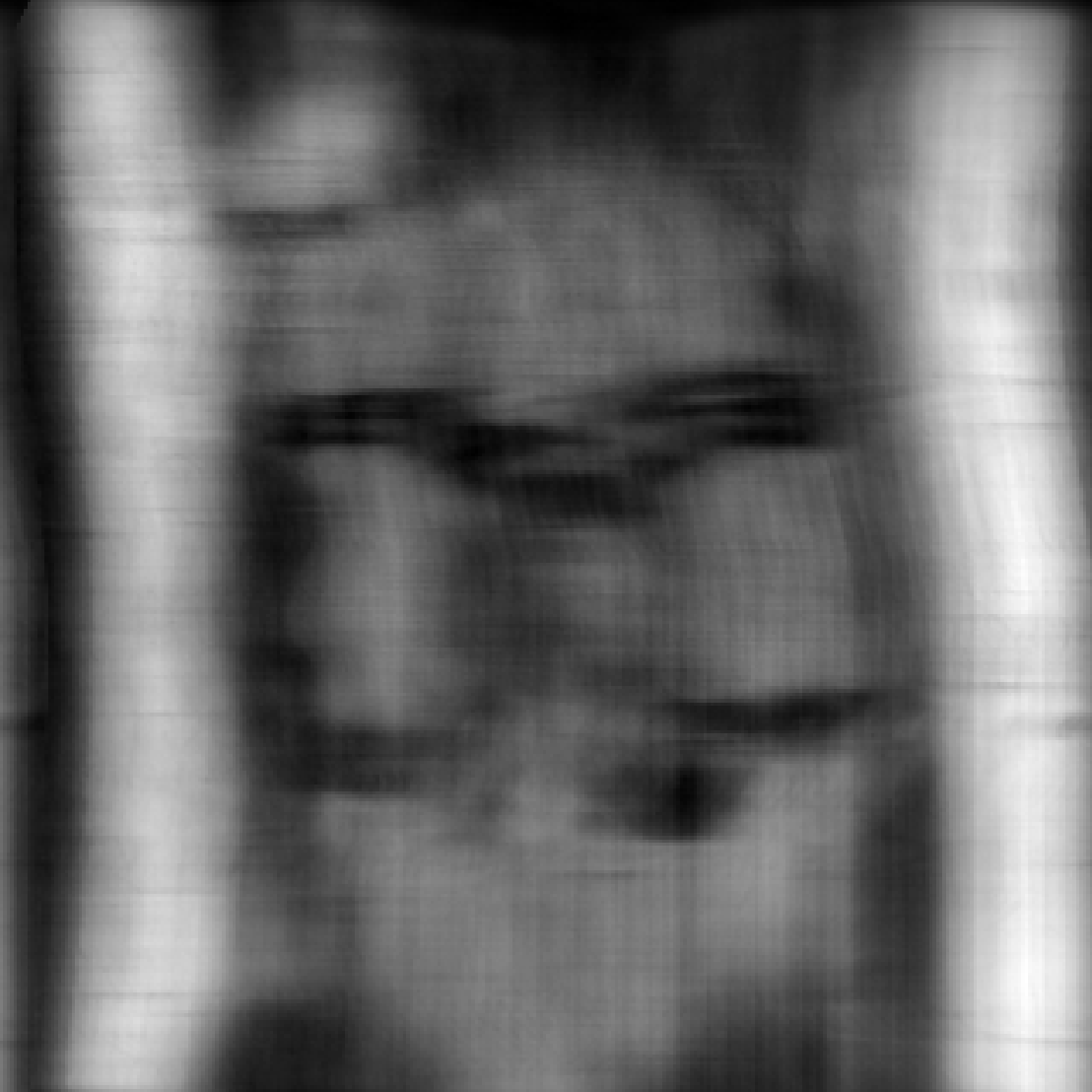} &
        \includegraphics[width=.13\linewidth,valign=t]{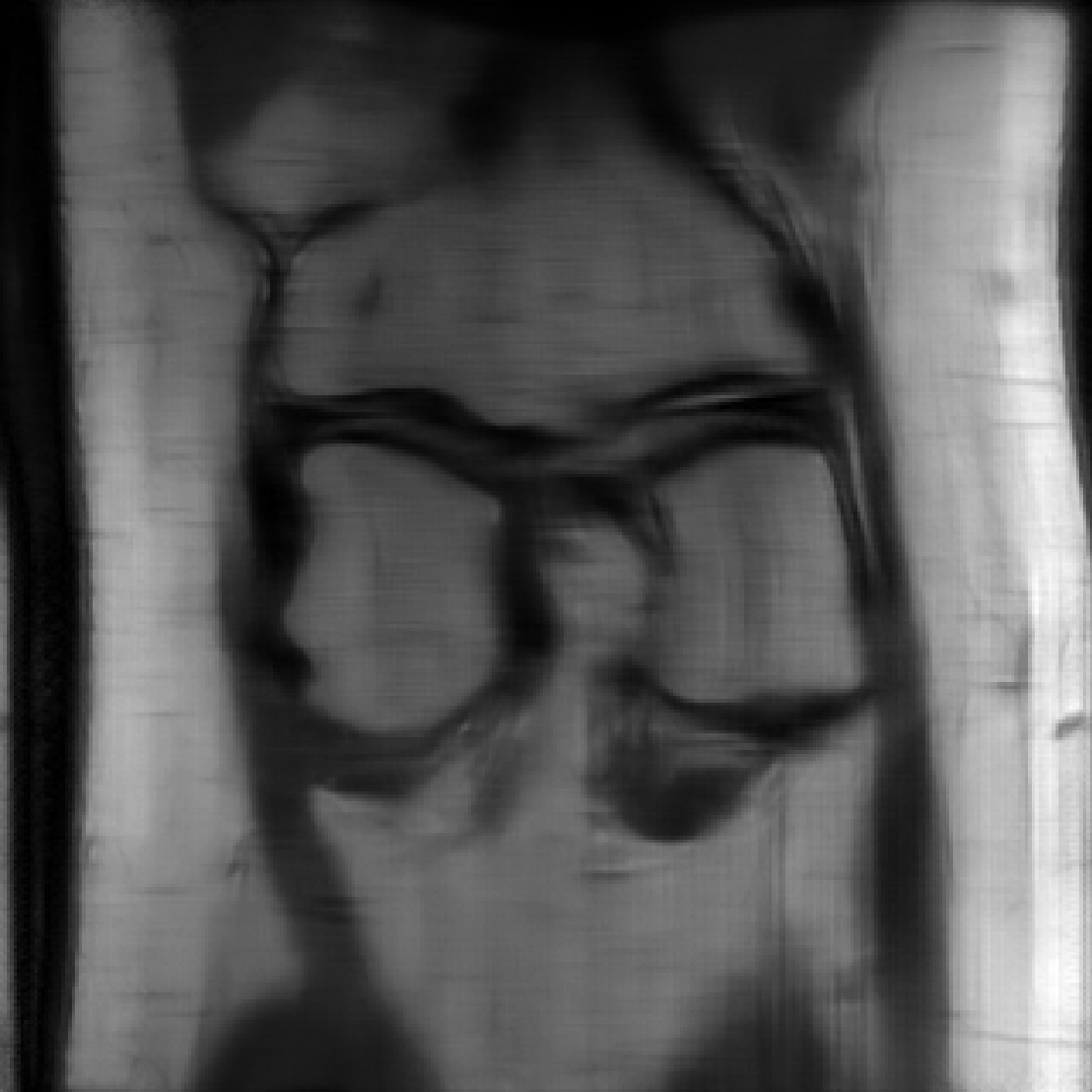} &
        \includegraphics[width=.13\linewidth,valign=t]{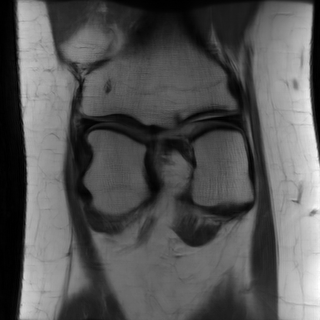}&
        \includegraphics[width=.13\linewidth,valign=t]{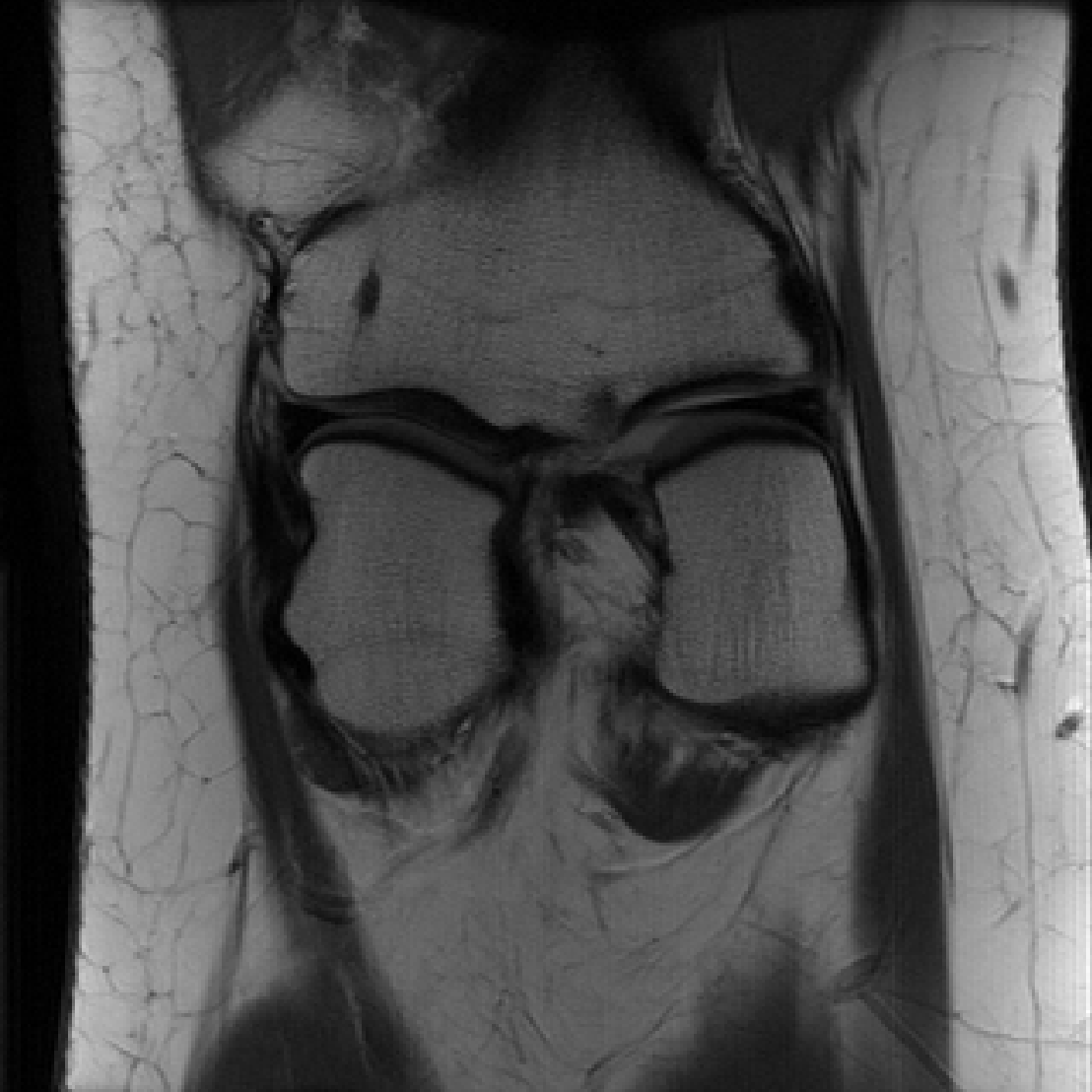}\\      
        & \scriptsize{PSNR = $\infty$ dB}& \scriptsize{ PSNR = 17.45 dB} &\scriptsize{ PSNR = 30.75 dB} & \scriptsize{PSNR = 32.02 dB} 
        & \scriptsize{ PSNR = 32.68 dB}
\\

\textbf{Gaussian 1D 4x} & \includegraphics[width=.13\linewidth,valign=t]{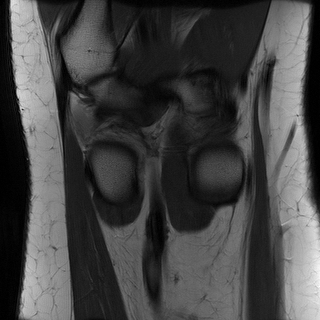} &
        \includegraphics[width=.13\linewidth,valign=t]{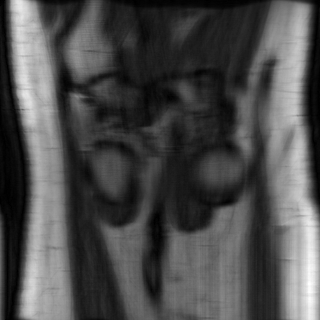} &
        \includegraphics[width=.13\linewidth,valign=t]{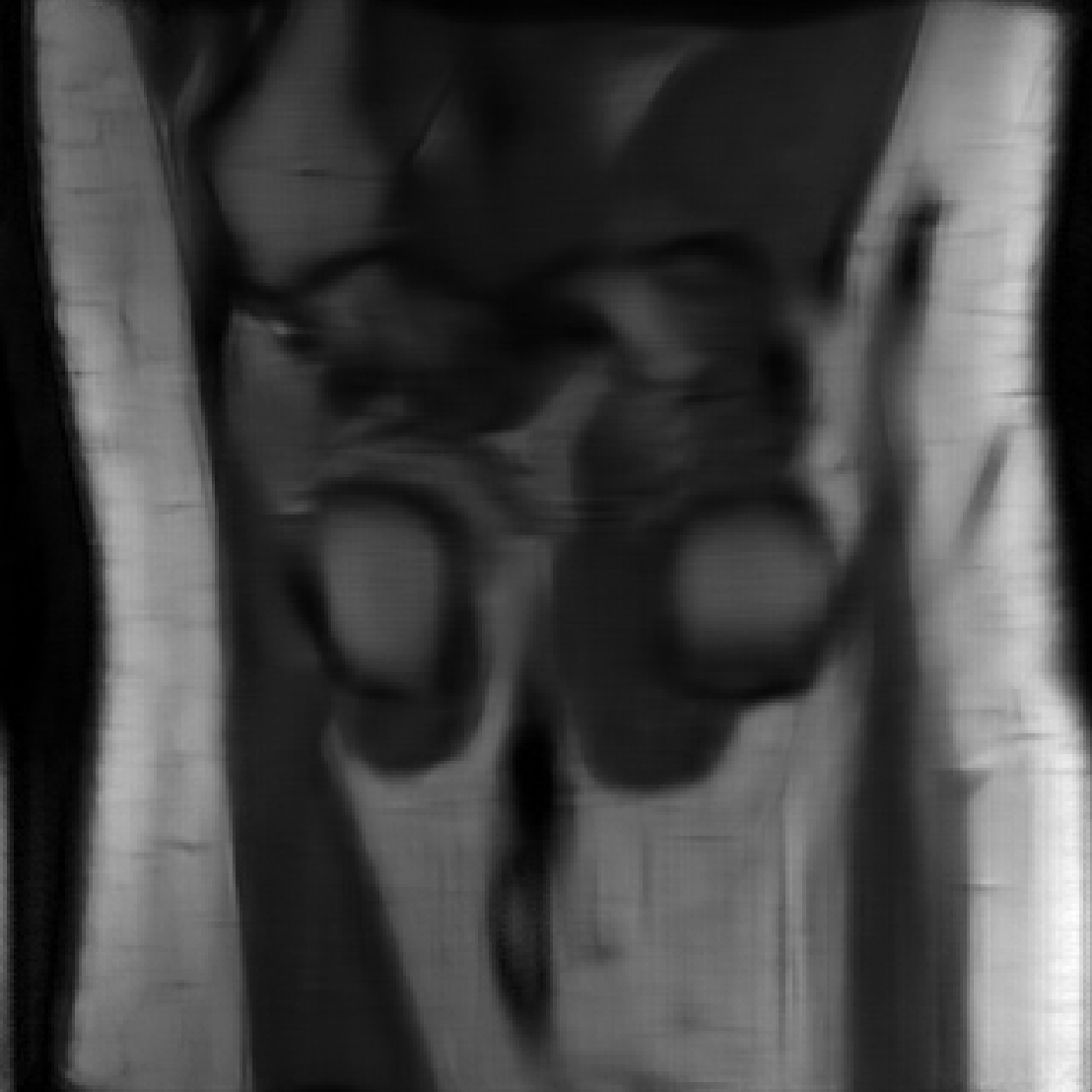} &
        \includegraphics[width=.13\linewidth,valign=t]{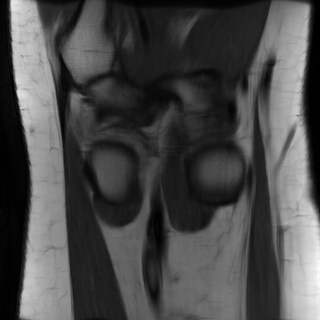}&
        \includegraphics[width=.13\linewidth,valign=t]{MRI_result/scoremri21.png}\\      
        & \scriptsize{PSNR = $\infty$ dB}& \scriptsize{ PSNR = 20.85 dB} &\scriptsize{ PSNR = 32.67 dB} & \scriptsize{PSNR = 34.52 dB} 
        & \scriptsize{ PSNR = 35.36 dB}

\\

\textbf{Gaussian 1D 8x} & \includegraphics[width=.13\linewidth,valign=t]{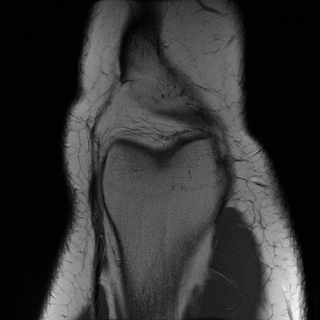} &
        \includegraphics[width=.13\linewidth,valign=t]{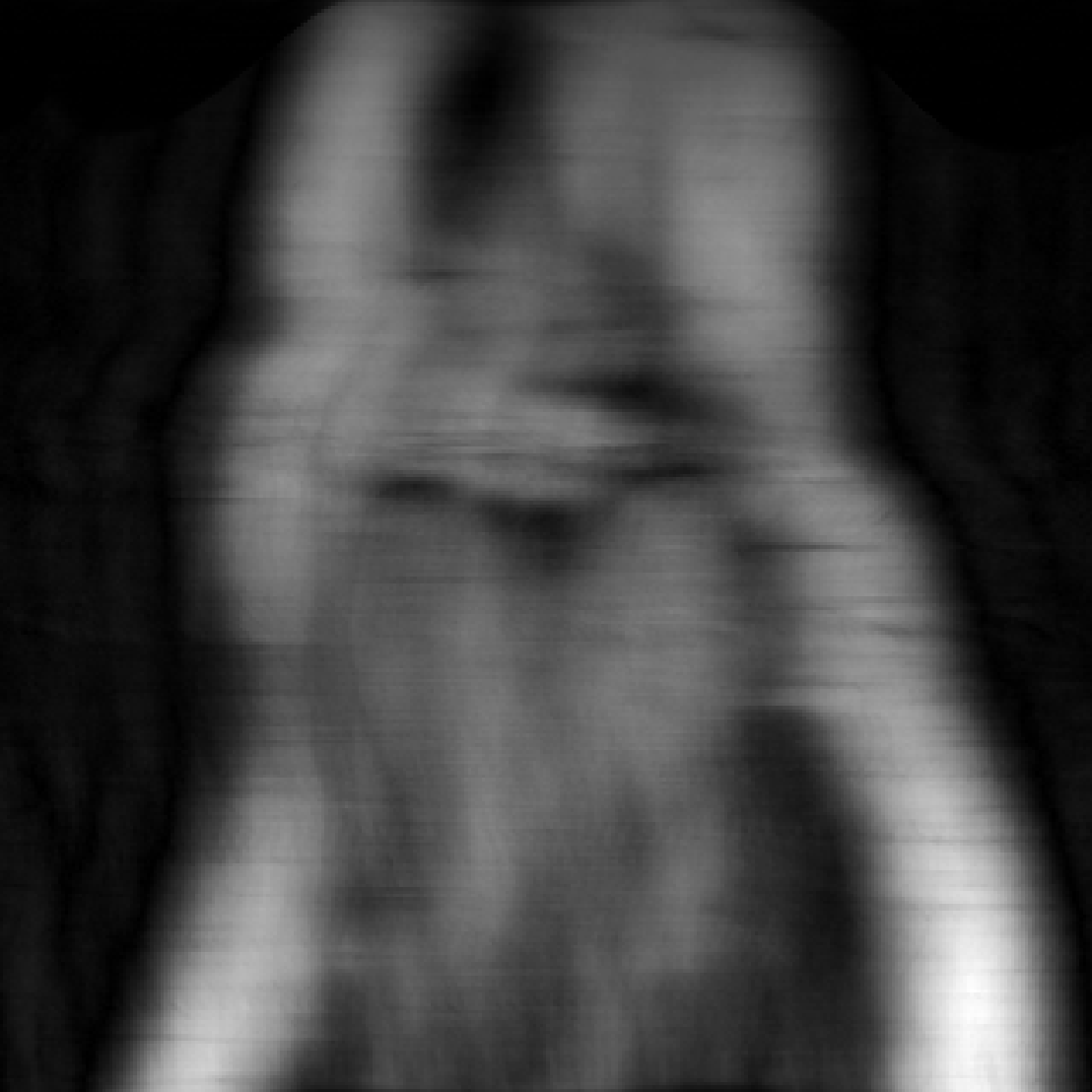} &
        \includegraphics[width=.13\linewidth,valign=t]{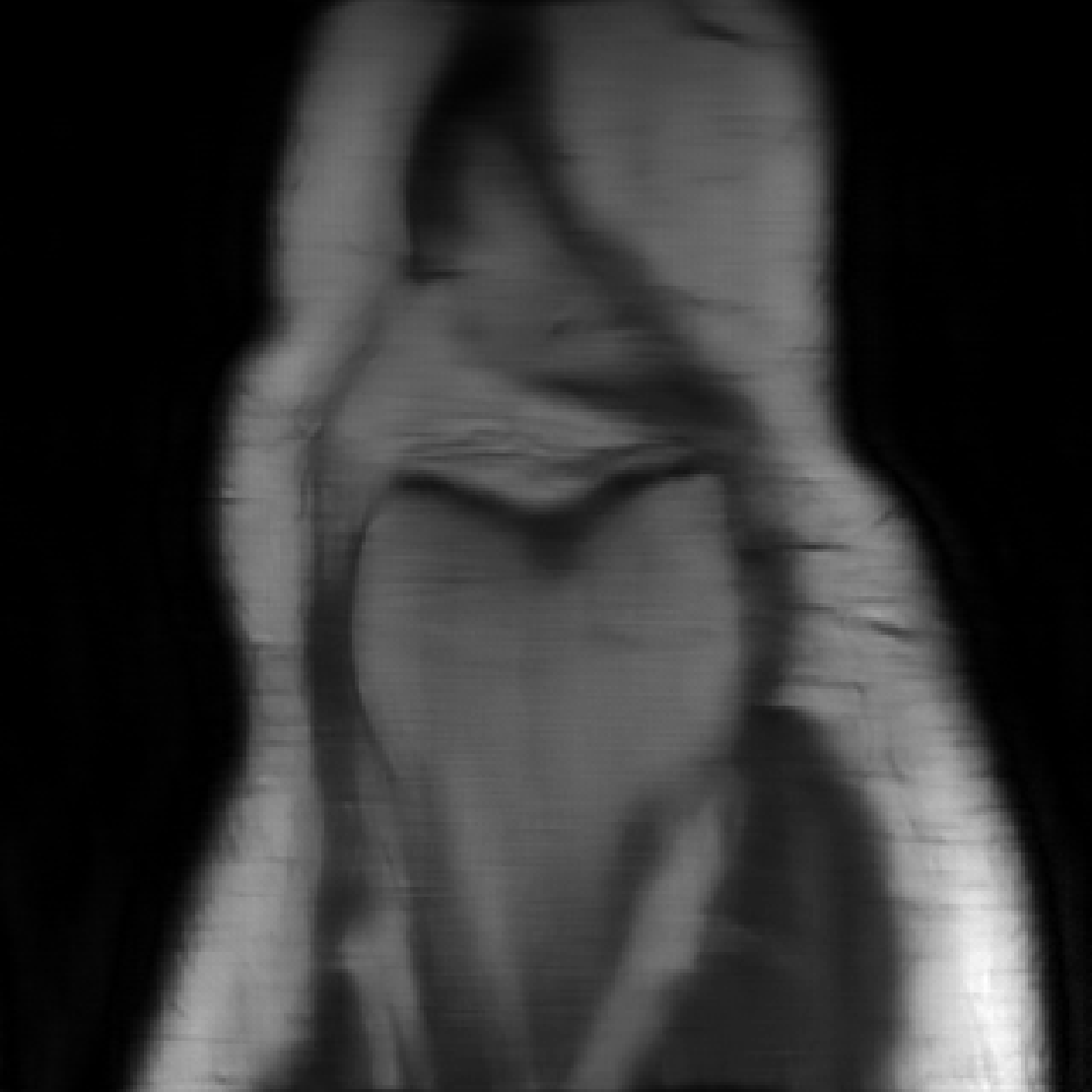} &
        \includegraphics[width=.13\linewidth,valign=t]{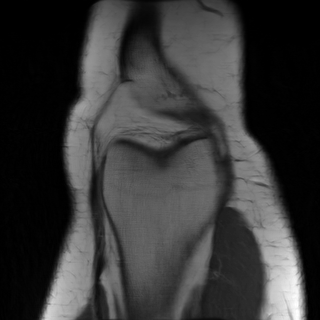}&
        \includegraphics[width=.13\linewidth,valign=t]{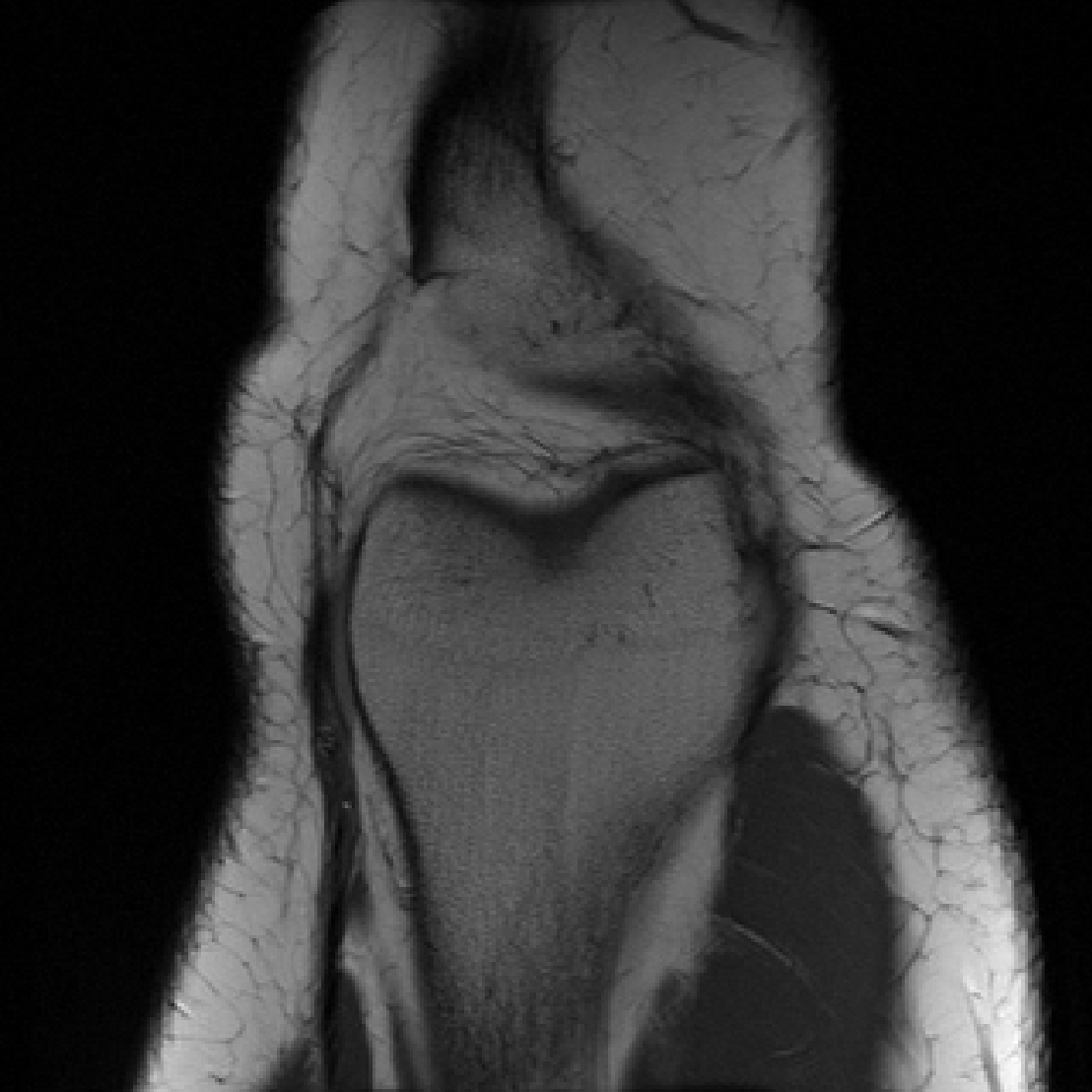}\\      
        & \scriptsize{PSNR = $\infty$ dB}& \scriptsize{ PSNR = 18.95 dB} &\scriptsize{ PSNR = 30.45 dB} & \scriptsize{PSNR = 31.89 dB} 
        & \scriptsize{ PSNR = 32.72 dB}
        
    \end{tabular}
    \caption{\small{Reconstructed images using our proposed approach, SITCOM, and DM-based baselines (DDS and Score-MRI). Each row corresponds to a different mask pattern and acceleration factor. The ground truth and degraded images are shown in the first and second columns, respectively, followed by the reconstructed imaged from the baselines. The last column presents our method. PSNR results are given at the bottom of each reconstructed image. For all tasks, SITCOM reconstructions contain sharper and clearer image features than other methods.}}
    \label{fig:SITCOM MRI visualizations}
\end{figure*}
%


\section{Additional Comparison Results}\label{sec: appen more results baselines}

\subsection{Near Noiseless Setting}\label{sec: appen more results baselines noiseless}

In Table~\ref{tab: main FFHQ IMAGENET noiseless seconds}, we present the average PSNR, SSIM, LPIPS, and run-time (seconds) of DPS, DAPS, DDNM, and SITCOM using the FFHQ and ImageNet datasets for which the measurement noise level is set to $\sigma_{\mathbf{y}} = 0.01$ (different from Table~\ref{tab: main FFHQ IMAGENET noise 0.05}). The goal of these results is to evaluate our method and baselines under less noisy settings. 
\begin{table*}[htp!]
\small
    \centering
    \resizebox{0.99\textwidth}{!}{%
    \begin{tabular}{c|c|cccc|cccc}
    \toprule
    \multirow{2}{*}{\textbf{Task}} 
    & \multirow{2}{*}{\textbf{Method}} 
    & \multicolumn{4}{c|}{\textbf{FFHQ}} 
    & \multicolumn{4}{c}{\textbf{ImageNet}} \\
     &  & PSNR ($\uparrow$) & SSIM ($\uparrow$) & LPIPS ($\downarrow$) 
     & \textbf{Run-time ($\downarrow$)} 
     & PSNR ($\uparrow$) & SSIM ($\uparrow$) & LPIPS ($\downarrow$) 
     & \textbf{Run-time ($\downarrow$)} \\
    \midrule
    
    \multirow{4}{*}{Super Resolution 4$\times$} 
    & DPS 
    & 25.20\tiny{$\pm$1.22} 
    & 0.806\tiny{$\pm$0.044} 
    & 0.242\tiny{$\pm$0.102}
    & 78.60\tiny{$\pm$26.40}
    & 24.45\tiny{$\pm$0.89} 
    & 0.792\tiny{$\pm$0.052} 
    & 0.331\tiny{$\pm$0.089} 
    & 139.80\tiny{$\pm$24.00}\\

    & DAPS 
    & \underline{29.60}\tiny{$\pm$0.67} 
    & \underline{0.871}\tiny{$\pm$0.034} 
    & \textbf{0.132}\tiny{$\pm$0.088} 
    & 74.40\tiny{$\pm$25.80} 
    & \underline{25.98}\tiny{$\pm$0.74} 
    & \underline{0.794}\tiny{$\pm$0.09}
    & \underline{0.234}\tiny{$\pm$0.089} 
    & 106.00\tiny{$\pm$21.20}\\

    & DDNM 
    & 28.82\tiny{$\pm$0.67} 
    & 0.851\tiny{$\pm$0.043} 
    & 0.188\tiny{$\pm$0.13} 
    & \underline{64.20}\tiny{$\pm$21.00}
    & 24.67\tiny{$\pm$0.78} 
    & 0.771\tiny{$\pm$0.06}
    & 0.432\tiny{$\pm$0.034} 
    & \underline{82.80}\tiny{$\pm$33.00}\\

    & Ours 
    & \textbf{30.95}\tiny{$\pm$0.89} 
    & \textbf{0.872}\tiny{$\pm$0.045} 
    & \underline{0.137}\tiny{$\pm$0.046}  
    & \textbf{30.00}\tiny{$\pm$8.40}
    & \textbf{26.89}\tiny{$\pm$0.86} 
    & \textbf{0.802}\tiny{$\pm$0.057}
    & \textbf{0.224}\tiny{$\pm$0.056} 
    & \textbf{80.40}\tiny{$\pm$17.00}\\

    \midrule
    
    \multirow{4}{*}{Box In-Painting} 
    & DPS 
    & 23.56\tiny{$\pm$0.78} 
    & 0.762\tiny{$\pm$0.034} 
    & 0.191\tiny{$\pm$0.087} 
    & 91.20\tiny{$\pm$25.80}
    & 20.22\tiny{$\pm$0.67} 
    & 0.69\tiny{$\pm$0.034}
    & 0.297\tiny{$\pm$0.077} 
    & 93.00\tiny{$\pm$26.40}\\

    & DAPS 
    & 24.41\tiny{$\pm$0.67} 
    & \underline{0.791}\tiny{$\pm$0.034} 
    & \underline{0.129}\tiny{$\pm$0.067}  
    & 69.80\tiny{$\pm$12.20}
    & 21.79\tiny{$\pm$0.34} 
    & 0.734\tiny{$\pm$0.045}
    & \underline{0.214}\tiny{$\pm$0.034} 
    & 96.40\tiny{$\pm$20.40}\\

    & DDNM 
    & \underline{24.67}\tiny{$\pm$0.067} 
    & 0.788\tiny{$\pm$0.024} 
    & 0.229\tiny{$\pm$0.055} 
    & \underline{61.20}\tiny{$\pm$15.20}
    & \underline{21.99}\tiny{$\pm$0.54} 
    & \underline{0.737}\tiny{$\pm$0.034}
    & 0.315\tiny{$\pm$0.022} 
    & \underline{85.20}\tiny{$\pm$27.00}\\

    & Ours 
    & \textbf{24.97}\tiny{$\pm$0.55} 
    & \textbf{0.804}\tiny{$\pm$0.045} 
    & \textbf{0.118}\tiny{$\pm$0.022}  
    & \textbf{22.20}\tiny{$\pm$8.40}
    & \textbf{22.23}\tiny{$\pm$0.44} 
    & \textbf{0.745}\tiny{$\pm$0.034}
    & \textbf{0.208}\tiny{$\pm$0.023} 
    & \textbf{73.80}\tiny{$\pm$26.40}\\

    \midrule
    
    \multirow{4}{*}{Random In-Painting} 
    & DPS 
    & 28.77\tiny{$\pm$0.56} 
    & 0.847\tiny{$\pm$0.034} 
    & 0.191\tiny{$\pm$0.023} 
    & 93.00\tiny{$\pm$20.40}
    & 24.57\tiny{$\pm$0.45} 
    & 0.775\tiny{$\pm$0.023}
    & 0.318\tiny{$\pm$0.26} 
    & 127.20\tiny{$\pm$18.00}\\

    & DAPS 
    & \underline{31.56}\tiny{$\pm$0.45} 
    & \underline{0.905}\tiny{$\pm$0.013} 
    & \underline{0.094}\tiny{$\pm$0.012}  
    & 79.20\tiny{$\pm$17.00}
    & 28.86\tiny{$\pm$0.67} 
    & 0.877\tiny{$\pm$0.021}
    & 0.131\tiny{$\pm$0.044} 
    & 120.60\tiny{$\pm$20.40}\\

    & DDNM 
    & 30.56\tiny{$\pm$0.56} 
    & 0.902\tiny{$\pm$0.013} 
    & 0.116\tiny{$\pm$0.023} 
    & \underline{75.00}\tiny{$\pm$15.20}
    & \underline{30.12}\tiny{$\pm$0.45} 
    & \underline{0.917}\tiny{$\pm$0.012}
    & \underline{0.124}\tiny{$\pm$0.032} 
    & \underline{113.40}\tiny{$\pm$13.80}\\

    & Ours 
    & \textbf{33.02}\tiny{$\pm$0.44} 
    & \textbf{0.919}\tiny{$\pm$0.012} 
    & \textbf{0.0912}\tiny{$\pm$0.013}  
    & \textbf{28.20}\tiny{$\pm$6.40}
    & \textbf{30.67}\tiny{$\pm$0.45} 
    & \textbf{0.918}\tiny{$\pm$0.013}
    & \textbf{0.118}\tiny{$\pm$0.012} 
    & \textbf{84.00}\tiny{$\pm$20.40}\\

    \midrule
    
    \multirow{4}{*}{Gaussian Deblurring} 
    & DPS 
    & 25.78\tiny{$\pm$0.68} 
    & 0.831\tiny{$\pm$0.034} 
    & 0.202\tiny{$\pm$0.014} 
    & 79.80\tiny{$\pm$26.40}
    & 22.45\tiny{$\pm$0.42} 
    & 0.778\tiny{$\pm$0.067}
    & 0.344\tiny{$\pm$0.041} 
    & 127.20\tiny{$\pm$26.40}\\

    & DAPS 
    & \underline{29.67}\tiny{$\pm$0.45} 
    & \underline{0.889}\tiny{$\pm$0.045} 
    & \underline{0.163}\tiny{$\pm$0.033}  
    & 79.00\tiny{$\pm$22.20}
    & 26.34\tiny{$\pm$0.55} 
    & 0.836\tiny{$\pm$0.034}
    & \underline{0.244}\tiny{$\pm$0.023} 
    & 133.20\tiny{$\pm$25.80}\\

    & DDNM 
    & 28.56\tiny{$\pm$0.45} 
    & 0.872\tiny{$\pm$0.024} 
    & 0.211\tiny{$\pm$0.034}  
    & \underline{74.40}\tiny{$\pm$20.40}
    & \textbf{28.44}\tiny{$\pm$0.021} 
    & \underline{0.882}\tiny{$\pm$0.021}
    & 0.267\tiny{$\pm$0.044} 
    & \underline{105.60}\tiny{$\pm$19.80}\\

    & Ours 
    & \textbf{32.12}\tiny{$\pm$0.34} 
    & \textbf{0.913}\tiny{$\pm$0.024} 
    & \textbf{0.139}\tiny{$\pm$0.045}  
    & \textbf{27.00}\tiny{$\pm$10.20}
    & \underline{28.22}\tiny{$\pm$0.45} 
    & \textbf{0.891}\tiny{$\pm$0.014}
    & \textbf{0.216}\tiny{$\pm$0.021} 
    & \textbf{80.40}\tiny{$\pm$15.00}\\

    \midrule
    
    \multirow{3}{*}{Motion Deblurring} 
    & DPS 
    & 23.78\tiny{$\pm$0.78} 
    & 0.742\tiny{$\pm$0.042} 
    & 0.265\tiny{$\pm$0.024}  
    & 99.00\tiny{$\pm$20.40}
    & 22.33\tiny{$\pm$0.727} 
    & 0.726\tiny{$\pm$0.034}
    & 0.352\tiny{$\pm$0.00} 
    & 132.60\tiny{$\pm$24.00}\\

    & DAPS 
    & \underline{30.78}\tiny{$\pm$0.56} 
    & \underline{0.892}\tiny{$\pm$0.034} 
    & \underline{0.146}\tiny{$\pm$0.023}  
    & \underline{66.40}\tiny{$\pm$20.40}
    & \underline{28.24}\tiny{$\pm$0.62} 
    & \underline{0.867}\tiny{$\pm$0.023}
    & \underline{0.191}\tiny{$\pm$0.017} 
    & \underline{127.20}\tiny{$\pm$26.40}\\

    & Ours 
    & \textbf{32.34}\tiny{$\pm$0.44} 
    & \textbf{0.908}\tiny{$\pm$0.028} 
    & \textbf{0.135}\tiny{$\pm$0.028}  
    & \textbf{31.20}\tiny{$\pm$10.40}
    & \textbf{29.12}\tiny{$\pm$0.38} 
    & \textbf{0.882}\tiny{$\pm$0.025}
    & \textbf{0.182}\tiny{$\pm$0.025} 
    & \textbf{87.00}\tiny{$\pm$18.60}\\

    \midrule
    
    \multirow{3}{*}{\underline{Phase Retrieval}} 
    & DPS 
    & 17.56\tiny{$\pm$2.15} 
    & 0.681\tiny{$\pm$0.056} 
    & 0.392\tiny{$\pm$0.021}  
    & 91.20\tiny{$\pm$25.20}
    & 16.77\tiny{$\pm$1.78} 
    & 0.651\tiny{$\pm$0.076}
    & 0.442\tiny{$\pm$0.037} 
    & \underline{130.80}\tiny{$\pm$22.80}\\

    & DAPS 
    & \underline{31.45}\tiny{$\pm$2.78} 
    & \underline{0.909}\tiny{$\pm$0.035} 
    & \underline{0.109}\tiny{$\pm$0.044}  
    & \underline{81.00}\tiny{$\pm$19.20}
    & \textbf{26.12}\tiny{$\pm$2.12} 
    & \underline{0.802}\tiny{$\pm$0.023}
    & \underline{0.247}\tiny{$\pm$0.034} 
    & 139.20\tiny{$\pm$21.00}\\

    & Ours 
    & \textbf{31.88}\tiny{$\pm$2.89} 
    & \textbf{0.921}\tiny{$\pm$0.067} 
    & \textbf{0.102}\tiny{$\pm$0.078}  
    & \textbf{32.40}\tiny{$\pm$12.40}
    & \underline{25.76}\tiny{$\pm$1.78} 
    & \textbf{0.813}\tiny{$\pm$0.032}
    & \textbf{0.238}\tiny{$\pm$0.067} 
    & \textbf{78.60}\tiny{$\pm$17.00}\\

    \midrule
    
    \multirow{3}{*}{\underline{Non-Uniform Deblurring}} 
    & DPS 
    & 23.78\tiny{$\pm$2.23} 
    & 0.761\tiny{$\pm$0.051} 
    & 0.269\tiny{$\pm$0.064}  
    & 93.60\tiny{$\pm$27.00}
    & 22.97\tiny{$\pm$1.57} 
    & 0.781\tiny{$\pm$0.023}
    & 0.302\tiny{$\pm$0.089} 
    & 140.40\tiny{$\pm$26.40}\\

    & DAPS 
    & \underline{28.89}\tiny{$\pm$1.67} 
    & \underline{0.845}\tiny{$\pm$0.057} 
    & \underline{0.150}\tiny{$\pm$0.056}  
    & \underline{84.60}\tiny{$\pm$22.20}
    & \underline{28.02}\tiny{$\pm$1.15} 
    & \underline{0.831}\tiny{$\pm$0.082}
    & \underline{0.162}\tiny{$\pm$0.034} 
    & \underline{133.80}\tiny{$\pm$33.60}\\

    & Ours 
    & \textbf{31.09}\tiny{$\pm$0.89} 
    & \textbf{0.911}\tiny{$\pm$0.056} 
    & \textbf{0.132}\tiny{$\pm$0.45}  
    & \textbf{33.60}\tiny{$\pm$8.20}
    & \textbf{29.56}\tiny{$\pm$0.78} 
    & \textbf{0.844}\tiny{$\pm$0.045}
    & \textbf{0.147}\tiny{$\pm$0.042} 
    & \textbf{80.40}\tiny{$\pm$16.40}\\

    \midrule
    
    \multirow{3}{*}{\underline{High Dynamic Range}} 
    & DPS 
    & 23.33\tiny{$\pm$1.34} 
    & 0.734\tiny{$\pm$0.049} 
    & 0.251\tiny{$\pm$0.078}  
    & 80.40\tiny{$\pm$25.20}
    & 19.67\tiny{$\pm$0.056} 
    & 0.693\tiny{$\pm$0.034}
    & 0.498\tiny{$\pm$0.112} 
    & 140.40\tiny{$\pm$24.60}\\

    & DAPS 
    & \underline{27.58}\tiny{$\pm$0.829} 
    & \underline{0.828}\tiny{$\pm$0.00} 
    & \underline{0.161}\tiny{$\pm$0.067}  
    & \underline{75.60}\tiny{$\pm$16.40}
    & \underline{26.71}\tiny{$\pm$0.088} 
    & \underline{0.802}\tiny{$\pm$0.032}
    & \underline{0.172}\tiny{$\pm$0.066}
    & \underline{127.20}\tiny{$\pm$19.20}\\

    & Ours 
    & \textbf{28.52}\tiny{$\pm$0.89} 
    & \textbf{0.844}\tiny{$\pm$0.045} 
    & \textbf{0.148}\tiny{$\pm$0.035}  
    & \textbf{30.60}\tiny{$\pm$7.20}
    & \textbf{27.56}\tiny{$\pm$0.78} 
    & \textbf{0.825}\tiny{$\pm$0.037}
    & \textbf{0.162}\tiny{$\pm$0.046} 
    & \textbf{87.00}\tiny{$\pm$14.60}\\
    
    \bottomrule
    \end{tabular}%
    }
    \vspace{-0.2cm}
    \caption{\small{
    Average PSNR, SSIM, LPIPS, and \textbf{run-time (seconds)} of SITCOM 
    and baselines using 100 test images from FFHQ and 100 test images 
    from ImageNet with a \textbf{measurement noise level} of 
    $\sigma_{\mathbf{y}} = 0.01$. 
    The first five tasks are linear, while the last three tasks are non-linear (underlined). 
    For each task and dataset combination, the best results are in bold, 
    and the second-best results are underlined. 
    Values after $\pm$ represent the standard deviation. 
    All results were obtained using a \textbf{single RTX5000 GPU} machine. 
    For phase retrieval, the run-time is reported for the best result out of four independent runs. 
    This is applied for SITCOM and baselines.
    }}
    \label{tab: main FFHQ IMAGENET noiseless seconds}
\end{table*}

Overall, we observe similar trends to those discussed in Section~\ref{sec: exp} for Table~\ref{tab: main FFHQ IMAGENET noise 0.05}. On the FFHQ dataset, SITCOM achieves higher average PSNR values compared to the baselines across all tasks, with improvements exceeding 1 dB in 5 out of 8 tasks. For the ImageNet dataset, we observe more than 1 dB improvement on the non-linear deblurring task, while for the remaining tasks, the improvement is less than 1 dB, except for Gaussian deblurring (where SITCOM underperforms by 0.22 dB) and phase retrieval (underperforming by 0.36 dB).

In terms of run-time, generally, SITCOM outperforms DDNM, DPS, and DAPS, with all methods evaluated on a single RTX5000 GPU. For the FFHQ dataset, SITCOM is at least twice as fast when compared to baselines. On ImageNet, SITCOM consistently requires less run-time compared to DPS and DAPS. When compared to DDNM, SITCOM's run-time is similar or slightly lower. For example, on the super-resolution task on ImageNet, both SITCOM and DDNM average around 81 seconds, but SITCOM achieves over a 2 dB improvement.

\subsection{Performance Comparison Under Time Constraints}\label{sec: appen more results baselines run time}

In this subsection, we compare the performance of SITCOM with baselines where their default settings require less run-time than SITCOM. Namely, DCDP on linear IPs, RED-diff, and PGDM. {\color{black} In Section~\ref{sec: exp}, we demonstrated that the proposed step-wise backward consistency significantly reduces the number of sampling steps. However, enforcing backward consistency also increases the runtime of each iteration, as the gradient computation in SITCOM requires backpropagation through the neural network. 
Here, we assess the combined impact of these two factors—fewer sampling steps and increased runtime per step—by conducting time-budgeted comparisons with fast baselines. We will compare SITCOM with DCDP, which currently achieves SOTA runtime performance, as well as with RED-diff and PGDM, which enforce (non-stepwise) network consistency.

For a fair comparison, we evaluate the runtime of each method in relation to achieving the same PSNR.  

Table~\ref{tab:sitcom-dcdp with GDB} includes a comparison of SITCOM and DCDP. DCDP has two versions, and we only compare with its version 1 as it delivers notably better results than version 2. The parameters $N$ and $K$ in SITCOM also appear in DCDP, so we report results of the two methods for various values of $N$ and $K$. DCDP has an extra inner for-loop (total of 3 nested-for-loop in DCDP) which applies diffusion purification. We fixed the time schedule for the diffusion purification to be the default one. In addition, we set $\sigma_{\mathbf{y}} = 0$ because SITCOM and DCDP handle measurement noise differently.

\begin{table*}[h!]
    \centering
    \resizebox{0.48\textwidth}{!}{
    \begin{tabular}{lcccc}
        \toprule
        \textbf{Method} & {$N$} & {$K$} & \textbf{PSNR} & \textbf{Run-time} (seconds) \\
        \midrule
        SITCOM & 10 & 10 & 31.6 & 8.72 \\
        
        DCDP (Default)   & 10 & 10 & 29.52 & 10.50 \\
        \midrule
        SITCOM (Default) & 20 & 20 & 32.28 & 28.12 \\
       
        DCDP   & 20 & 20 & 31.67 & 16.24 \\
        \midrule
        SITCOM & 10 & 30 & 31.66 & 22.11 \\
        
        DCDP   & 10 & 30 & 31.11 & 12.42 \\
        \midrule
        SITCOM & 30 & 10 & 31.34 & 22.08 \\
       
        DCDP   & 30 & 10 & 30.81 & 12.04 \\
        \midrule
        SITCOM & 50 & 10 & 32.32 & 32.22 \\
      
        DCDP   & 50 & 10 & 31.69 & 26.20 \\
        \midrule
        SITCOM & 100 & 10 & 32.34 & 51.23 \\
      
        DCDP   & 100 & 10 & 31.71 & 42.45 \\
        \midrule
        SITCOM & 200 & 10 & 32.41 & 124.20 \\
        
        DCDP   & 200 & 10 & 31.76 & 112.10 \\
        \midrule
        SITCOM & 500 & 10 & 32.46 & 320.20 \\
        
        DCDP   & 500 & 10 & 31.82 & 309.10 \\
        \bottomrule
    \end{tabular}}
    \caption{\small{Comparison of SITCOM vs. DCDP using the linear inverse problem, Gaussian Deblurring (we consider linear problems here since SITCOM already demonstrates a significant advantage in run-time for nonlinear settings, as evidenced in Table 1 of the main text). Results are reported using the FFHQ dataset with $\sigma_{\mathbf{y}} = 0$. Different values of $N$ and $K$ are used to report the average PSNR and run-time (seconds). }}
    \label{tab:sitcom-dcdp with GDB}
\end{table*}

As confirmed by the table, SITCOM is slower than DCDP (which does not enforce the backward consistency) when the number of iterations is fixed due to the need to back-propagate. However, SITCOM allows a more significant reduction of $N$ and $K$ while achieving competitive PSNRs, so the total run time of SITCOM is smaller. For instance, the PSNR of SITCOM with $N=10$ matches DCDP with $N=20$.


In Table~\ref{tab:red-sitcom} and Table~\ref{tab:pgdm-sitcom}, we compare SITCOM with RED-diff and PGDM, respectively. Both of these baselines utilize ideas related to network regularization. However, they utilize the entire diffusion process (from time $T$ to $0$) as a single regularizer of the reconstructed image, instead of applying the network regularize stepwise. Therefore, they're not imposing the step-wise backward consistency.

\begin{table*}[h!]
    \centering
    \resizebox{0.71\textwidth}{!}{
    \begin{tabular}{lccc}
        \toprule
        \textbf{Method} & \textbf{Super Resolution} & \textbf{Motion Deblurring} & \textbf{High Dynamic Range} \\
        \midrule
        RED-diff & 24.89/18.02 & 27.35/19.02 & 23.45/24.40 \\
        \midrule
        SITCOM (Default) & 26.35/62.02 & 28.65/62.04 & 26.97/62.50 \\
        
        SITCOM $(N,K)=(8,8)$ & 24.99/17.80 & 27.38/17.92 & 25.24/20.12 \\
        \bottomrule
    \end{tabular}}
    \caption{\small{Average PSNR/rune-time of SITCOM as compared to RED-diff using two linear tasks and one non-linear task. Here, ImageNet is used with noise level of $\sigma_{\mathbf{y}} = 0.05$.}}
    \label{tab:red-sitcom}
\end{table*}

As observed in Table~\ref{tab:red-sitcom}, for Super Resolution and Motion Deblurring, the default SITCOM (second line) requires more run-time and achieves better PSNR than RED-diff. However, the third line shows that if we just want to match the performance of RED-diff, we can use a smaller $N$ and $K$ for SITCOM that greatly reduces the run-time to below RED-diff.   

\begin{table}[h!]
    \centering
    \resizebox{0.55\textwidth}{!}{
    \begin{tabular}{lcc}
        \toprule
        \textbf{Method} & \textbf{Super Resolution} & \textbf{Motion Deblurring} \\
        \midrule
        PGDM & 25.22/26.20 & 27.48/24.10 \\
         \midrule
        SITCOM (Default) & 26.35/62.02 & 28.65/62.04 \\
         
        SITCOM $(N,K)=(10,10)$ & 25.15/19.72 & 27.55/19.65 \\
        \bottomrule
    \end{tabular}}
   \vspace{+0.3cm}
    \caption{\small{Average PSNR/rune-time of SITCOM as compared to PGDM for super resolution and motion deblurring. Here, ImageNet is used with noise level of $\sigma_{\mathbf{y}} = 0.05$.}}
    \label{tab:pgdm-sitcom}
\end{table}

As observed in Table~\ref{tab:pgdm-sitcom}, given the same PSNR, SITCOM requires less run-time than PGDM. Additionally, the default $N$ and $K$ allow SITCOM to achieve more than 1dB improvement. 

\section{Ablation Studies}\label{sec: appen ablations general}

\subsection{Impact of $K$ \& $N$ and Convergence Plots}\label{sec: appen ablation on N & K}

In this subsection, we perform an ablation study on the number of optimization steps, $K$, and the number of sampling steps, $N$. Specifically, for the task of Gaussian Deblurring, we run SITCOM using combinations from  $N \in \{4,8,12,\dots,40\}$ and $K\in \{5, 10, 15,\dots,40\}$. The average PSNR results over 20 test images from the FFHQ dataset are presented in Table~\ref{tab: ablation N and K}. 

We observe that with $N=12$ and $K=15$, we obtain 28.3dB (with only 18.3 seconds) whereas at $N=K=40$, we obtain 29 dB which required nearly 110 seconds. This indicates that there is indeed a trade-off between computational cost and PSNR values. Furthermore, we observe that 61 entries (out of 80) achieve a PSNR values of more than or equal to 28 dB. This indicates that \textit{SITCOM is not very sensitive to the choice} of $N$ and $K$. We note that we observe similar patterns for all tasks. See the tables of all the tasks here\footnote{\tiny{\url{https://github.com/sjames40/SITCOM/blob/main/Extended_Ablation.md}}}.

The selected $(N,K)$ values for our main results are listed in Table~\ref{tab: hyper-parameters} of Appendix~\ref{sec: appen list of hyper-parameters}.
\begin{table*}[htp!]
\centering
\resizebox{\textwidth}{!}{%
\begin{tabular}{c|cccccccc}
\toprule
$(N, K)$ & $K=5$      & $K=10$     & $K=15$     & $K=20$     & $K=25$     & $K=30$     & $K=35$     & $K=40$     \\ 
\midrule
$N=4$    & 19.3, 5.22 & 23.6, 7.14 & 25.9, 9.00 & 26.7, 10.56 & 27.3, 11.34 & 27.7, 12.48 & 27.9, 13.74 & 28.1, 12.96 \\
$N=8$    & 23.7, 7.32 & 27.2, 11.04 & 27.9, 13.98 & 28.3, 17.16 & 28.4, 19.80 & 28.5, 22.98 & 28.5, 25.32 & 28.6, 23.76 \\
$N=12$   & 25.2, 9.42 & 27.9, 14.28 & 28.3, 18.30 & 28.5, 22.26 & 28.6, 26.76 & 28.7, 30.54 & 28.8, 35.94 & 28.7, 34.50 \\
$N=16$   & 26.2, 10.92 & 28.2, 17.58 & 28.5, 23.70 & 28.6, 29.76 & 28.7, 36.18 & 28.8, 42.18 & 28.8, 47.82 & 28.8, 44.82 \\
$N=20$   & 26.7, 13.02 & 28.4, 21.12 & 28.6, 28.86 & 28.7, 36.54 & 28.8, 43.98 & 28.8, 51.12 & 28.9, 57.96 & 28.9, 54.72 \\
$N=24$   & 27.1, 14.58 & 28.4, 24.36 & 28.7, 33.96 & 28.7, 43.20 & 28.9, 52.02 & 28.8, 60.48 & 28.9, 98.04 & 28.8, 65.10 \\
$N=28$   & 27.3, 23.22 & 28.5, 39.72 & 28.7, 56.58 & 28.8, 72.18 & 28.9, 88.38 & 28.9, 103.44 & 29.0, 117.90 & 28.9, 75.54 \\
$N=32$   & 27.6, 25.62 & 28.5, 45.00 & 28.7, 64.14 & 28.8, 82.86 & 28.9, 98.34 & 28.9, 114.84 & 28.9, 123.96 & 29.0, 86.64 \\
$N=36$   & 27.8, 24.66 & 28.7, 44.46 & 28.8, 61.44 & 28.9, 74.16 & 28.9, 84.78 & 29.0, 89.04 & 29.0, 99.78 & 29.0, 96.12 \\
$N=40$   & 27.8, 21.72 & 28.6, 38.04 & 28.9, 53.82 & 28.9, 69.06 & 29.0, 84.06 & 29.0, 99.48 & 29.0, 121.62 & 29.0, 109.80 \\
\bottomrule
\end{tabular}%
}
\caption{\small{Performance comparison for different $(N, K)$ using the task of Gaussian Deblurring. Each cell shows values in the format PSNR, Run-Time (seconds).}}
\label{tab: ablation N and K}
\end{table*}
%



    


Figure~\ref{fig:convergence_comparison} shows PSNR curves of SITCOM using different values of $N$ and $K$ for Gaussian Deblurring and High Dynamic Range. As observed, with $NK=400$, $(N,K)=(10,40)$ yields the best results. Notably, after completing the optimization steps at each sampling step, we observe a drop in PSNR, attributed to the application of the resampling formula in \eqref{eqn: SITCOM step 3} where additive noise is added to the estimated image at each sampling discrete time step.
\begin{figure*}[htp!]
\centering
\begin{subfigure}[b]{0.45\textwidth}
    \centering
    \includegraphics[width=\textwidth]{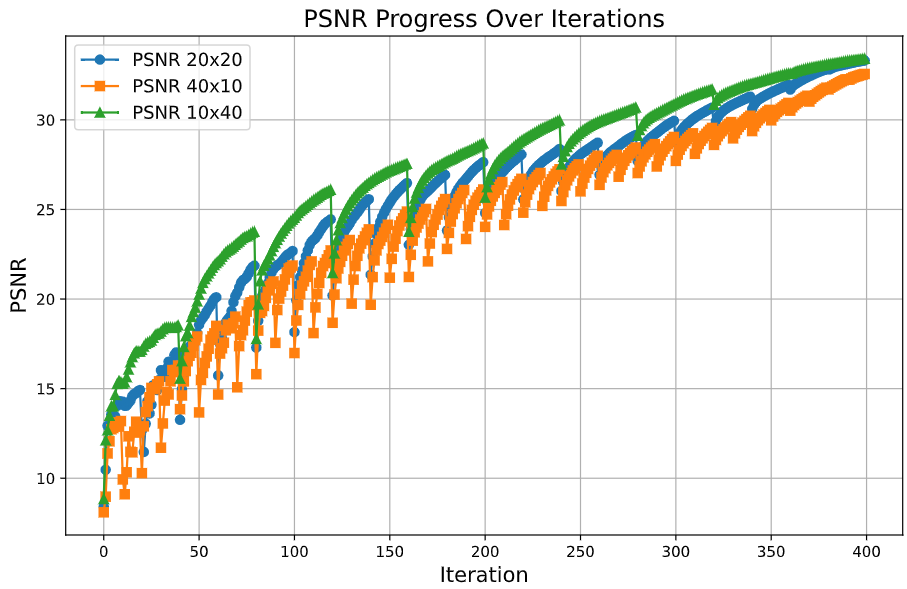}
    \vspace{-0.3cm}
    \caption{\small{Gaussian Deblurring.}}
    \label{fig:convergence_gdb}
\end{subfigure}
\hfill
\begin{subfigure}[b]{0.45\textwidth}
    \centering
    \includegraphics[width=\textwidth]{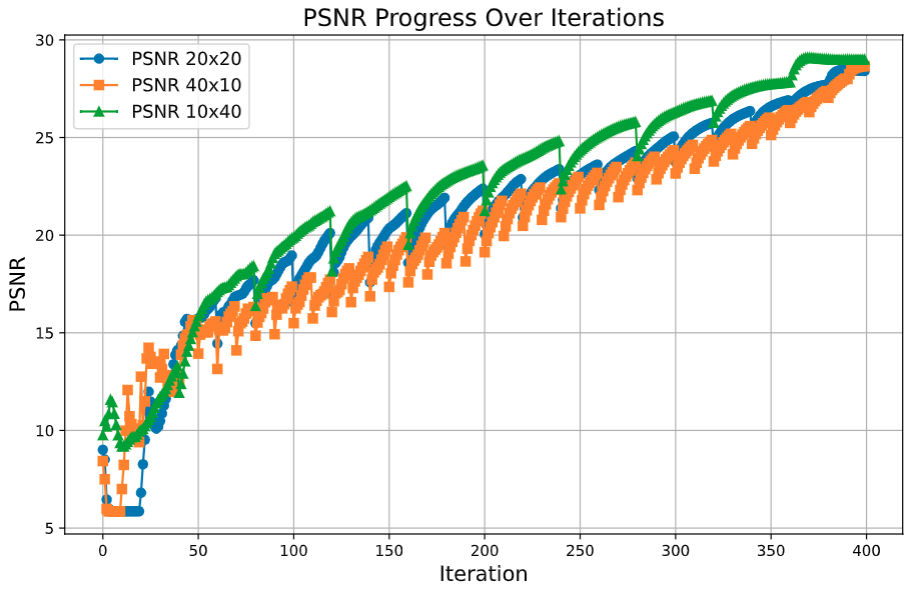}
    \vspace{-0.3cm}
    \caption{\small{High Dynamic Range.}}
    \label{fig:convergence_hdr}
\end{subfigure}
\caption{\small{SITCOM PSNR curves for Gaussian Deblurring (left) and High Dynamic Range (right) tasks using different values of $N$ and $K$ (labeled as $N\times K$ in the legends).}}
\label{fig:convergence_comparison}
\end{figure*}

\subsection{Ablation Studies on the Stopping Criterion For Noisy Measurements}\label{sec: appen ablation on stopping}

In SITCOM, we proposed to use a stopping criterion to prevent the noise overfitting, determined by $\sigma_\mathbf{y}$. In this subsection, we (\textit{i}) show how sensitive SITCOM's performance is w.r.t. the choice of $\delta$, and (\textit{ii}) we visually show the impact of using the stopping criterion vs. not using it; (\textit{iii}) We evaluate SITCOM using measurement noise sourced from a non-Gaussian distribution.

\subsubsection{Sensitivity to the choice of the stopping criterion $\delta$} \label{sec: appen ablation on stopping sensetivity}

In SITCOM, we proposed to use a stopping criterion to prevent the noise overfitting, determined by $\sigma_\mathbf{y}$. Here, we try to answer the question: \textit{How sensitive is SITCOM's performance to the choice of the stopping criterion parameter $\delta$?} To answer this question, we run SITCOM using different values of $\delta$ and report the average PSNR values and run-time (in seconds). 

We use $N=K=20$ and $\sigma_\mathbf{y} = 0.05$ and set $\delta$ to values above and below $\sigma_\mathbf{y}$. Results for Super-Resolution(linear IP) and Non-uniform deblurring (non-linear IP) are given in Table~\ref{tab:task_comparison}, where we use 20 FFHQ test images with $m=256\times 256 \times 3$. The first row shows the values of $a$ for which $\delta = a\sqrt{m}$. 

\begin{table}[htp!]
\centering

\resizebox{\textwidth}{!}{%
\begin{tabular}{l|ccccccccccc}
\toprule
Task                       & $0.01$ & $0.017$      & $0.025$      & $0.04$       & $0.05$       & $0.051$      & $0.055$   & $0.07$ & $0.085$    & $0.1$        & $0.5$        \\ 
\midrule
SR           & 27.80/40.20 & 28.45/38.82 & 29.02/38.41 & 30.22/36.26 & 30.39/28.52 & 30.40/28.26 & 30.37/24.10 & 30.11/23.5 & 29.87/22.89 & 28.87/22.12 & 23.10/15.87 \\
NDB     & 28.40/40.40 & 28.87/38.58 & 29.80/37.32 & 29.82/35.12 & 30.12/32.12 & 30.21/29.46 & 30.20/27.44 & 30.01/26.56 & 29.67/24.52 & 29.12/22.05 & 25.79/15.10 \\
\bottomrule
\end{tabular}%
}
\caption{\small{Performance comparison for Super Resolution (SR) and Non-linear Deblurring (NDB) of different values of $a$ (first row) for which $\delta = a\sqrt{m}$ and $\sigma_\mathbf{y} = 0.05$. Each cell shows values in the format PSNR/run-time (seconds).}}
\label{tab:task_comparison}
\end{table}

As observed, the PSNR values vary between 29.02 to 30.37 (resp. 29.80 to 30.20) only for BIP (resp. NDB) with stopping criterion between 0.025 to 0.055 which indicates that even if values lower (or slightly higher) than the measurement noise level are selected, SITCOM can still perform very well. This means that if we use classical methods (such as \citep{1640848, chen2015efficient}) to approximate/estimate the noise, we can achieve competitive results.

We note that other works, such as DAPS \citep{zhang2024improving} and PGDM \citep{song2023pseudoinverse}, also use $\sigma_{\mathbf{y}}$ but not to estimate the stopping criteria. In these papers, $\sigma_{\mathbf{y}}$ is encoded in the updates of their algorithm (see Eq. (7) in PGDM and Eq. (9) in DAPS). 

\subsubsection{Impact of using the stopping criterion vs. not using it Visually}

In this subsection, we demonstrate the impact of applying the stopping criterion in SITCOM when handling measurement noise. For the tasks of super resolution and motion deblurring, we run SITCOM with and without the stopping criterion for the case of $\sigma_{\mathbf{y}} = 0.05$. The results are presented in Figure~\ref{fig: impact of stopping}. As shown, for both tasks, using the stopping criterion (i.e., $\delta > 0$) not only improves PSNR values compared to the case of $\delta = 0$, but also visually reduces additive noise in the restored images. This is because, without the stopping criterion, the measurement consistency enforced by the optimization in \eqref{eqn: SITCOM step 1} tends to fit the noise in the measurements. 
\begin{figure*}[htp!]
\centering
\includegraphics[width=11.5cm]{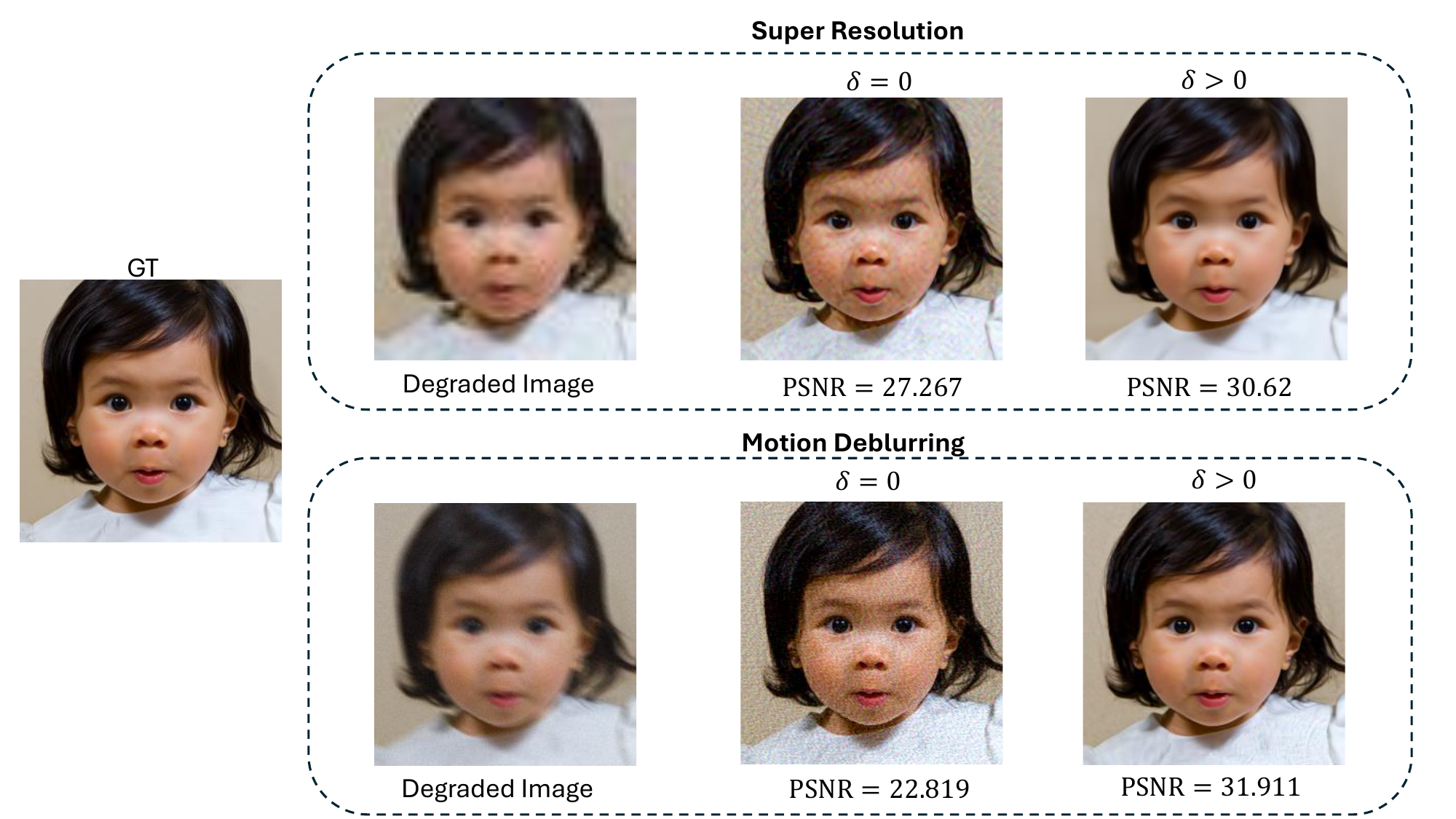}
\vspace{-0.3cm}
\caption{\small{Impact of the stopping criterion in preventing noise overfitting. For the most right column, $\delta$ is set as in Table~\ref{tab: hyper-parameters}.}} 
\label{fig: impact of stopping}
\end{figure*}

\subsubsection{Handling Non-Gaussian Measurement Noise}

Here, we evaluate SITCOM with additive measurement noise vector sampled from the Poisson distribution. We use $N=K=20$ and $\lambda_{y} = 5$ (which determines the noise level). We set $\delta$ (the stopping criterion) to different values (top row). Results for Super Resolution (linear IP) and Non-uniform deblurring (non-linear IP) are given in Table~\ref{tab:task_comparison_possion} where we use 20 FFHQ test images with $m=256\times 256 \times 3$. The first row shows the values of $a$ for which $\delta = a\sqrt{m}$.

\begin{table}[htp!]
\centering
\resizebox{\textwidth}{!}{%
\begin{tabular}{l|cccccccccc}
\toprule
Task & $0$ & $0.02$ & $0.035$ & $0.05$ & $0.06$ & $0.061$ & $0.065$ & $0.08$ & $0.1$ & $0.5$ \\ \hline
SR & 24.52/47.30 & 27.80/40.19 & 29.19/38.11 & 30.42/36.23 & 30.49/28.22 & 30.43/28.21 & 30.37/24.16 & 29.67/24.32 & 26.77/22.08 & 25.12/15.88 \\ \hline
NDB & 25.04/43.42 & 28.60/40.34 & 30.17/37.52 & 30.22/35.44 & 30.24/32.02 & 30.23/29.31 & 30.15/27.51 & 29.71/24.02 & 29.03/22.25 & 26.44/15.15 \\ \hline
\end{tabular}}
\caption{\small{Average PSNR/run-time (seconds) of SITCOM for the tasks of Super-Resolution (SR) and Non-uniform Deblurring (NDB) across various stopping criterion thresholds (top row is the value of $a$ for which $\delta = a\sqrt{m}$)}) using \textbf{additive noise from the Poisson distribution} (code from DPS \footnote{\tiny{\url{https://github.com/DPS2022/diffusion-posterior-sampling}}})}
\label{tab:task_comparison_possion}
\end{table}

As observed, the results of setting $\delta$ to values between $0.035\sqrt{m}$ to $0.08\sqrt{m}$ return PSNR values within approximately 1 dB. This indicates that SITCOM can perform reasonably well with different values of the stopping criterion even if its settings was designed for the additive Gaussian measurement noise.



\subsection{Effect of the Regularization Parameter $\lambda$}\label{sec: appen ablation on lambda}
In this subsection, we perform an ablation study to assess the impact of the regularization parameter, $\lambda$, in SITCOM. Table~\ref{tab: ablation reg parameter} shows the results across four tasks using various $\lambda$ values. Aside from phase retrieval, the effect of $\lambda$ is minimal. We hypothesize that initializing the optimization variable in \eqref{eqn: SITCOM step 1} with $\mathbf{x}_t$ is sufficient to enforce forward diffusion consistency in \textbf{C3}. Therefore, we set $\lambda = 1$ for phase retrieval and $\lambda = 0$ for the other tasks. The impact of $\lambda$ for all tasks are online\footnote{\tiny{\url{    https://anonymous.4open.science/r/SITCOM-7539/Extended_Ablation.md}}}.

Additionally, for all tasks other than phase retrieval, we observed that when $\lambda = 0$, the restored images exhibit enhanced high-frequency details. For visual examples, see the results of $\lambda=0$ versus $\lambda=1$ in Figure~\ref{fig: impact of reg visually}.

%
\begin{table*}[htp]
\small
    \centering
    \resizebox{0.58\textwidth}{!}{\begin{tabular}{cccccc}
    \toprule
    \textbf{Task} & $\lambda = 0$& $\lambda =0.05$ & $\lambda = 0.5$& $\lambda = 1$& $\lambda=1.5$  \\
    \midrule
    Super Resolution 4$\times$ & 29.952 & 29.968 & 29.464 & 29.550 & 29.288   \\

    Motion Deblurring & 31.380 & 31.393 & 31.429 & 31.382 & 31.150  \\

    Random Inpainting  & 34.559 & 34.537 & 34.523 & 34.500 & 34.301  \\

\midrule
    Phase Retrieval & 31.678 & 31.892 & 32.221 & 32.342 & 32.124  \\
        
    \bottomrule

    \end{tabular}}
    \vspace{-0.2cm}
    \caption{\small{Impact of the regularization parameter $\lambda$ in terms of PSNR for four tasks.}}
    \label{tab: ablation reg parameter}
\end{table*}
\begin{figure*}[ht]
\centering
\includegraphics[width=12cm]{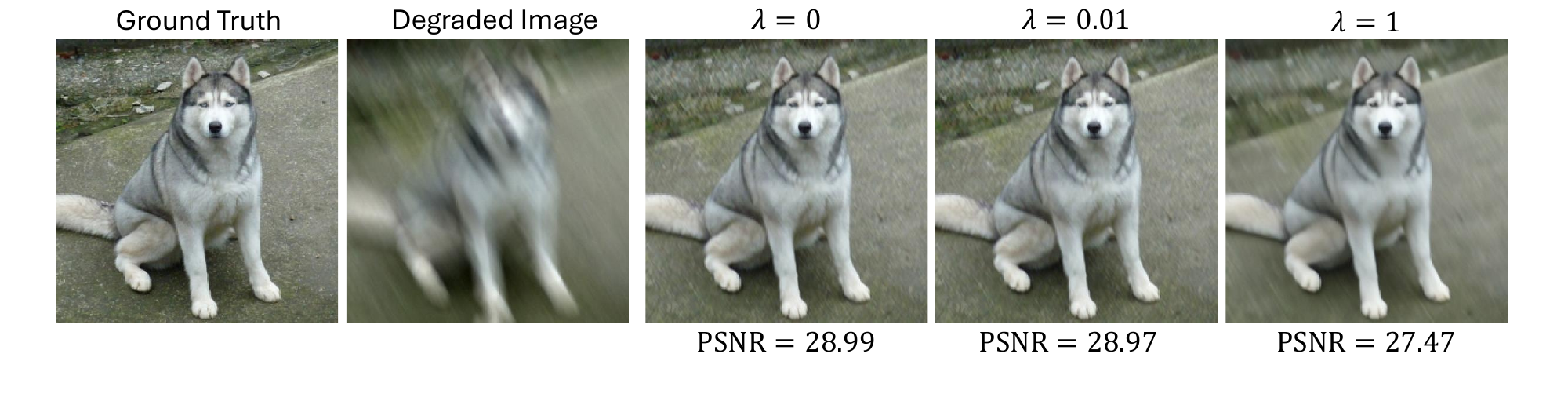}
\vspace{-0.71cm}
\caption{\small{Results of running SITCOM using different regularization parameters in \eqref{eqn: SITCOM step 1} for the task of Motion deblurring.}} 
\label{fig: impact of reg visually}
\end{figure*}
%



    

    
    





\section{Detailed Implementation of Tasks, Baselines, and Hyper-parameters}\label{sec: appen implementation details}


Our experimental setup for image restoration problems and noise levels largely follows DPS \citep{chungdiffusion}. For linear IPs, we evaluate five tasks: super resolution, Gaussian deblurring, motion deblurring, box inpainting, and random inpainting. For Gaussian deblurring and motion deblurring, we use 61$\times$61 kernels with standard deviations of 3 and 0.5, respectively. In the super-resolution task, a bicubic resizer downscales images by a factor of 4. For box inpainting, a random 128$\times$128 box is applied to mask image pixels, and for random inpainting, the mask is generated with each pixel masked with a probability of 0.7, as described in \citep{songsolving}. 

For nonlinear IP tasks, we consider three tasks: phase retrieval, high dynamic range (HDR) reconstruction, and nonlinear (non-uniform) deblurring. In HDR reconstruction, the goal is to restore a higher dynamic range image from a lower dynamic range image (with a factor of 2). Nonlinear deblurring follows the setup in \citep{tran2021explore}. 

\textcolor{black}{For MRI, we follow the setup in Decomposed Diffusion Sampling (DDS) \citep{chungdecomposed} for which we consider different mask patterns and acceleration factors (Ax).}

\paragraph{Baselines \& Datasets:} For baselines, we use DPS \citep{chungdiffusion}, DDNM \citep{wang2022zero}, RED-diff \citep{mardani2023variational}, PGDM \citep{song2023pseudoinverse}, DCDP \citep{li2024decoupled}, DMPlug \citep{wang2024dmplug}, and DAPS \citep{zhang2024improving}. The selection criteria is based on these baselines' competitive performance on several linear and non-linear inverse problems under measurement noise. For MRI, we compare against diffusion-based solvers (DDS \citep{chungdecomposed} and Score-MRI \citep{CHUNG2022102479}), supervised learning solvers (Supervised U-Net\citep{Unet} and E2E Varnet\citep{sriram2020end}), and dataless approaches (Self-Guided DIP\citep{liang2024analysis}, Auto-encoded Sequential DIP (aSeqDIP)\cite{alkhouriimage}, and TV-ADMM \citep{uecker2014espirit}). We evaluate SITCOM and baselines using 100 test images from the validation set of FFHQ \citep{karras2019style} and 100 test images from the validation set of ImageNet \citep{deng2009imagenet} for which the FFHQ-trained and ImageNet-trained DMs are given in \citep{chungdiffusion} and \citep{dhariwal2021diffusion}, respectively, following the previous convention. 



For baselines, we used the codes provided by the authors of each paper: DPS\footnote{\tiny{\url{https://github.com/DPS2022/diffusion-posterior-sampling}}}, DDNM\footnote{\tiny{\url{https://github.com/wyhuai/DDNM}}}, DAPS\footnote{\tiny{\url{https://github.com/zhangbingliang2019/DAPS}}}, and DCDP\footnote{\tiny{\url{https://github.com/Morefre/Decoupled-Data-Consistency-with-Diffusion-Purification-for-Image-Restoration}}}. Default configurations are used for each task. 


\subsection{Complete List of hyper-parameters in SITCOM}\label{sec: appen list of hyper-parameters}

Table~\ref{tab: hyper-parameters} summarizes the hyper-parameters used for each task in our experiments, as determined by the ablation studies in the previous subsections. Notably, the same set of hyper-parameters is applied to both the FFHQ and ImageNet datasets.
\begin{table*}[htp]
\small
    \centering
    \resizebox{0.98\textwidth}{!}{\begin{tabular}{cccccc}
    \toprule
    \textbf{Task} & Sampling Steps $N$ & Optimization Steps $K$ &  Regularization $\lambda$ & Stopping criterion  $\delta$ for $\sigma_{\mathbf{y}}\in \{0.05, 0.01\}$  \\
    \midrule
    Super Resolution 4$\times$ & $20$ & $20$ & $0$ & \{$0.051\sqrt{m_{\textrm{SR}}}$ ,$0.011\sqrt{m_{\textrm{SR}}}$\}  \\
    
    Box In-Painting& $20$ & $20$  & $0$ & \{$0.051\sqrt{m}$ ,$0.011\sqrt{m}$\}  \\

    {Random In-Painting} & $20$ & $30$  & $0$& \{$0.051\sqrt{m}$ ,$0.011\sqrt{m}$\}  \\
    {Gaussian Deblurring} & $20$ & $30$ &  $0$  & \{$0.051\sqrt{m}$ ,$0.011\sqrt{m}$\}\\
    
    Motion Deblurring & $20$ & $30$ &  $0$  & \{$0.051\sqrt{m}$ ,$0.011\sqrt{m}$\}\\
    \midrule
    {\underline{Phase Retrieval}} & $20$ & $30$ & $1$ & \{$0.051\sqrt{m_{\textrm{PR}}}$ ,$0.011\sqrt{m_{\textrm{PR}}}$\}\\

    {\underline{Non-Uniform Deblurring}} & $20$ & $30$ &  $0$ & \{$0.051\sqrt{m}$ ,$0.011\sqrt{m}$\} \\

    {\underline{High Dynamic Range}} & $20$ & $40$ &  $0$  & \{$0.051\sqrt{m}$ ,$0.011\sqrt{m}$\} \\

    \bottomrule

    \end{tabular}}
    \vspace{-0.2cm}
    \caption{\small{Hyper-parameters of SITCOM for every image restoration tasks considered in this paper. The same set of hyper-parameters is used for FFHQ and ImageNet. The learning rate in Algorithm~\ref{alg: SITCOM} is set to $\gamma = 0.01$ for all tasks, datasets, and measurement noise levels. For the stopping criterion  column, $m_{\textrm{SR}} = 64\times64\times 3$, $m = 256\times256\times3$, and $m_{\textrm{PR}} = 384\times384\times3$}.}
    \label{tab: hyper-parameters}
\end{table*}
%









\section{Additional Qualitative results}\label{sec: appen visuals}
Figure~\ref{fig: visuals in intro} (resp. \ref{fig: visuals imagenet all}) presents results with SITCOM, DPS, and DAPS using FFHQ (resp. ImageNet). See also Figure~\ref{fig: MORE SR BIP}, Figure~\ref{fig: MORE MDB GDB}, Figure~\ref{fig: MORE RIP NDB}, and Figure~\ref{fig: MORE PR HDR} for more images. 
%
%
\begin{figure*}[t]
\centering
\includegraphics[width=14cm]{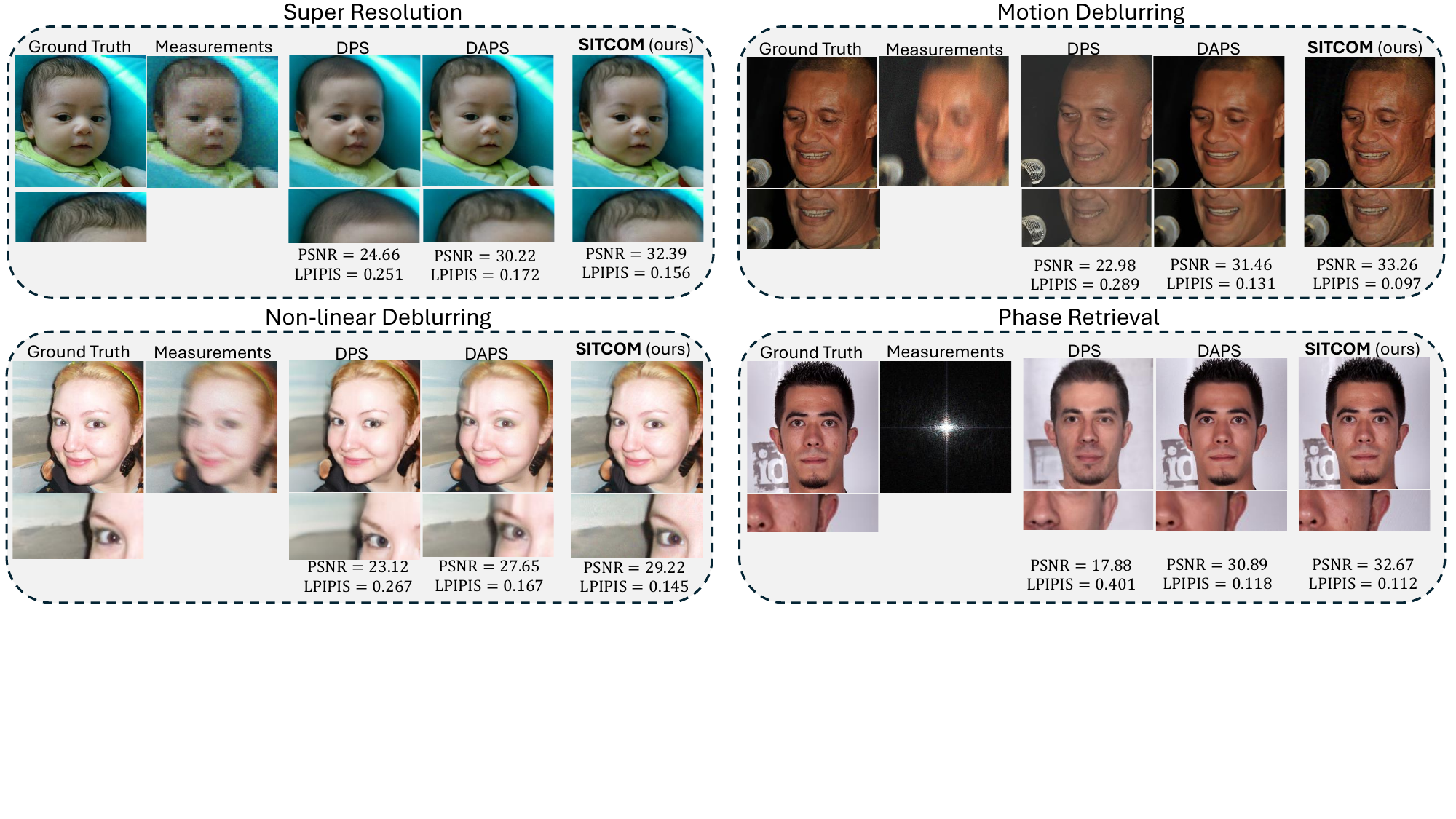}
\vspace{-2cm}
\caption{\small{\textbf{Qualitative results on the FFHQ dataset} on two linear tasks (\textit{top}) and two non-linear tasks (\textit{bottom}) under measurement noise of $\sigma_\mathbf{y} = 0.05$. The PSNR and LPIPS values are given below each restored image. Zoomed-in regions show how SITCOM captures greater image details when compared to two general (non)linear DM-based methods (DPS \citep{chungdiffusion} and DAPS \citep{zhang2024improving}).}} 
\label{fig: visuals in intro}
\vspace{-0.2cm}
\end{figure*}
\begin{figure*}[htp!]
\centering
\includegraphics[width=14cm]{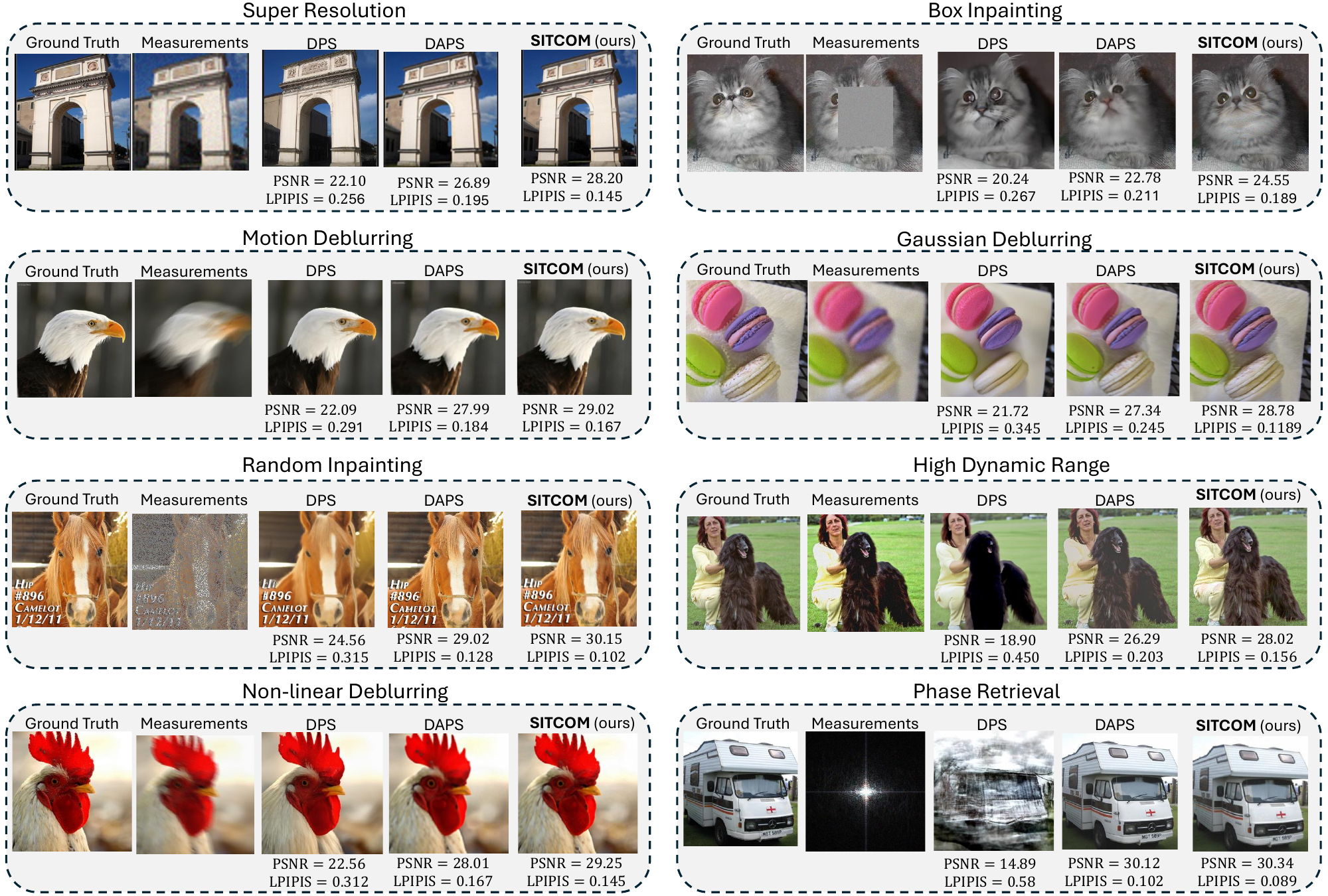}

\caption{\small{\textbf{Qualitative results on the ImageNet dataset} for five linear tasks and three non-linear tasks under measurement noise of $\sigma_\mathbf{y} = 0.05$. The PSNR and LPIPS values are given below each restored image.}} 
\label{fig: visuals imagenet all}
\vspace{-0.3cm}
\end{figure*}

\begin{figure*}[htp!]
\centering
\includegraphics[width=14cm]{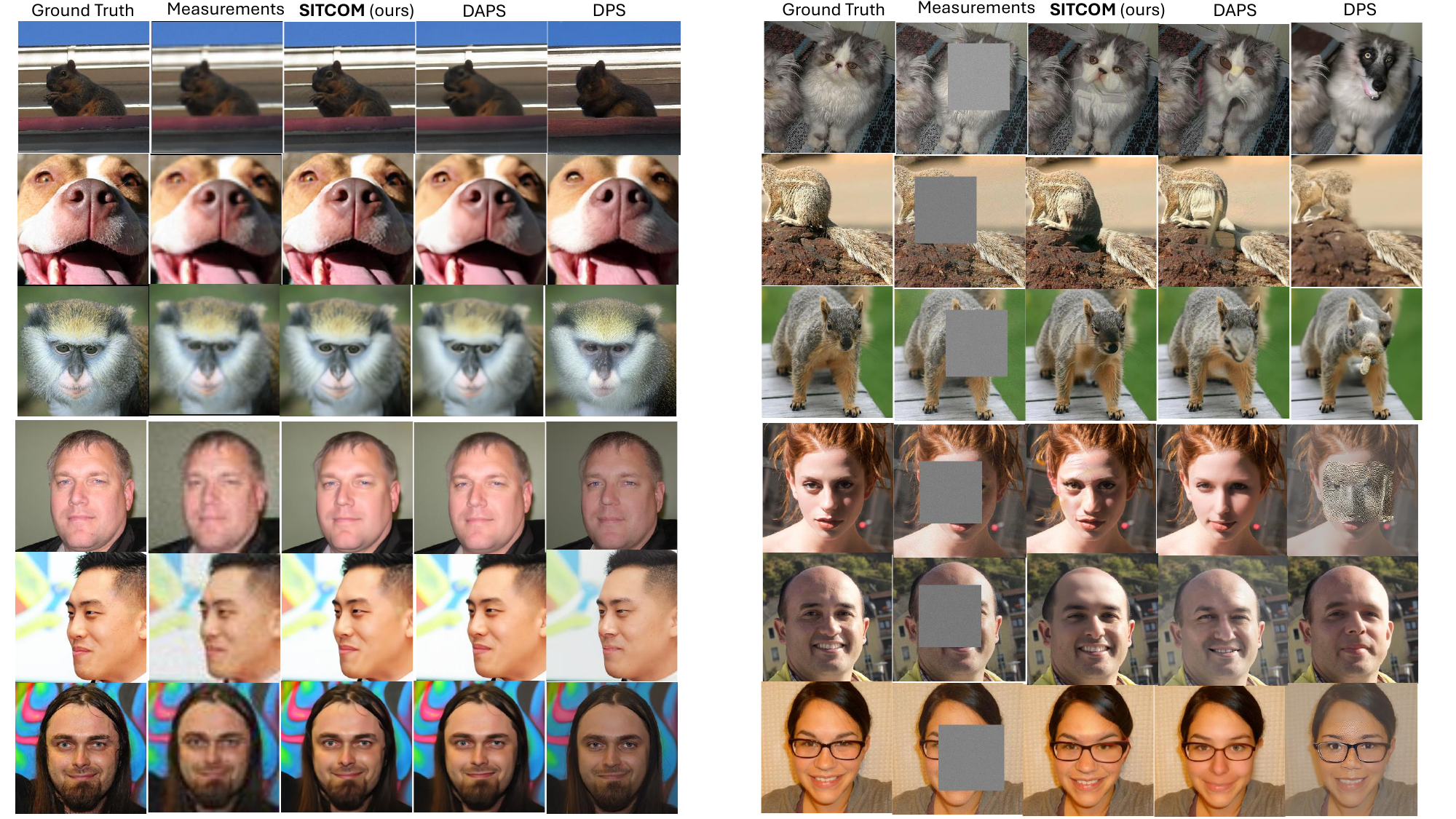}

\caption{\small{\textbf{Super resolution} (\textit{left}) and \textbf{box inpainting} (\textit{right}) results. First (resp. last) three rows are for the FFHQ (resp. ImageNet) dataset.}} 
\label{fig: MORE SR BIP}
\vspace{-0.3cm}
\end{figure*}

\begin{figure*}[htp!]
\centering
\includegraphics[width=14cm]{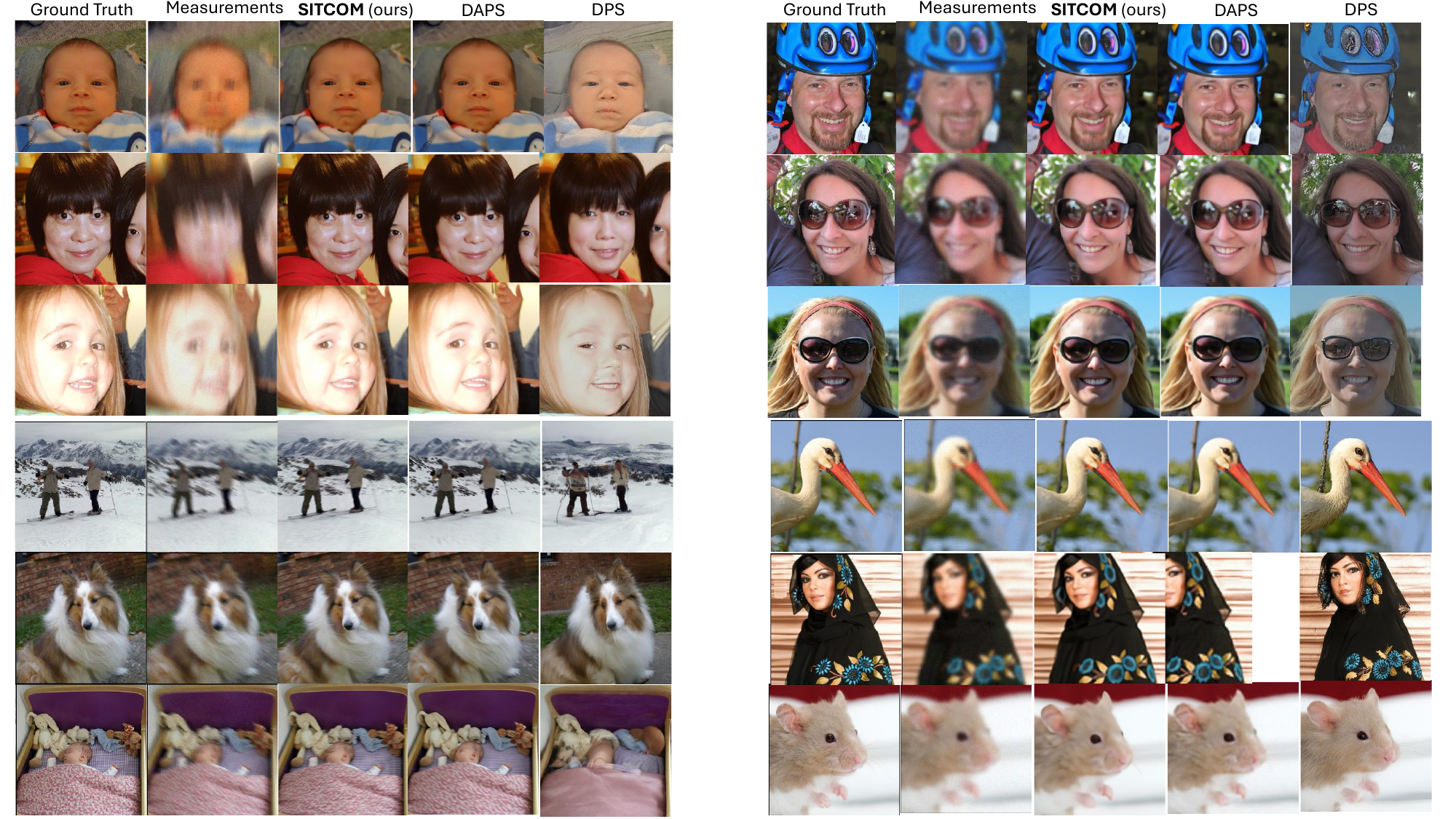}

\caption{\small{\textbf{Motion deblurring} (\textit{left}) and \textbf{Gaussian deblurring} (\textit{right}) results. First (resp. last) three rows are for the FFHQ (resp. ImageNet) dataset.}} 
\label{fig: MORE MDB GDB}
\vspace{-0.3cm}
\end{figure*}
\begin{figure*}[htp!]
\centering
\includegraphics[width=14cm]{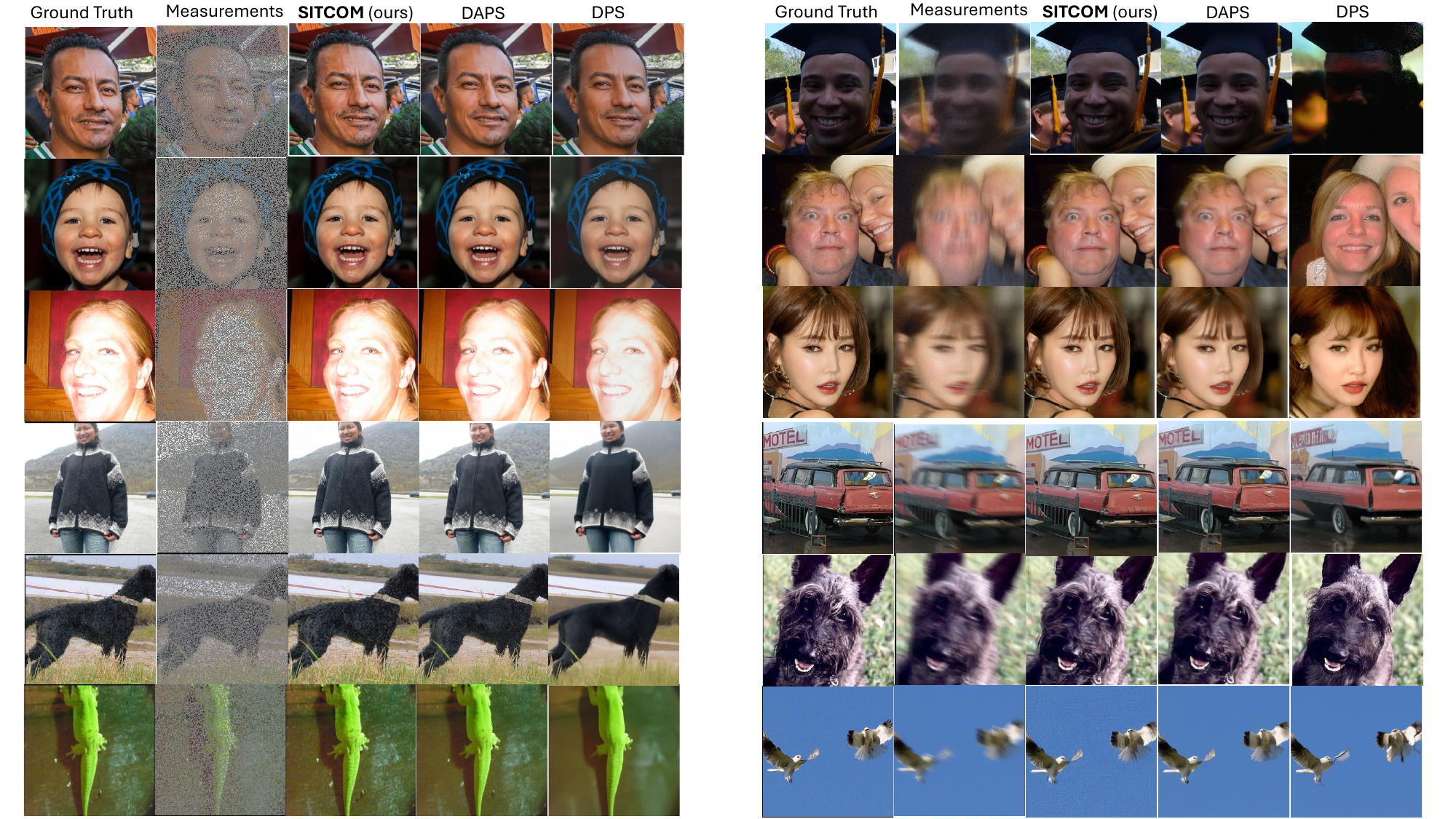}

\caption{\small{\textbf{Random inpainting} (\textit{left}) and \textbf{non-linear (non-uniform) deblurring} (\textit{right}) results. First (resp. last) three rows are for the FFHQ (resp. ImageNet) dataset.}} 
\label{fig: MORE RIP NDB}
\vspace{-0.3cm}
\end{figure*}
\begin{figure*}[htp!]
\centering
\includegraphics[width=14cm]{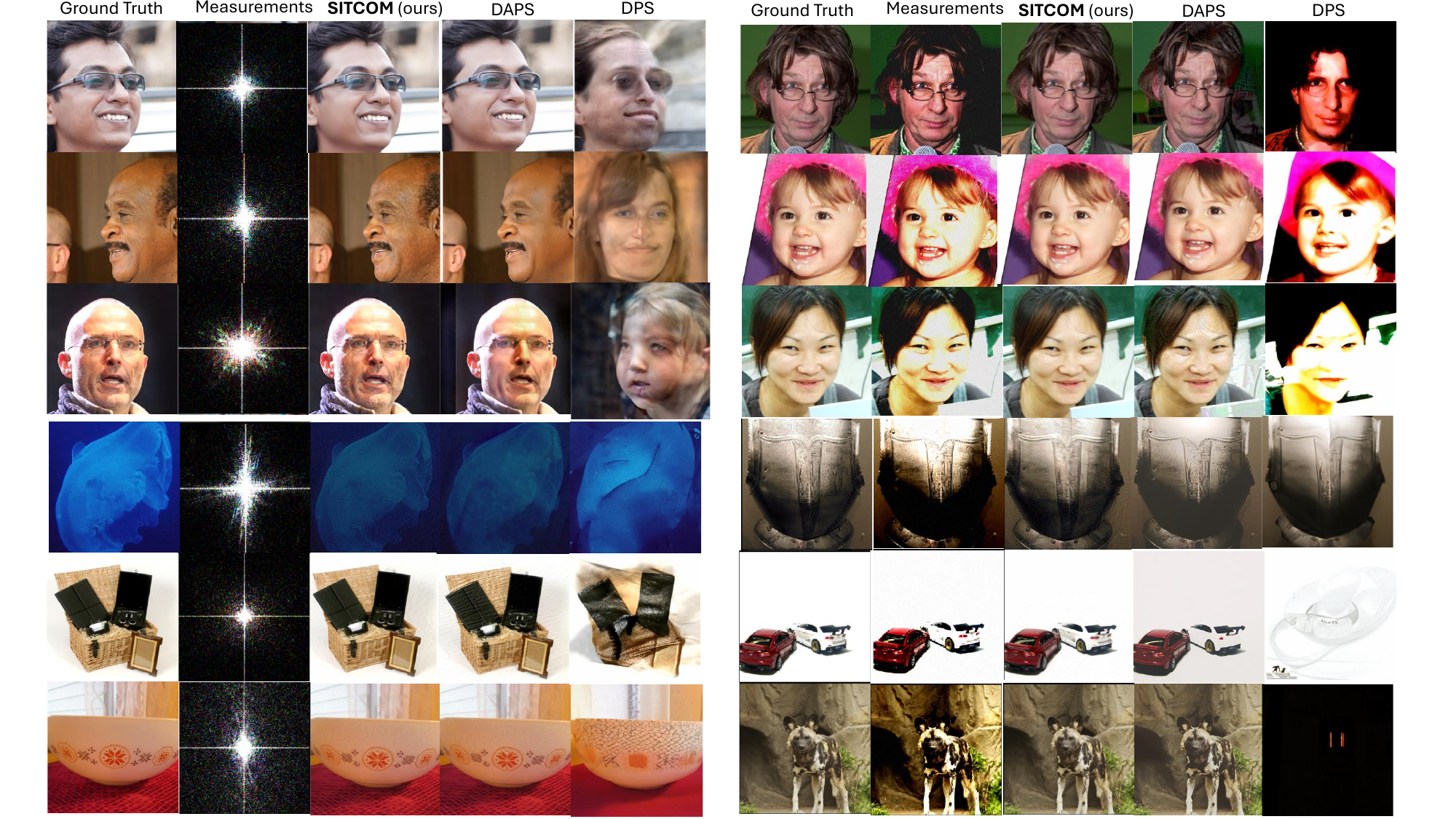}
\caption{\small{\textbf{Phase retrieval} (\textit{left}) and \textbf{high dynamic range} (\textit{right}) results. First (resp. last) three rows are for the FFHQ (resp. ImageNet) dataset.}} 
\label{fig: MORE PR HDR}
\vspace{-0.3cm}
\end{figure*}

\end{document}